\documentclass[11pt]{article}
\usepackage[margin=1in]{geometry}

\usepackage{graphicx}
\usepackage{tikz}
\usepackage{circuitikz}
\usetikzlibrary{arrows.meta, positioning, quotes}
\usepackage[utf8]{inputenc}
\usepackage{amsmath}
\usepackage{amsfonts}
\usepackage{amsthm}
\usepackage{caption}
\usepackage{graphicx}
\usepackage{calrsfs}
\usepackage{dirtytalk}
\usepackage{xcolor}
\DeclareMathAlphabet{\pazocal}{OMS}{zplm}{m}{n}
\DeclareMathAlphabet{\mathcal}{OMS}{cmsy}{m}{n}
\usepackage{enumerate}   
\usepackage{bbm}
\usepackage[hyperfootnotes=false]{hyperref}
\usepackage{cleveref} 
\usepackage{natbib}
\usepackage{comment}
\usepackage{algorithm}
\usepackage{algpseudocode}

\theoremstyle{definition}
\theoremstyle{plain}
\newtheorem{definition}{Definition}
\newtheorem{theorem}{Theorem}[section]

\newtheorem{lemma}[theorem]{Lemma}
\newtheorem{fact}[theorem]{Fact}
\newtheorem{corollary}[theorem]{Corollary}

\AddToHook{env/proposition/begin}{\crefalias{theorem}{proposition}}
\AddToHook{env/claim/begin}{\crefalias{theorem}{claim}}
\AddToHook{env/lemma/begin}{\crefalias{theorem}{lemma}}
\AddToHook{env/fact/begin}{\crefalias{theorem}{fact}}
\AddToHook{env/corollary/begin}{\crefalias{theorem}{corollary}}
\AddToHook{cmd/appendix/before}{\crefalias{section}{appendix}}
\makeatother

\newcommand{\inst}{\mathbbm{I}}
\newcommand{\feas}{\mathcal{F}}
\newcommand{\mc}{\mathcal{M}}
\newcommand{\comgap}{\mathrm{ComGap}}
\newcommand{\frucorgap}{\mathrm{FruCorGap}}
\newcommand{\corgap}{\mathrm{CorGap}}
\newcommand{\frugap}{\mathrm{FruGap}}
\newcommand{\dist}{\mathcal{D}}

\newcommand{\alg}{\mathrm{ALG}}
\newcommand{\polytope}{\mathcal{P}}
\newcommand{\util}{\mathrm{Util}}

\newcommand{\val}{\mathrm{Val}}

\newcommand{\R}{\mathbb{R}}

\newcommand{\argmax}{\operatorname{argmax}}
\newcommand{\eps}{\epsilon}
\newcommand{\D}{\mathcal{D}}

\newcommand{\exanteopt}{\mathrm{ExAnteOpt}}

\newcommand{\opt}{\mathrm{Opt}}

\newcommand{\comms}{\mathcal{P}}
\newcommand{\comset}{\mathcal{C}}

\newcommand{\exante}{\mathrm{ExAnte}}
\newcommand{\expost}{\mathrm{ExPost}}

\newcommand{\freeorder}{\mathrm{FreeOrder}}
\newcommand{\semi}{\mathrm{OSSO}}
\newcommand{\trajset}{\Gamma}
\newcommand{\traj}{\gamma}

\newcommand{\mset}{\mathbb{M}}

\newcommand{\pol}{\mathrm{A}}
\newcommand{\com}{\pi}
\newcommand{\so}{\mathrm{B}}

\newcommand{\expect}[2]{\operatorname{E}_{#1}\left[#2\right]}
\newcommand{\expectt}[1]{\operatorname{E}\left[#1\right]}
\newcommand{\prob}[2]{\operatorname{Pr}_{#1}\left[#2\right]}
\newcommand{\probb}[1]{\operatorname{Pr}\left[#1\right]}

\newcommand{\disti}{D}

\renewcommand{\algorithmiccomment}[1]{{\small\color{gray}\bgroup\hfill$\vartriangleright$
#1\egroup}}

\title{Commitment Gap via Correlation Gap\footnote{This work was funded in part by NSF award CCF-2217069.}}
\author{
Shuchi Chawla\footnote{University of Texas at Austin} \\ {\tt shuchi@cs.utexas.edu} 
\and 
Dimitris Christou$^\dagger$  \\ {\tt christou@cs.utexas.edu} 
\and
Trung Dang$^\dagger$ \\ {\tt dddtrung@cs.utexas.edu}
}
\date{}

\begin{document}

\maketitle
\vspace{-15pt}
\begin{abstract}

Selection problems with costly information, dating back to Weitzman's Pandora's Box problem, have received much attention recently. We study the general model of Costly Information Combinatorial Selection (CICS) that was recently introduced by~\cite{CCHS24} and~\cite{BLW25}. In this problem, a decision maker needs to select a feasible subset of stochastic variables, and can only learn information about their values through a series of costly steps, modeled by a Markov decision process. The algorithmic objective is to maximize the total value of the selection \textit{minus} the cost of information acquisition. However, determining the optimal algorithm is known to be a computationally challenging problem.

To address this challenge, previous approaches have turned to approximation algorithms by considering a restricted class of \textit{committing policies} that simplify the decision-making aspects of the problem and allow for efficient optimization. This motivates the question of bounding the \textit{commitment gap}, measuring the worst case ratio in the performance of the optimal committing policy and the overall optimal. In this work, we obtain improved bounds on the commitment gap of CICS through a reduction to a simpler problem of Bayesian Combinatorial Selection where information is free. By establishing a close relationship between these problems, we are able to relate the commitment gap of CICS to ex ante free-order prophet inequalities. As a consequence, we obtain improved efficient approximations  for arbitrary instances of the CICS under various feasibility constraints.


\end{abstract}

\thispagestyle{empty}
\addtocounter{page}{-1}

\newpage

\section{Introduction}


Recently, there has been a surge of research on stochastic online decision-making models. These models vary in how algorithms access information and in the constraints imposed by the online nature of the process. The subtle distinctions between models often call for algorithmic techniques tailored to their specific characteristics. In this paper, we reveal that several models of online selection are intimately related. We uncover tight connections that allow techniques developed in one setting to be transferred to another, leading to improved bounds on their competitive ratios.

Our primary focus is the Costly Information Combinatorial Selection (CICS) problem, introduced by \cite{CCHS24} and \cite{BLW25}. In CICS, the algorithm is given a ground set of $n$ elements and seeks to select a feasible subset of maximum total value.\footnote{We focus primarily on the maximization version of the problem, although some of our results apply to the minimization version as well. See~\Cref{app:minimization} for a discussion of the minimization setting.} While the values of the elements are drawn independently from known distributions, their realizations are hidden and can only be uncovered through a sequence of costly information-gathering actions. The objective is to maximize the value of the selected subset minus the total cost of acquiring information.

A motivating example is a hiring process in which each candidate has a stochastic value $v_i$. Assessments---resume reviews, interviews, tests---provide increasingly refined information about the values but consume resources. The company aims to maximize the value of the recruited candidates minus the total cost of gathering information along the way. For instance, it may screen resumes broadly and interview only a small subset. Feasibility constraints model hiring limitations: selecting up to $k$ candidates yields a uniform matroid, while recruiting for multiple projects with candidate–project compatibility corresponds to bipartite matching.



An extensively studied special case of CICS is the {\bf Pandora’s Box} problem of \cite{W79}, in which each element is hidden in a box and the goal is to select a single item. Opening a box reveals its value at a fixed cost. This setting admits a simple optimal policy: boxes should be opened in a fixed order until a stopping condition is met, at which point the best observed item is selected. However, more general variants of CICS---where learning an element's value follows a Markov decision process (MDP)---are significantly more complex. The algorithm must determine, based on the evolving joint state of the $n$ MDPs, whether and how to probe an element (e.g., review a resume or conduct an interview), or to select (hire) one. Because the joint state space is exponentially large, optimal algorithms do not necessarily admit succinct descriptions and are hard to compute even for special cases~\citep{FLL22}. As a result, prior work has focused on obtaining approximately optimal solutions using simpler algorithmic frameworks.

One such framework is that of {\bf committing algorithms}, which fix all ``local'' probing actions in advance, before acquiring any information. Specifically, for each element 
$i$ and each possible state in its MDP, a committing algorithm specifies the next action to take if the algorithm chooses to probe that element. This commitment simplifies the global decision-making process: once local policies are fixed, the optimal interleaving of these decisions can be efficiently computed~\citep{GJSS19, CCHS24}. A central question is the resulting loss in performance--a quantity known as the {\bf commitment gap}.

In this paper, {\bf we develop a new approach for bounding the commitment gap in CICS by relating it to the well-studied notion of the correlation gap}. Through this connection, we show that for a broad class of feasibility constraints--those that admit good greedy algorithms--the commitment gap is provably small. For example, we show that 
committing policies achieve an approximation factor of at least $1-\frac{1}{e}$ over arbitrary instances of the CICS with a matroid feasibility constraint. See~\Cref{table:res} for a full summary of our results. All of our approximations can be achieved efficiently.


\subsubsection*{Bayesian Combinatorial Selection}
Our approach relies on a reduction from CICS to a Bayesian Combinatorial Selection (BCS) problem. In BCS, we are given a ground set of $n$ stochastic elements $X_1, \cdots, X_n$, drawn from a known joint product distribution $\dist$, and a downwards closed feasibility constraint $\feas\subseteq 2^{[n]}$. The objective is to select a feasible subset of elements with maximum total value. Unlike CICS, there is no cost associated with probing elements; however, the algorithm must adhere to certain structural rules that govern when elements can be probed and selected. The BCS framework thus provides a flexible and unified foundation for stochastic selection, into which various online models can be embedded by enforcing different constraints on the probing and selection process.

As an illustrative example, consider a {\em Fixed-Order Online Selection} algorithm. Here, elements are presented sequentially in an adversarially chosen order. Upon observing the value of each element, the algorithm must make an irrevocable decision to accept or reject it. The performance gap between such an algorithm and the offline optimum---defined as  $\expect{x\sim \dist}{\max_{S\in\feas} \sum_{i\in S} x_i}$---is captured by the classical notion of a {\em Prophet Inequality}. We show that the commitment gap in CICS is closely connected to the performance of a more powerful class of algorithms within the BCS framework, which we introduce and call {\em One-Sided Semi-Online Selection} algorithms.

A one-sided semi-online selection algorithm constructs its solution by picking an element and a threshold at every step and adding the element to the solution if and only if its value is above the threshold, while maintaining feasibility. Importantly, it can visit elements repeatedly in any order, but does not learn the value of the chosen element---only whether it lies above or below the threshold. We show that one-sided semi-online selection algorithms for a suitably defined BCS instance capture exactly the performance of committing policies for CICS. This allows us to focus our attention to the properties of this class of algorithms. 



\subsubsection*{Surrogate Values and the Ex-Ante Relaxation}

The connection between CICS and BCS arises through the notion of surrogate values that were first defined by \cite{KWG16} and \cite{S17} for the Pandora's Box problem and extended to general CICS by \cite{GJSS19} and \cite{CCHS24}. Surrogate values provide a way to amortize the cost of the probing actions taken by the algorithm to future steps where the algorithm makes a selection, thereby allowing the algorithm to essentially explore for free. 

Consider the special case of CICS where each MDP is a Markov chain---that is, in any state of the process and for any element, there is only one way to probe the element. \cite{GJSS19} showed that the optimal value of an instance of CICS over Markov chains (henceforth, MC-CICS) is bounded by the ex post optimum for the BCS over the surrogate values. On the other hand, we show that any one-sided semi-online selection algorithm for BCS over the surrogate values can be converted efficiently into an algorithm for MC-CICS obtaining the same total value. A small separation between one-sided semi-online algorithms and the ex post optimum for BCS therefore implies a 
good approximation for the MC-CICS. The converse also holds: an $\alpha$-approximation for MC-CICS obtained via a certain family of algorithms  (namely, those satisfying ``promise of payment'' as defined in~\Cref{sec:cics}) can be converted into an $\alpha$-approximate one-sided semi-online algorithm for BCS.

This approach does not immediately extend to general MDPs: different commitments in an MDP correspond to different distributions over surrogate values. We instead consider the {\em ex ante relaxation} of CICS, first proposed by \cite{BLW25}. The ex ante relaxation maximizes the expected value of the selected subset subject to the feasibility constraint {\em in expectation} over the realization of values. 
A solution to the ex ante relaxation specifies for each element the probability with which it is selected; and maximizes the value obtained from the element conditioned on this selection probability. Once the selection probabilities are determined, the per-element maximization corresponds to a specific committing policy for the element's MDP independent of other elements. The value of the ex ante benchmark can then be related to an instance of BCS over surrogate values corresponding to these commitments. These connections allow us to relate the commitment gap of the CICS to the gap between one-sided semi-online algorithms and ex ante relaxations for BCS. 

\subsubsection*{Commitment Gap, Correlation Gap, and Frugality}

To bound the gap between the ex ante optimum and one-sided semi-online algorithms for BCS, it is useful to relate each of these quantities to the ex post optimum. \cite{Y11} showed that the gap between the ex post and the ex ante optima is exactly equal to the Correlation Gap of $\feas$ (defined in~\Cref{sec:bcs}), an extensively studied quantity in stochastic optimization. Complementing this, we build on an approach from \cite{S17} to show that the gap between the ex post objective and one-sided semi-online algorithms is bounded by the approximation factor achieved by greedy-style algorithms for maximizing value subject to $\feas$ for {\em deterministic} instances. We refer to this latter gap as the {\em frugality} of $\feas$. Together these observations imply that the gap between one-sided semi-online algorithms and the ex ante optimum---which we call the {\em frugal correlation gap} and which bounds the commitment gap for CICS over $\feas$---is bounded by
the correlation gap of $\feas$ times its frugality. This bound, however, can be quite loose. 


To sharpen it, we further analyze the structure of one-sided semi-online algorithms and show that their performance relative to the ex ante benchmark matches that of the more restrictive class of {\em Free-Order Online Selection} algorithms. A free-order online algorithm behaves like a one-sided semi-online algorithm but may consider each element for selection at most once. Notably, free-order online algorithms for BCS are precisely the algorithms employed in a Free-Order Prophet Inequality. We prove, in effect, that the frugal correlation gap of any feasibility constraint $\feas$ is exactly equal to the best competitive ratio achievable for the corresponding ex ante free-order prophet inequality.

Finally, we note that the correlation gap of a feasibility constraint is exactly equal to the approximation factor achieved by Contention Resolution Schemes for the same constraint~\citep{CVZ11}; the competitive ratio achieved by Online Contention Resolution Schemes, on the other hand, is equivalent to ex ante prophet inequalities \citep{LS18}. We summarize these different benchmarks, gaps, and relationships discussed above in~\Cref{fig:gaps}. These connections allow us to use existing results for the correlation gap and free-order prophet inequalities to obtain bounds on the commitment gap for CICS; we state these results in~\Cref{table:res} and discuss them in more detail later; all gaps and approximation factors are given as ratios $\le 1$.

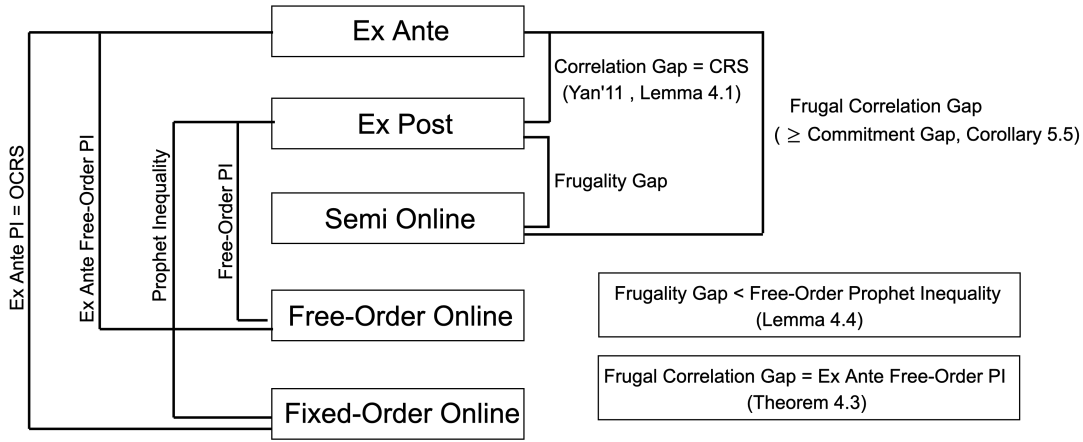
\begin{figure}[!ht]
\centering
\resizebox{\textwidth}{!}{%
\begin{circuitikz}
\draw  (6,13) rectangle (6,13);
\draw  (3.5,12.25) rectangle  node {Ex Ante} (7.75,11.25);
\draw  (3.5,10.5) rectangle  node {Ex Post} (7.75,9.5);
\draw  (3.5,8.75) rectangle  node {One-Sided Semi-Online} (7.75,7.75);
\draw  (3.5,7) rectangle  node {Free-Order Online} (7.75,6);
\draw  (3.5,5.25) rectangle  node {Fixed-Order Online} (7.75,4.25);
\draw (3.5,11.75) to[short] (0.5,11.75);
\draw (0.5,11.75) to[short] (0.5,4.5);
\draw (0.5,4.5) to[short] (3.5,4.5);
\draw (1.25,11.75) to[short] (1.25,6.25);
\draw (1.25,6.25) to[short] (3.5,6.25);
\draw (3.5,10) to[short] (2,10);
\draw (2,10) to[short] (2,5);
\draw (2,5) to[short] (3.5,5);
\draw (2.75,10) to[short] (2.75,6.75);
\draw (2.75,6.75) to[short] (3.5,6.75);
\node [font=\normalsize, rotate around={90:(0,0)}] at (2.5,8.25) {Free-Order PI};
\node [font=\normalsize, rotate around={90:(0,0)}] at (1.75,8.25) {Prophet Inequality};
\node [font=\normalsize, rotate around={90:(0,0)}] at (1,8.25) {Ex Ante Free-Order PI};
\node [font=\normalsize, rotate around={90:(0,0)}] at (0.25,8.25) {Ex Ante PI = OCRS};
\draw (7.75,11.75) to[short] (14,11.75);
\draw (14,11.75) to[short] (14,8.25);
\draw (14,8.25) to[short] (7.75,8.25);
\draw (8.5,11.75) to[short] (8.5,10.25);
\draw (8.5,10.25) to[short] (7.75,10.25);
\draw (7.75,9.75) to[short] (8.5,9.75);
\draw (8.5,9.75) to[short] (8.5,8.25);
\node [font=\normalsize,align=left] at (11,11) {Correlation Gap \\ = CRS \\ \quad(\cite{Y11}, \Cref{lemma:expost-vs-exante})};
\node [font=\normalsize,align=left] at (10.1,9) {Frugality \\$>$ Free-Order PI \\
\quad(\Cref{lemma:frugal-vs-semionline})};
\node [font=\normalsize,align=left] at (16.35,10) {Frugal Correlation Gap \\ $=$ Ex Ante Free-Order PI \\ \quad (\Cref{lemma:semi-online-and-eafopi}) \\
$\le$ Commitment Gap \\ \quad (\Cref{cor:eafopi-bounds-comgap})};
\end{circuitikz}
}%
\caption{Connections between different benchmarks for Bayesian Combinatorial Selection.}
\label{fig:gaps}
\end{figure}


    
\begin{table}[ht!]
\centering
\resizebox{\textwidth}{!}{%
\renewcommand{\arraystretch}{1.2}
\begin{tabular}{|c|c|c|c|c|}
\hline
Feasibility     & Frugality & Correlation gap & Commitment gap & Ex ante prophet inequality\\
constraint & ~ & ($=$ CRS) & ($\ge$ Frugal correlation gap) & ($=$ OCRS) \\ 
~ & ~ & ~ & ($=$ ex ante free-order PI) & ~ \\ \hline
Single Selection & $1$           & $1-\frac{1}{e}$       & $1-\frac{1}{e}$    & $\frac{1}{2}$               \\ \hline
Matroid & $1$           & $1-\frac{1}{e}$       & $1-\frac{1}{e}$          & $\frac{1}{2}$         \\ \hline
Knapsack                  & $\frac{1}{2}$           & $0.319$             & $0.319$    & $0.319$                \\ \hline
$k$-system               & $\frac{1}{k}$           & $\frac{1}{k+1}$           & $\frac{1}{k+1}$     & $\frac{1}{e(k+1)}$             \\ \hline
Bipartite Matching               & $0.659$           & $0.509$         &  $0.478$      & $0.349$             \\ \hline
General Matching               & $0.535$           & $0.474$           & $0.474$    & $0.344$             \\ \hline
\end{tabular}
\renewcommand{\arraystretch}{1}
}%
\caption{Known results on different gaps for various families of constraints.}
\label{table:res}
\end{table}

\subsubsection*{CICS versus Prophet Inequalities}

It is worth noting the differences between prophet inequalities and cost of information models like the CICS. On the one hand, in prophet inequalities elements can be queried for free, whereas in CICS we need to pay to query an element. On the other hand, in a prophet inequality the algorithm must make select/reject decisions for each element right after it is queried, whereas in CICS the algorithm is allowed to go back to a previously queried element to explore it further or select it. \cite{S17}'s work argued that the first difference is not too important; MC-CICS can be reduced to a ``free information'' setting essentially without loss. Our work argues that the second difference (online versus one-sided semi-online selection) is not too important when we compare against the ex ante relaxation; in the worst case the two are exactly the same.

\subsubsection*{Connections to the Query-Commit Model}

One-sided semi-online algorithms are closely related to the {\em query-commit} model that has been studied extensively in the context of bipartite and general matchings. We make this connection explicit and use it to obtain improved bounds for these feasibility constraints. 
The query-commit model was 
first introduced by~\cite{chen2009approximating}. Here, a stochastic matching algorithm is given a graph $G = (V, E)$, where every edge $e \in E$ has a known weight $w_e \ge 0$ and exists independently with known probability $p_e$. During execution, the algorithm may (adaptively) query any edge to determine whether it exists. If the edge exists, the algorithm must commit to including this edge in its output set. At the end, the algorithm must output a valid matching of the graph $G$.

A query-commit algorithm bears many similarities with our one-sided semi-online algorithm. In the context of the Bayesian Combinatorial Selection problem, a one-sided semi-online algorithm on an instance where every distribution is Bernoulli behaves exactly as that of a query-commit algorithm, and therefore query-commit algorithms inherit the best competitive ratio for one-sided semi-online algorithms. In~\Cref{sec:query-commit-semi-online}, we essentially showed the converse statement: any $\alpha$-competitive query-commit algorithm can be used in a black-box manner to design an $(\alpha-\eps)$-competitive one-sided semi-online algorithm. Such a reduction has been shown for specific query-commit algorithms before~\citep{gamlath2019beating,derakhshan2023beating,fu2021random}; however, to the best of our knowledge, we are the first to generalize such a result via a black-box reduction to any query-commit algorithm. In turn, this gives us improved results for the frugality of bipartite matching and general matching constraints.

\subsubsection*{Computational Considerations}

We conclude by summarizing how our approach yields efficient approximation algorithms for all CICS instances whose commitment gaps are bounded in~\Cref{table:res}. Given a CICS instance, the first step is access to the ex ante optimal commitment; in~\Cref{sec:cics-exante}, we show that this solution can be computed exactly and efficiently, yielding a committing policy. We then compute the surrogate value for each Markov chain that we obtain under this commitments, using efficient algorithms from~\citep{DTW03, CCHS24}. Finally, we feed these surrogate values into a one-sided semi-online algorithm, as described in~\Cref{sec:bcs-to-cics}, to obtain the final policy. Thus, whenever the one-sided semi-online algorithm itself is efficiently implementable---which holds for all constraints listed in~\Cref{table:res}---our bounds on the commitment gap translate directly into efficiently computable committing policies.

\subsection{References for~\Cref{table:res}}
For single selection, matroid, knapsack, and $k$-system constraints, their frugalities are due to~\cite{S17}, who introduced the class of frugal algorithms for deterministic selection problems. The correlation gap of $1 - \frac{1}{e}$ for matroid constraints was first established by~\cite{CCPV07,ADSY12}; this immediately implies the same bound for the frugal correlation gap and ex ante free-order prophet inequalities since the frugality is $1$. The first ex ante free-order prophet inequality for $k$-systems was given by~\cite{CHMS10} for the special case of intersections of $k$-matroids, and later generalized by~\cite{Y11} for arbitrary $k$-systems. Both a $\frac{1}{2}$-OCRS (or equivalently ex ante prophet inequality) for matroid constraints and a $\frac{1}{e(k+1)}$-OCRS for $k$-systems can be found in~\cite{FSZ15}. All the other bounds for knapsack are due to~\cite{JMZ21} who define an OCRS for a more challenging version where the sizes of the elements are also stochastic; this immediately implies the same bound both on the frugality and the frugal correlation gap.

For the case of the matching constraints,~\cite{nuti2025towards} gives a $0.509$-selectable CRS for bipartite matching, implying a bound on the correlation gap.~\cite{macrury2025random} show that there exists a $0.478$-selectable and $0.474$-selectable random-order contention resolution scheme (RCRS) -- and therefore an ex ante free-order prophet inequality -- for bipartite matching and general matching respectively, implying a lower bound for the commitment gaps of these constraints. The $0.474$-selectable RCRS also provides the best known bound on the correlation gap for general matching. In the same paper,~\cite{macrury2025random} also show that the OCRS of~\cite{ezra2022prophet} is $0.349$-selectable and $0.344$-selectable for bipartite matching and general matching, respectively. Finally, the bounds on the frugality are obtained via our reduction from the one-sided semi-online model to the query-commit model. For bipartite matchings, a $0.659$-competitive query-commit algorithm has been discovered~\citep{huang2025edge}. Meanwhile, in the general graph case,~\cite{macrury2024random} provide a $0.535$-selectable \emph{vertex-arrival} 
RCRS for general matching. Combined with a reduction from query-commit to vertex-arrival RCRS~\citep{fu2021random}, this implies a $0.535$-competitive query-commit algorithm for general matching.


\subsection{Related Work} 

\subsubsection*{Combinatorial Selection with Costly Information}

Our approach is closely related and inspired by the recent contributions of~\citep{BLW25}, where an efficient and ``universal'' $\frac{1}{2}$-approximation for CICS under matroid constraints is developed. This work introduces the idea of efficiently solving for the ex ante optimal policy of the CICS instance and then ``rounding'' it into an ex post feasible solution. The rounding is performed by exploiting an ex ante prophet inequality for the constraint, transferring the approximation factor. Importantly, the produced policy is not necessarily a committing one, and a factor of $\frac{1}{2}$ is necessarily lost in approximation by requiring an ex ante prophet inequality (even for single selection). In contrast, our approach provides a direct bound on the commitment gap and exploits an algorithm with a small frugal correlation gap, allowing us to obtain an improved $1-\frac{1}{e}$ approximation for CICS under matroid constraints. Furthermore, we improve the computational results of~\cite{BLW25} by showing that we can efficiently compute the ex ante optimal policy \textit{exactly} (compared to the FPTAS they design) and for \textit{any} MDP (compared to the special case of DAG-MDPs they consider).

Prior to the work of~\cite{BLW25}, most known results for CICS applied to specific settings that are mostly generalizations of the Pandora's box problem, see for example \citep{GMS08, KWG16, S17, D18, BK19, EHLM19, GJSS19, BFLL20, AJS20, FLL22, BC22, BEFF23, BW24, DS24,CCHS24}. For more details on these variants, we defer the reader to the recent survey of~\citet{BC24-survey}.

The study of the commitment gap for CICS was initiated by~\cite{CCHS24}. By extending the definition of surrogate values to MDPs, they are able to obtain ``local'' approximation conditions that characterize the quality of each commitment and compose into upper bounds for the commitment gap. Importantly, their conditions are instance-dependent and rely on both the structure of the underlying MPDs and the feasibility constraint; in contrast, our approach provides more universal bounds that only depend on the combinatorial constraint. By instantiating their framework with different variants of CICS, they are able to obtain novel bounds on the commitment gap. While our bounds for the maximization setting outperform their applications, they also obtain several bounds on the commitment gap for the minimization version of CICS, whereas our framework cannot produce any meaningful results in the minimization setting (\Cref{app:minimization}).

Finally, of particular interest is the most well-studied variant of max-CICS, namely the Pandora's Box problem with Optional Inspection. In this setting, the decision-maker is allowed to select a closed box, without paying to open it. Under single-selection constraints, the problem is NP-Hard but a PTAS is known to exist \citep{FLL22,BC22}. Under (general) matroid constraints, \cite{BK19} show that the commitment gap is upper bounded by $1-\frac{1}{e} \approx 0.632$ via a reduction to submodular maximization. Their approach is based on a reduction to nonadaptive stochastic submodular maximization, maximizing the weighted rank function of the matroid constraint subject to another partition matroid. While achieving exactly a $1 - 1/e$ ratio in the submodular maximization step is nonconstructive, it can be made polynomial-time constructive at the cost of arbitrarily small error \citep{AN16}. Our results show that the lower bound of \cite{BK19} can indeed be matched exactly and efficiently by employing an ex ante relaxation of the problem and rounding via a free-order prophet inequality.

\subsubsection*{Bandit Superprocesses}

CICS is closely related to a class of sequential decision-making problems known as bandit superprocesses (BSPs), originally formulated by \citet{N73}. In a BSP, the decision-maker controls multiple independent MDPs, selecting one to advance at each time step. Upon doing so, they receive a discounted reward, and the selected MDP transitions stochastically. A key distinction between the two frameworks is in their objectives and structure: while a BSP may operate over an infinite horizon, a CICS algorithm must eventually terminate by selecting a feasible subset of alternatives.

When the constituent MDPs in a BSP are simple Markov chains, the problem reduces to the classical and extensively studied Multi-Armed Bandit (MAB) model. A seminal result by \citet{G79} shows that MABs admit a remarkably elegant optimal policy: assign an index to each state of every arm independently of the others, and at each step, choose to advance the arm currently holding the highest index. This powerful principle of indexability has been shown to extend to finite-horizon settings by \citet{DTW03}, and to more complex combinatorial variants by \citet{GJSS19}. On the other hand, BSPs defined over general MDPs are not indexable \citep{G82}. Nevertheless, the local structure of the problem provides insight into the optimal solution: given an MDP $\mc$, \cite{W80}
considers a local problem $(\mc, y)$, where at every step the algorithm can either advance $\mc$ or terminate with at a reward of $y$. The values of these local games, formalized as ``optimality curves'', one for each constituent MDP in the BSP, can be combined to obtain the Whittle integral, a lower bound on the global optimum \citep{W80,BS13}. \cite{CCHS24} provide a new interpretation for Whittle's optimality curves by connecting them with a mapping from trajectories in the MDP to surrogate values.

\subsubsection*{Prophet Inequalities, Correlation Gap and Contention Resolution}

Prophet inequalities originate from classical optimal stopping theory. The foundational results, developed by~\cite{krengel1977semiamarts} and~\cite{samuel1984comparison}, show that in the basic setting, it is possible to achieve at least half of the expected reward obtained by the prophet. The last decade has seen a renewed surge of interest in combinatorial prophet inequalities, which generalize the classical model to more complex feasibility constraints. A seminal result by~\cite{kleinberg2012matroid} introduced a constant-competitive algorithm for selecting a maximum-weight independent set in a matroid, using carefully designed threshold policies. 

Prophet inequalities are also deeply connected to mechanism design, particularly in Bayesian settings. \cite{CHMS10} demonstrated that sequential posted-price mechanisms can approximate optimal revenue in multi-item Bayesian auctions, using prophet-like arguments; this result was later strengthened by~\cite{Y11}. These insights have led to a growing body of work that uses prophet inequalities as a tool for designing truthful and approximately optimal mechanisms under uncertainty.
For more details in recent developments and applications, we defer the reader to the surveys of~\citep{L17,CFHOV19}.


A semi-online model of selection for BCS was previously introduced by~\cite{HS23}. In \citeauthor{HS23}'s \emph{semi-online prophet inequality}, the algorithm proposes a single threshold for every element, queries {\em all} elements to determine whether their value is larger than the corresponding threshold, and at the end selects a feasible subset of elements that maximizes the total conditional expected value. In particular, this class of algorithms differs from ours in two ways: 1) a single threshold is proposed for each variable, contrasting with the possibly multiple thresholds that a one-sided semi-online algorithm may query per element, and 2) selection is carried out after all the queries, unlike ours where an element must be selected immediately if it is above the queried threshold. We use the term {\em one-sided} for our model to emphasize these differences.


The correlation gap, introduced by~\cite{ADSY12}, measures the worst-case ratio between the expectation of a function under independent sampling versus an arbitrary correlated distribution with the same marginals. They showed that for monotone submodular functions, the correlation gap is bounded by $1 - \frac{1}{e}$ and for certain classes of linear packing problems, it is constant. These bounds were later used to obtain improved approximations for submodular maximization. The correlation gap has since become an essential analytical tool in areas involving randomized rounding, submodular maximization, and prophet inequalities.

Contention resolution schemes were introduced by~\cite{CVZ11} as a rounding tool for maximizing submodular functions subject to combinatorial constraints. Their framework provided a general approach to converting fractional solutions—typically obtained via the multilinear relaxation—into integral ones while preserving approximation guarantees. They also show that the optimal contention resolution scheme achieved for a constraint equals its correlation gap. \cite{FSZ15} later apply this definition to online rounding processes by defining online contention resolution schemes (OCRSes); \cite{LS18} show that designing an OCRS is equivalent to obtaining an ex ante free-order prophet inequality and vice-versa.

\subsection{Organization}
We formally define Combinatorial Selection with Costly Information (CICS) and related concepts in~\Cref{sec:cics}. In~\Cref{sec:bcs}, we introduce Bayesian Combinatorial Selection (BCS); define the different benchmarks we consider in this work and the gaps between them; and introduce the family of one-sided semi-online selection algorithms. In~\Cref{sec:bcs-to-cics}, we formally connect CICS to BCS and provide a lower bound for the commitment gap via the frugal correlation gap. In~\Cref{sec:cics-exante}, we discuss how to compute the ex ante optimal policy for CICS. Finally, in~\Cref{sec:query-commit-semi-online}, we discuss the additional connection between our one-sided semi-online algorithms and the query-commit model in the context of matching constraints. Omitted proofs are presented in~\Cref{app:proofs}. 
\section{Notation}

Throughout this paper, we use $[n]:=\{1,2,\cdots , n\}$ to denote a ground set of elements. The goal of the algorithm in each of the settings we consider is to pick a feasible subset of the ground set. We use $\feas\subseteq 2^{[n]}$ to denote the feasibility constraint and assume that it is downwards closed, that is, $A\subset B$ and $B\in\feas$ implies $A\in\feas$. We use $\polytope(\feas)$ to denote the polytope relaxation of $\feas$, i.e. the convex hull of the vectors $\{\mathbbm{1}_S : S\in\feas\}$ where $\mathbbm{1}_S\in\R^n$ denotes the indicator of a set $S\subseteq [n]$. 

The value $X_i$ of each element $i\in [n]$ is stochastic and we use $\disti_i$ to denote its distribution, with $\dist = \disti_1\times\cdots \times \disti_n$ denoting the corresponding product distribution. We use $x_i$ to denote the instantiation of $X_i$. In all of the settings we study, the algorithm chooses a subset $S(x)\subseteq [n]$ for every instantiation $x\sim\dist$ and obtains a total expected value of $\operatorname{E}_{x\sim\dist}[\sum_{i\in S(x)} x_i]$. Different settings, benchmarks, and algorithms differ in the constraints satisfied by the selected set $S(x)$ and the amount of information available for this decision making. Finally, for any distribution $\disti_i$ over the reals and any $q_i\in [0,1]$, we use $F_{\disti_i}(q_i)$ to denote the expected value of a realization $x_i\sim\disti_i$, conditioned on it lying in the top $q_i$ quantiles of $\disti_i$.\footnote{If $\disti_i$ contains a point mass at the $q_i$-th quantile, we assume that this point mass is included in the conditional expectation with the right probability, so that the probability $q_i$ is met exactly.}



\section{Combinatorial Selection with Costly Information}\label{sec:cics}

Combinatorial Selection with Costly Information is defined over $n$ alternatives and a downwards-closed feasibility constraint $\feas\subseteq 2^{[n]}$. The costly information component reflects the fact that the value of each alternative $i\in [n]$ is not known a priori and must be learned through a sequence of costly actions. This learning process is formalized as a Markov decision process (MDP) $\mc_i$. 

An MDP $\mc$ is defined over a set of states $\Omega$; each state $s\in \Omega$ admits a set of actions $a\in A(s)$ and we use $c(a)$ to denote their costs. Taking an action $a\in A(s)$ from a state $s\in \Omega$ results in a random transition to a new state $s'\in \Omega$ according to some known distribution $\Pi(s,a)$. We assume that all MDPs contain a (non-empty) set of terminal states $T\subseteq \Omega$ that are sink states, i.e. no more actions can be taken as soon as these states are reached; each terminal state $t\in T$ has an associated value $v(t)$. Finally, we use $\inst = (\feas,\mset)$ to denote the CICS instance where $\mset = (\mc_1,\cdots,\mc_n)$ is the collection of the underlying MDPs, one for each alternative $i\in [n]$. 

Let $\inst = (\feas,\mset)$ be any CICS instance. A policy $\pol$ for $\inst$ is a sequential algorithm that will eventually accept a subset of the MDPs and halt, collecting a reward, while suffering a loss for all the costly actions that it will take during this process. At any point in time, each MDP occupies one of its states, and only MDPs that occupy terminal states can be accepted; the algorithm is allowed to advance the state of an MDP through an available action by paying the corresponding cost. Formally, the state of $\pol$ is initialized at $s:=(\sigma_1,\cdots , \sigma_n)$ where each $\sigma_i\in \Omega_i$ is a pre-specified ``root'' state for the $i$-th MDP. Then, at every step, policy $\pol$ proceeds to take one of two options:
\begin{enumerate}
    \item Pick an index $i\in [n]$ such that $s_i\in\Omega_i\setminus T_i$ is a non-terminal state and an action $a_i\in A_i(s_i)$. Pay the action cost $c_i(a_i)$, observe a new state $s'_i$ that is generated from the distribution $\Pi_i(s_i,a_i)$ and iterate over the new set of states $s=(s_{-i},s'_i)$.

    \item Select a feasible subset $S\subseteq \feas\cap \{i:s_i\in T_i\}$ of the MDPs whose current state is terminal. Then, collect reward $\sum_{i\in S}v_i(s_i)$ and halt.
\end{enumerate}
Let $C_\pol$ and $V_\pol$ respectively denote the random variables corresponding to the total cost paid by policy $\pol$ while taking actions (option 1) and the reward it collected (option 2) before halting. We evaluate the performance of policy $\pol$ with respect to its utility $\util_{\inst}(\pol):=\expectt{V_\pol-C_\pol}$ and we use $\opt(\inst):=\max_\pol[\util_{\inst}(\pol)]$ to denote the utility of the optimal policy for the CICS instance $\inst$. Finally, we use $S_\pol$ to denote the (random) set of MDPs that a policy $\pol$ accepts; note that $\probb{S_\pol\in\feas}=1$ for any feasible policy $\pol$ by definition.

\subsection*{Committing Policies} 
Committing policies are a family of algorithms for CICS that make all local decisions within the MDPs in advance. In particular, a committing policy is described with respect to a commitment $\comms=(\com_1,\cdots,\com_n)$ where each $\com_i$ is a mapping from the non-terminal states $s_i\in \Omega_i\setminus T_i$ to distributions over available actions $\com_i(s_i)\in\Delta(A_i(s_i))$. We say that a policy commits to $\comms$ if whenever it takes option 1 in the above description, it only select the index $i\in [n]$ and then the action $a_i$ is automatically sampled from $\com_i(s_i)$. We use $\comset(\comms)$ to denote the set of all policies that commit to $\comms$ and $\comset(\inst)$ to denote the set of all possible commitments $\comms$ for a CICS instance $\inst$. Then, the commitment gap is defined as the ratio between the utilities of the overall optimal policy for the instance and the optimal committing policy.
\begin{definition}[Commitment Gap]
    Let $\inst =(\feas,\mset)$ be any CICS instance. The commitment gap of $\inst$ is defined as
    \[\comgap(\inst):= \frac{\max_{\comms\in\comset(\inst)}\max_{\pol\in\comset(\comms)}\util_{\inst}(\pol)}{\max_\pol\util_{\inst}(\pol)}.\]
\end{definition}

Finally, we note that all policies that commit to some $\comms\in\comset(\inst)$ can be equivalently viewed as policies for a ``reduced'' CICS instance over Markov chains. A Markov chain corresponds to the special case of an MDP for which $|A(s)|=1$ for all $s\in \Omega\setminus T$ i.e. all non-terminal states have a unique available action. We use MC-CICS to denote CICS instances that are defined over Markov chains. Given any MDP $\mc$ and a mapping $\com(\cdot)$ from states to distributions over actions, we can define a Markov chain $\mc^\com$ over the same terminal and non-terminal states as $\mc$, such that the unique action from a state $s\in \Omega\setminus T$ has cost $\sum_{a\in A(s)}\com(s,a)c(a)$ and transitions to each state $s'$ with probability $\sum_{a\in A(s)}\com(s,a)\Pi(s,a,s')$. Then, any policy for a CICS instance $\inst=(\feas,\mset)$ that commits to $\comms=(\com_1,\cdots,\com_n)$ can be viewed as a policy for the MC-CICS instance $\inst^\comms = (\feas,\mset^\comms)$ where $\mset^\comms=(\mc_1^{\com_1},\cdots,\mc_n^{\com_n})$ and vice-versa.

\subsection*{Ex Ante Feasible Policies}

Up next, we consider an ex ante relaxation for CICS, first defined by~\cite{BLW25}. Let $\inst=(\feas,\mset)$ be any CICS instance. Recall that any policy $\pol$ for $\inst$ needs to satisfy (ex post) feasibility: the set of MDPs it accepts needs to be feasible ($S_\pol\in\feas$) with probability $1$. We now relax this requirement by considering ex ante feasible policies where the feasibility constraint must be satisfied only in expectation, i.e. we require $\mathbb{E}[\mathbbm{1}_{S_\pol}]\in\polytope(\feas)$. We use $\exanteopt(\inst)$ to denote the utility of the optimal ex ante feasible policy for $\inst$; since any ex post feasible policy is also ex ante feasible, we clearly have that $\opt(\inst)\leq \exanteopt(\inst)$.

It turns out that computing the optimal ex ante feasible policy for a CICS instance $\inst$ is a much easier task than computing the optimal ex post feasible policy. \cite{BLW25} prove that in the special case of MDPs that do not contain any directed cycles, the optimal ex ante feasible policy is a committing one, and there exists a FPTAS for finding ex ante optimal policies. By using results from the constrained MDP literature, we are able to strengthen this result and achieve an exact efficient computation of the optimal ex ante feasible policies in general MDPs under the same mild assumption on the constraint $\feas$ that \cite{BLW25} makes; namely, that it admits an efficient separation oracle.

\begin{lemma}\label{lemma:exante-compute}
    Let $\inst = (\feas,\mset)$ be any CICS instance and assume that $\feas$ admits an efficient separation oracle. Then, we can efficiently compute a commitment $\comms\in\comset(\inst)$ such that \[\exanteopt(\inst)=\exanteopt(\inst^\comms).\]
\end{lemma}

We prove~\Cref{lemma:exante-compute} in~\Cref{sec:cics-exante}. While ex ante feasible policies are not valid solutions for CICS, the above lemma produces a good commitment for $\inst$ and essentially reduces bounding the commitment gap to relating the utilities of ex post and ex ante feasible policies in MC-CICS.

\subsection*{Surrogate Values} 
Finally, we require some statements and technical results from~\cite{GJSS19} and~\cite{CCHS24} that apply to MC-CICS instances. For ease of presentation, we will only provide the necessary definitions and theorems here; we defer the reader to Section 3 of~\cite{GJSS19} and Section 3.1 of~\cite{CCHS24} for the full details and intuition behind these claims. In the following, $\trajset(\mc)$ denotes the set of all possible trajectories starting from the root state of a Markov chain $\mc$ and ending in a terminal state, $p(\traj)$ denotes the probability that a trajectory $\traj\in\trajset(\mc)$ gets realized and $t(\traj)$ denotes the terminal state at the end of trajectory $\traj$.

We begin with the notion of surrogate values for a Markov chain $\mc$, which essentially amortize the action costs $c(a)$ to trajectories $\traj\in \trajset(\mc)$. In particular, when a policy accepts an alternative $i$ resulting from a trajectory $\traj_i\in\trajset(\mc_i)$, the difference of the actual value of the alternative and the surrogate value of the trajectory, namely $v(t(\traj_i))-w(\traj_i)$, goes towards the cost of the actions that the policy took in $\mc_i$ to arrive at the terminal. \cite{CCHS24} define this amortization through a water-draining procedure so that a policy never overpays action costs in the special case of Markov chains that do not contain any directed cycles. \cite{GJSS19} provide a more general definition that applies to arbitrary Markov chains. Furthermore, if the policy satisfies an additional ``promise of payment'' property defined below, then the surrogate values fully account for the action costs and provide an accurate expression for the policy's final expected utility.

\begin{definition}[Surrogate Values]
    Any Markov chain $\mc$ defines a mapping $g:\Omega\mapsto\R$ from its states $s\in \Omega$ to grades $g(s)\in\R$. Furthermore, the index of a trajectory $\traj\in\trajset(\mc)$ is defined as $w(\gamma):=\min_{s\in\traj}g(s)$. Finally, the surrogate value $W$ of the Markov chain $\mc$ is the random variable that takes value $w(\traj)$ with probability $p(\traj)$.
\end{definition}

\begin{definition}[Promise of Payment]
     We say that a policy $\pol$ for an MC-CICS instance $\inst$ satisfies promise of payment if in every possible trajectory of the policy and for every alternative $i$, when the policy halts one of the following properties holds: (i) $i$ is selected (i.e. $i\in S_\pol$), or, (ii) the Markov chain $\mc_i$ is never advanced, or, (iii) the final state in $\mc_i$ has the minimum grade across all states visited in $\mc_i$.
\end{definition}


\noindent
The following relates the utility of a policy that satisfies promise of payment to its surrogate value.

\begin{lemma}\label{lemma:amortization-bound} (\cite{GJSS19})
     The utility of any policy $\pol$ for a MC-CICS instance $\inst=(\feas,\mset)$ satisfies 
     \[\util_{\inst}(\pol)\leq \sum_{i=1}^n\sum_{\traj_i\in \trajset(\mc_i)} \probb{\traj_i \text{ realized in }\mc_i \text{ and }\pol\text{ accepts }t(\traj_i)}\cdot w_i(\traj_i).\]
     Furthermore, the above inequality holds with equality if $\pol$ satisfies promise of payment.
\end{lemma}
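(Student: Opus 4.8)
The plan is to prove the inequality by amortizing action costs onto accepted terminal states one Markov chain at a time, and then to read off the equality condition by tracking exactly where the amortization leaves slack.

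First I would split the objective by element. Writing $s_i$ for the state of $\mc_i$ at the moment $\pol$ halts and $C_\pol^i$ for the total cost of the actions $\pol$ takes inside $\mc_i$, linearity of expectation gives $\util_{\inst}(\pol) = \sum_{i=1}^n\expectt{\mathbbm{1}[i\in S_\pol]\,v_i(s_i)-C_\pol^i}$, since $V_\pol=\sum_{i\in S_\pol}v_i(s_i)$ and $C_\pol=\sum_i C_\pol^i$. Because $\sum_{t_i\in T_i}\probb{\pol\text{ accepts }t_i}\,w_i(t_i)=\expectt{\mathbbm{1}[i\in S_\pol]\,w_i(s_i)}$, the lemma reduces to the per-chain statement
\[
\expectt{\mathbbm{1}[i\in S_\pol]\bigl(v_i(s_i)-w_i(s_i)\bigr)}\ \le\ \expectt{C_\pol^i},
\]
with equality iff conditions (i)--(iii) of promise of payment never fail for $\mc_i$ on any trajectory.

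Fix $i$. The heart of the argument is a telescoping computation (equivalently, induction on the DAG structure of $\mc_i$) along the sub-trajectory $\sigma_i=s^0\to s^1\to\cdots\to s^L$ of states that $\mc_i$ passes through under $\pol$, where $L$ is the random number of actions $\pol$ takes in $\mc_i$. The key structural input is the identity guaranteed by the water-draining construction of \cite{CCHS24}: for every non-terminal $s$, its unique action $a$, and random successor $s'\sim\Pi(s,a)$, the cost $c(a)$ is exactly the expected amount of surrogate headroom that advancing the chain ``uses up'', i.e. a conditional expectation of the change in the index $g$; crucially this is an identity, not merely an inequality, so all slack in the final bound comes from the endpoint of $\mc_i$ and from the event $\{i\notin S_\pol\}$, not from the per-step accounting. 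Two features make the telescoping valid: the MDPs are independent, and $\pol$'s decision to advance $\mc_i$ at $s^\ell$ depends only on the current joint state (hence on the $\mc_i$-history up to $s^\ell$ and on randomness independent of $\mc_i$'s future), so $L$ is a stopping time for the filtration generated by $\mc_i$; summing the per-step identity over $\ell=0,\dots,L-1$ and taking expectations telescopes the index. I would then combine this with $w_i(t)=g_i(t)\le v_i(t)$ for terminal $t$ and with $g_i(s)=\max_{t\in\reachable(s)}w_i(t)$ to peel off the contribution of the accepted terminal: when $i\in S_\pol$ the telescoped quantity collects $v_i(s_i)-w_i(s_i)$, and when $i\notin S_\pol$ the left-hand side contributes $0$, yielding the displayed inequality.

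Finally, for the equality characterization I would locate the two places slack can enter and match each to the promise-of-payment conditions. Slack at the endpoint of $\mc_i$ vanishes precisely when the last step taken strictly decreased the index --- condition (iii) --- or when no step was taken at all (condition (ii)); and any residual slack from having advanced $\mc_i$ on a trajectory where $i$ is ultimately not selected is cancelled exactly in the case $i\in S_\pol$ (condition (i)). Showing these account for \emph{all} the slack, in both directions, gives equality iff $\pol$ satisfies promise of payment. The main obstacle I anticipate is exactly this last bookkeeping: writing the water-draining identity in a form that makes the per-step contributions telescope cleanly \emph{and} exposes the endpoint slack as the literal disjunction of conditions (i)--(iii) is delicate, whereas the inequality direction alone is routine once the water-draining identity is in hand.
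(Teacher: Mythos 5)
First, a framing point: the paper you are working from does not actually prove this lemma. It is imported wholesale from \cite{CCHS24} (the text explicitly defers to Section~3.1 of that paper), and indeed the present paper never even specifies the water-draining construction of the surrogate values $w(t)$ --- it only records the abstract properties $w(t)\le v(t)$ and $g(s)=\max_{t\in\reachable(s)}w(t)$. So there is no in-paper proof to compare against, and your proposal has to stand or fall on whether it correctly reconstructs the \cite{CCHS24} amortization. Your overall shape --- decompose per Markov chain, amortize the costs $C^i_\pol$, and then match the residual slack to conditions (i)--(iii) of promise of payment --- is the right shape.

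However, the central identity you lean on is false, and the argument does not go through as written. You assert that the water-draining construction guarantees, for each non-terminal state $s$ with action $a$, that $c(a)$ equals the conditional expected \emph{decrease of the index} $g$, so that $g$ telescopes along the trajectory and ``all slack comes from the endpoint.'' Already in the one-box Pandora special case this fails: take $V$ uniform on $\{0,10\}$ and $c=1$. The construction there reduces to the familiar surrogate $w(t_v)=\min(v,\sigma^*)$ with $\sigma^*$ defined by $\expectt{(V-\sigma^*)^+}=c$, so $\sigma^*=8$; the index drops from $g(\sigma)=8$ to $\min(V,8)\in\{0,8\}$, an expected decrease of $4\neq 1=c$. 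The identity the water-draining construction actually provides is of a different kind: the cost of the action taken at $s$ is paid for by the expected \emph{excess} $v(t)-w(t)$ drained from the terminals reachable from $s$ (in the one-box case, exactly $\expectt{(V-\sigma^*)^+}=c$), and this excess is only collected on trajectories where the relevant terminal is actually \emph{accepted}. Consequently the slack is not an ``endpoint term of a telescoping sum''; it appears precisely on trajectories where costs were sunk into a chain but the excess that was supposed to repay them is never collected --- i.e., the chain was advanced (so (ii) fails), is not selected (so (i) fails), and was abandoned at a point where the index had not strictly decreased on the last step (so (iii) fails). Making this precise, and in particular proving the ``only if'' direction of the equality claim, requires the explicit water-draining bookkeeping of which terminal excesses pay for which action costs, which neither the paper nor your proposal supplies; it cannot be recovered from the index-telescoping identity, since that identity does not hold.
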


We note that variants of the above lemma have been stated in many previous works, including~\cite{DTW03,GJSS19,CCHS24} and~\cite{BLW25}. While the term Promise of Payment was introduced by~\cite{CCHS24} and their results are stated only for acyclic Markov chains, the same condition is proven by~\cite{GJSS19} for general Markov chains using techniques from~\cite{DTW03} under a different notation. We replicate the proof in the appendix for completeness.

\section{Bayesian Combinatorial Selection}\label{sec:bcs}

In this section, we formally define the stochastic optimization problem of Bayesian Combinatorial Selection (henceforth, BCS), the benchmarks and families of algorithms we consider over it, and the relationships between these benchmarks. 


\begin{definition}[BCS and Benchmarks]
    An instance $(\feas,\dist)$ of BCS consists of a constraint $\feas\subseteq 2^{[n]}$ and a product distribution $\dist = \disti_1\times\cdots \times \disti_n$ over $n$ stochastic elements. Given such an instance, we consider the ex post optimal and ex ante optimal benchmarks, respectively defined as \[\expost(\feas,\dist):=\expect{x\sim\dist}{\max_{S\in\feas}\sum_{i\in S}x_i}\quad\text{,}\quad \exante(\feas,\dist) = \max_{q\in\polytope(\feas)}\sum_{i=1}^n q_iF_{\disti_i}(q_i).\]
\end{definition}

Note that $\expost(\feas,\dist)$ captures the expected value of the hindsight optimal algorithm that for every $x\sim\dist$ selects the value-maximizing set $S(x)\in\feas$. The ex ante optimum captures the expected value of an algorithm that selects a subset $S(x)\subseteq [n]$ for every $x\sim\dist$, but this subset is not required to be feasible at all vectors $x$. Instead, we require that $S(x)$ satisfies feasibility in expectation, namely that $\expect{x\sim\dist}{\mathbbm{1}_{S(x)}}\in\polytope(\feas)$. Denoting $\prob{x\sim\dist}{i\in S(x)}$ by $q_i$ provides the above expression for 
$\exante(\feas,\dist)$.
Clearly, ex post feasibility implies ex ante feasibility and thus $\expost(\feas,\dist)\leq \exante(\feas,\dist)$ for all constraints $\feas$ and distributions $\dist$. 

Given a constraint $\feas$, it will be useful to consider the minimum ratio between the ex post and the ex ante optima, over all possible distributions $\dist$. It turns out that this quantity is precisely the same as the correlation gap of $\feas$, as introduced by~\cite{ADSY12}\footnote{\cite{ADSY12} define the correlation gap with respect to a set function; \cite{CVZ11} later extend this definition to a constraint $\feas$ by considering the correlation gap of the associated weighted rank function.}.

\begin{definition}[Correlation Gap] For a feasibility constraint $\feas\subseteq 2^{[n]}$ and $x\in\polytope(\feas)$, let $R(x)\subseteq [n]$ denote the random set obtained by picking each element $i\in [n]$ independently with probability $x_i$.
    The correlation gap of $\feas$ is defined as
    \[\corgap(\feas):= \min_{x\in\polytope(\feas),y\in\R^n_+}\frac{\expectt{\max_{S\in\feas\cap R(x)}\sum_{i\in S}y_i}}{\sum_{i=1}^nx_iy_i}.\]
\end{definition}

The following lemma relating the correlation gap to the ex ante and ex post optima for BCS was first proved by \cite{Y11}.\footnote{\cite{CVZ11} later showed that the correlation gap exactly captures the performance of offline contention resolution schemes (CRS).} We reproduce the proof in the appendix for completeness.

\begin{lemma}\label{lemma:expost-vs-exante}
For any constraint $\feas$, we have $\corgap(\feas)=\min_\dist\big(\expost(\feas,\dist)/\exante(\feas,\dist)\big)$.
\end{lemma}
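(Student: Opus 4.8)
The plan is to prove the two inequalities $\corgap(\feas)\le\max_\dist\big(\exante(\feas,\dist)/\expost(\feas,\dist)\big)$ and $\corgap(\feas)\ge\max_\dist\big(\exante(\feas,\dist)/\expost(\feas,\dist)\big)$ separately, by exhibiting, for any BCS instance, a feasible pair $(x,y)$ in the correlation-gap program whose ratio matches the ex-ante/ex-post ratio, and conversely, for any feasible $(x,y)$, a distribution $\dist$ achieving the same ratio. The bridge between the two formulations is the observation that $q_i F_{\disti_i}(q_i)$ — the contribution of element $i$ to $\exante$ — is exactly $\expectt{x_i \cdot \mathbbm{1}[i \in S(x)]}$ optimized over selection rules with marginal $q_i$, and the optimal such rule is a top-quantile threshold rule.

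First I would handle the direction $\exante/\expost \le \corgap$. Fix a distribution $\dist$ and let $q\in\polytope(\feas)$ be the optimal ex-ante solution, so $\exante(\feas,\dist)=\sum_i q_i F_{\disti_i}(q_i)$. The idea is to reduce to deterministic values: replace each $\disti_i$ by the two-point (or appropriately degenerate) distribution that places mass $q_i$ on the conditional top-quantile mean $y_i := F_{\disti_i}(q_i)$ and the rest on $0$ — or, more carefully, note that the ex-post optimum only increases when we "sharpen" each distribution to its conditional expectation on the selected quantile while keeping the marginal selection probability $q_i$ fixed (this uses that $\max_{S\in\feas}\sum_{i\in S}x_i$ is convex in $x$, so conditioning-then-averaging within the top quantile can only raise $\expost$). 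Then with $x:=q$ and $y$ as above, $\sum_i x_i y_i = \exante(\feas,\dist)$ while $\expectt{\max_{S\in\feas\cap R(x)}\sum_{i\in S}y_i}$ is at most the sharpened $\expost$, which is at most the original $\expost(\feas,\dist)$ — wait, the inequality goes the right way because $R(x)$ picks $i$ with probability $x_i=q_i$ independently, exactly matching the event "$i$ lies in its top $q_i$ quantile" in the sharpened instance. So $\exante/\expost \le \sum_i x_i y_i / \expectt{\max_{S\in\feas\cap R(x)}\sum_{i\in S}y_i} \le \corgap(\feas)$.

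For the reverse direction, fix $x\in\polytope(\feas)$ and $y\in\R^n_+$ achieving (or approaching) $\corgap(\feas)$. Construct $\dist$ by letting $\disti_i$ be the distribution that takes value $y_i$ with probability $x_i$ and $0$ otherwise. Then $R(x)$ from the correlation-gap definition is exactly the support pattern of a draw $x\sim\dist$, so $\expost(\feas,\dist)=\expectt{\max_{S\in\feas\cap R(x)}\sum_{i\in S}y_i}$. For the ex-ante side, the selection rule "pick $i$ iff $X_i=y_i$" has marginal $q_i=x_i\in\polytope(\feas)$ and yields value $\sum_i x_i y_i$, so $\exante(\feas,\dist)\ge\sum_i x_i y_i$; hence $\exante/\expost \ge \sum_i x_i y_i/\expectt{\max_{S\in\feas\cap R(x)}\sum_{i\in S}y_i}$, which can be made arbitrarily close to $\corgap(\feas)$.

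The main obstacle is the sharpening/convexity argument in the first direction: one must argue carefully that replacing a general $\disti_i$ by its conditional-top-quantile two-point version does not decrease $\expost$, and that the optimal ex-ante selection rule for a fixed marginal $q_i$ is indeed the top-quantile threshold (so that $q_i F_{\disti_i}(q_i)$ is the right per-element value). The threshold-optimality is a standard exchange argument — shifting selection probability from lower realizations to higher ones within the budget $q_i$ only helps — and the $\expost$ monotonicity follows from Jensen applied to the convex function $x\mapsto\max_{S\in\feas}\sum_{i\in S}x_i$, conditioning on which quantile bin each $X_i$ falls in. Handling point masses at the $q_i$-th quantile (as flagged in the footnote on $F_{\disti_i}$) requires the usual tie-breaking-by-randomization to hit the marginal exactly, but is routine.
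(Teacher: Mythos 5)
Your proposal is correct and takes essentially the same route as the paper: one direction by turning the correlation-gap maximizers $(x,y)$ into a Bernoulli product instance, the other by reducing an arbitrary $\dist$ to its quantile-sharpened Bernoulli version with $y_i=F_{\disti_i}(q_i)$, which is exactly the paper's two constructions. One small caution: in prose you twice state the sharpening step with the direction reversed (that it ``does not decrease'' $\expost$), but what is needed—and what your displayed chain and the Jensen argument for the convex function $x\mapsto\max_{S\in\feas}\sum_{i\in S}x_i$ actually give—is that sharpening does not \emph{increase} $\expost$, i.e.\ $\expost(\feas,\dist'')\le\expost(\feas,\dist)$.
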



\paragraph{Online Selection Algorithms.} We will now consider ex post feasible selection algorithms for BCS that observe the values of the elements sequentially and must commit to accepting or rejecting elements before observing all of the values. We distinguish between two kinds of algorithms. 

The first probes elements in arbitrary order, learning their values exactly, but must decide to accept or reject each element irrevocably upon observing its value. We call this a {\em Free-Order Online Selection} algorithm. The second can visit each element multiple times in arbitrary order and at each visit determines an acceptance threshold. At each visit the algorithm only learns whether the value of the element exceeds the threshold or not, and if it does, the algorithm must accept the element before moving forward. We call this a {\em One-Sided Semi-Online Selection} algorithm.


\begin{definition}[Free-Order Online Algorithms]
    A free-order online selection algorithm for a BCS instance $(\feas,\dist)$ proceeds as follows. Starting from $S=R=\emptyset$, it iteratively selects an index $i\in [n]\setminus (S\cup R)$ with $S\cup\{i\}\in\feas$ and a threshold $\tau_i$. It observes the instantiated value $x_i$. If $x_i\geq \tau_i$ it accepts the element $(S:=S\cup\{i\})$, otherwise it rejects it $(R:=R\cup\{i\})$. The process continues until no more elements can be selected or the algorithm decides to halt.
\end{definition}    

\begin{definition}[One-Sided Semi-Online Algorithms]
    A one-sided semi-online selection algorithm for a BCS instance $(\feas,\dist)$ proceeds as follows. Starting from $S=\emptyset$, it iteratively selects an index $i\in [n]\setminus S$ with $S\cup\{i\}\in\feas$ and a threshold $\tau_i$. It observes whether $x_i\ge \tau_i$ or not. If $x_i\geq \tau_i$ it accepts the element, that is, $S:=S\cup\{i\}$. The process continues until no more indices can be selected or the algorithm decides to halt.
\end{definition}
    
    The expected total value obtained by an algorithm $\so$ for the BCS instance $(\feas,\dist)$ is denoted by $\val_{\feas,\dist}(\so)$. We use $\freeorder(\feas,\dist)$ and $\semi(\feas,\dist)$ to denote the value of the optimal free-order online and one-sided semi-online algorithms, respectively, for the BCS instance $(\feas,\dist)$. 
    
\paragraph{Online Selection  Algorithms and Prophet Inequalities.}
    Observe that Free-Order Online algorithms are precisely the class of algorithms whose performance is captured by a Free-Order Prophet Inequality.\footnote{A (regular) Prophet Inequality, in contrast, specifies a fixed (adversarial) ordering over the elements and requires the algorithm to probe elements in that order and make select/reject decisions online.} In particular, we will say that a constraint $\feas$ admits an $\alpha$ free-order (resp. ex ante free-order) prophet inequality for some $\alpha\leq 1$ if and only if
    \[\min_\dist\frac{\freeorder(\feas,\dist)}{\expost(\feas,\dist)}\geq \alpha \quad \quad \bigg(\text{resp. } \min_\dist\frac{\freeorder(\feas,\dist)}{\exante(\feas,\dist)}\geq \alpha \bigg).\]
 
Clearly, any free-order online algorithm can be simulated by a one-sided semi-online algorithm. Therefore, we have that $\semi(\feas,\dist)\ge\freeorder(\feas,\dist)$ for all BCS instances $(\feas, \dist)$. In fact, when compared against the ex post optimal benchmark, there is a strict separation between the two classes of algorithms. Consider, for example, the objective of selecting a single largest value element. A one-sided semi-online algorithm can match the ex post optimum for this objective by iteratively probing the element whose support has the largest maximum value and using this value as the threshold. On the other hand, it is well-known that we cannot achieve an $\alpha$ free-order prophet inequality with $\alpha > 0.745$ even for single-element selection when all the distributions are identical~\citep{HK82}. We capture this observation in the following:

\begin{fact}
    Let $\feas=\{S\subseteq [n]: |S|\leq 1\}$. Then, for any product distribution $\dist$, we have that $\semi(\feas,\dist)=\expost(\feas,\dist)$, but there exists a $\dist$ with $\freeorder(\feas,\dist)\le 0.745\,\expost(\feas,\dist)$.
\end{fact}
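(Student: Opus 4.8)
The plan is to establish the two claims separately. For the first claim --- that a semi-online algorithm achieves the ex post optimum for single-element selection --- I would analyze the greedy threshold strategy sketched in the text. Order the elements so that $M_1 \ge M_2 \ge \cdots \ge M_n$ where $M_i := \sup(\operatorname{supp}(\disti_i))$ is the essential supremum of $\disti_i$ (assume supports are bounded; the unbounded case follows by a limiting argument). The semi-online algorithm visits element $1$ with threshold $\tau_1 = M_1$; if accepted, it halts with value $M_1$ (or, more carefully, with a value equal to $M_1$ in the point-mass-at-top case, and otherwise one picks $\tau_1$ slightly below $M_1$ and takes limits). More robustly: at step $i$, having not yet accepted anything, set $\tau_i = M_{i+1}$ (the next-largest possible value among remaining elements); accept element $i$ iff $x_i \ge M_{i+1}$. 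I claim this accepts the realized maximum. Indeed, on a realization $x$, let $j = \argmax_i x_i$. For every $i < j$ we have $x_i \le M_i$ but also $x_i \le x_j \le M_j \le M_{i+1}$, so... this needs care when $x_i = M_{i+1}$. The cleanest route is: run through elements in the order $M_1 \ge M_2 \ge \cdots$, and when visiting element $i$ use threshold equal to the largest value still achievable by any not-yet-visited element, i.e. $\max_{i' > i} M_{i'}$; show by a direct exchange/coupling argument that the element ultimately accepted has value equal to $\max_i x_i$. The key point is that a semi-online algorithm, unlike a free-order one, need not commit on element $i$ before it has effectively ``seen'' (via threshold comparisons) everything that could beat it.

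For the second claim, I would just invoke the Hill--Kertz bound \citep{HK82}: for $n$ i.i.d.\ random variables, no stopping rule (equivalently, no free-order online selection algorithm, since with identical distributions the order is irrelevant) can guarantee expected value exceeding $0.745 \cdot \expectt{\max_i x_i}$ in the limit, and this is tight. Concretely, I would exhibit the standard hard instance: take $n$ i.i.d.\ copies of the distribution on $[0,\infty)$ (or a suitable truncation) for which the optimal threshold stopping rule attains exactly the Hill--Kertz ratio; for instance the distribution with $\probb{x > t} = \tfrac{1}{n t}$ on $[\tfrac{1}{n}, \infty)$ scaled appropriately, or simply cite that such a family exists with $\freeorder(\feas, \dist)/\expost(\feas, \dist) \to 0.745$. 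Since $\freeorder$ is defined as the optimum over all free-order online algorithms and with i.i.d.\ elements every such algorithm reduces to a (possibly randomized) sequential stopping rule on exchangeable variables, the Hill--Kertz impossibility applies directly, giving $\freeorder(\feas, \dist) \le 0.745\, \expost(\feas, \dist)$ for $\dist$ in this family (for $n$ large enough, and hence for some fixed finite $\dist$ up to an arbitrarily small slack, which one absorbs by picking the inequality non-strict or by a further limiting statement).

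The main obstacle is purely the bookkeeping in the first claim: handling atoms at the threshold and unbounded or continuous supports so that the semi-online algorithm \emph{exactly} matches $\expost$ rather than matching it up to $\varepsilon$. The footnote in the paper about point masses at quantiles suggests the intended convention handles ties in the ``right'' way, so I would adopt the same convention --- when $x_i$ equals the threshold $\tau_i$, the element is accepted (``$x_i \ge \tau_i$'') --- and argue that with this tie-breaking the greedy threshold rule accepts precisely an element attaining $\max_i x_i$ on every realization, hence $\val_{\feas,\dist}(\so) = \expost(\feas,\dist)$. For unbounded supports one defines $\tau_i$ via suprema that may be $+\infty$; in that degenerate case $\expost$ itself may be infinite and the statement is vacuous, or one truncates and takes limits. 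Everything else is immediate from the definitions and the cited inequality.
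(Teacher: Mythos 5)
Your second claim is fine: the paper also just invokes the Hill--Kertz i.i.d.\ bound of \citep{HK82}, and with identical distributions a free-order algorithm is indeed just a stopping rule. The gap is in your first claim. The ``cleanest route'' you settle on --- visit elements once, in decreasing order of $M_1\ge M_2\ge\cdots\ge M_n$, accepting element $i$ iff $x_i\ge M_{i+1}=\max_{i'>i}M_{i'}$ --- is a single-pass rule that does \emph{not} achieve the ex post optimum. Concretely, take $X_1\in\{8,10\}$, $X_2\in\{0,9\}$ (each with positive probability) and $X_3\equiv 7$, so $M_1=10\ge M_2=9\ge M_3=7$. On the realization $(x_1,x_2,x_3)=(8,0,7)$ your rule rejects element $1$ (since $8<9$), rejects element $2$ (since $0<7$), and accepts element $3$ with value $7$, while the realized maximum is $8$. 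Your argument only shows that elements $i$ preceding the argmax $j$ are (essentially) not accepted, via $x_i\le x_j\le M_j\le M_{i+1}$; it never shows that $j$ itself clears its threshold $M_{j+1}$, and in general it does not. There is also a structural reason no fix of this kind can work: a single-pass rule with irrevocable accept/reject decisions can be simulated by a free-order online algorithm (which sees even more, namely the exact values), so if it matched $\expost(\feas,\dist)$ for every $\dist$ it would force $\freeorder(\feas,\dist)=\expost(\feas,\dist)$ for all $\dist$, contradicting the very Hill--Kertz half of the Fact you prove next.

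The intended algorithm --- which your opening sentence gestures at before you abandon it --- essentially exploits revisiting, which is exactly what separates semi-online from free-order here. Repeatedly probe the element whose \emph{current} conditional support has the largest maximum, using that maximum itself as the threshold: if the probe succeeds you have accepted a value that upper-bounds every element's possible value, hence the realized maximum; if it fails, that element's conditional maximum strictly drops and it stays in the pool to be revisited later. For finite supports this terminates with exactly $\max_i x_i$ on every realization, giving $\semi(\feas,\dist)=\expost(\feas,\dist)$; your remarks about atoms, continuous supports, and limiting/truncation arguments are the right way to extend this beyond the finite-support case, but they do not rescue the single-pass rule.
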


Surprisingly, in contrast, when comparing against the {\em ex ante} optimal benchmark, this separation collapses and free-order online algorithms match the performance of one-sided semi-online algorithms.
\begin{theorem}\label{lemma:semi-online-and-eafopi}
For any constraint $\feas$, we have 
    \begin{align*}
    \text{Frugal Correlation Gap}(\feas) := & \min_\dist \left(\frac{\semi(\feas,\dist)}{\exante(\feas,\dist)}\right) \\
    & = \min_\dist \left(\frac{\freeorder(\feas,\dist)}{\exante(\feas,\dist)}\right) =: \text{Ex Ante Free-Order PI}(\feas).
    \end{align*} 
\end{theorem}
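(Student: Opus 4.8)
We want to show the frugal correlation gap equals the ex ante free-order prophet inequality ratio. One inequality is easy: since any free-order online algorithm can be simulated by a semi-online algorithm, we have $\semi(\feas,\dist) \ge \freeorder(\feas,\dist)$ for every $\dist$, hence $\max_\dist \exante/\semi \le \max_\dist \exante/\freeorder$. So the content is the reverse: for the worst-case distribution, semi-online algorithms are no better than free-order online algorithms against the ex ante benchmark.

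Let me think about both directions more carefully.

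\textbf{Proof plan.}
The claim is an equality of two suprema and one direction is immediate: as noted in the excerpt, $\semi(\feas,\dist)\ge\freeorder(\feas,\dist)$ for every instance, so $\exante(\feas,\dist)/\semi(\feas,\dist)\le\exante(\feas,\dist)/\freeorder(\feas,\dist)$ pointwise and hence the frugal correlation gap is at most the ex ante free-order PI. All the content is in the reverse inequality, which says that on its worst instance a semi-online algorithm does no better against the ex ante benchmark than a free-order algorithm does on its own worst instance. The plan is to take an arbitrary instance $(\feas,\dist)$ and replace it by a ``Bernoullified'' instance $(\feas,\dist')$ for which (i) semi-online algorithms are no stronger than free-order algorithms, (ii) the ex ante optimum does not drop, and (iii) the free-order optimum does not rise; this makes $\dist'$ at least as bad for semi-online algorithms as $\dist$ is for free-order algorithms, so $\exante(\feas,\dist)/\freeorder(\feas,\dist)\le\exante(\feas,\dist')/\freeorder(\feas,\dist')=\exante(\feas,\dist')/\semi(\feas,\dist')$, and taking the supremum over $\dist$ finishes the equality.

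Concretely, let $q^*\in\polytope(\feas)$ attain $\exante(\feas,\dist)=\sum_i q_i^*F_{\disti_i}(q_i^*)$, and let $\dist'=\disti'_1\times\cdots\times\disti'_n$ where each $\disti'_i$ is the two-point distribution taking the value $v_i:=F_{\disti_i}(q_i^*)$ with probability $q_i^*$ and the value $0$ otherwise. The ex ante bound is the easy part: plugging the same $q^*$ into the ex ante program of $\dist'$ and noting that for this two-point distribution the top $q_i^*$ quantiles are exactly the atom $\{x'_i=v_i\}$, so $F_{\disti'_i}(q_i^*)=v_i=F_{\disti_i}(q_i^*)$, gives $\exante(\feas,\dist')\ge\exante(\feas,\dist)$. (An atom of $\disti_i$ sitting exactly at its $q_i^*$-th quantile is handled by the fractional-atom convention already adopted in the paper.) Also, because $\disti'_i$ is two-point, a single threshold query in $(0,v_i]$ pins down $x'_i$ exactly and re-querying an element whose value is already determined conveys nothing, while thresholds $\le 0$ only waste feasibility budget; hence an optimal semi-online algorithm for $\dist'$ may be taken to visit every element at most once and to use only the bit ``$x'_i=v_i$ or not'', so it is a free-order algorithm, and conversely every free-order algorithm is semi-online. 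Therefore $\semi(\feas,\dist')=\freeorder(\feas,\dist')$.

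It remains to show $\freeorder(\feas,\dist')\le\freeorder(\feas,\dist)$. Take a (near-)optimal free-order algorithm $\so'$ for $\dist'$ and build a free-order algorithm for $\dist$ that runs $\so'$ internally: whenever $\so'$ visits an element $i$, the simulator visits $i$ with threshold $\tau_i$ equal to the $q_i^*$-th quantile of $\disti_i$ (again via the fractional-atom convention, so $\probb{x_i\ge\tau_i}=q_i^*$), feeds $\so'$ the bit $\mathbbm{1}[x_i\ge\tau_i]$ in place of the realization of $x'_i$, and accepts $i$ exactly when $\so'$ does. Coupling the randomness through quantiles makes ``$x_i\ge\tau_i$'' coincide with ``$x'_i=v_i$'', so the simulator accepts exactly the set $\so'$ accepts and is in particular ex post feasible; it is a legitimate free-order algorithm for $\dist$ since it visits each element at most once. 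Moreover the event that $i$ is accepted is the intersection of ``$\so'$ visits $i$'' (which, as $i$ is visited only once, depends only on the other, independent coordinates) with ``$x_i\ge\tau_i$'', so conditioned on acceptance the expected value collected from $i$ is $\expectt{x_i\mid x_i\ge\tau_i}=F_{\disti_i}(q_i^*)=v_i$, matching the deterministic reward $\so'$ gets from $i$; summing over $i$ gives $\val_{\feas,\dist}(\text{simulator})=\val_{\feas,\dist'}(\so')$, hence $\freeorder(\feas,\dist)\ge\freeorder(\feas,\dist')$. Combining the three facts with $\semi(\feas,\dist')=\freeorder(\feas,\dist')$ yields the reverse inequality, and with the easy direction we get equality.

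The routine parts are the ex ante computation and the supremum bookkeeping; the step that needs care is the simulation in the third paragraph. One must be sure that the only information $\so'$ acts on is the binary ``realized or not'' indicator (which is exactly why visiting each element at most once is essential, and where the two-point structure of $\dist'$ is used), that this indicator matches ``above the $q_i^*$-quantile'' under the coupling, and that both feasibility and the conditional-value accounting survive the coupling. The atom-at-the-quantile technicality recurs at two places and is dispatched each time by the fractional-atom convention; and if optimal online algorithms are not attained one works with $\varepsilon$-optimal ones throughout and lets $\varepsilon\to 0$.
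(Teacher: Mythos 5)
Your proof is correct and rests on the same device as the paper's: Bernoullify the instance at the ex ante marginals $q^*$, note that this preserves the ex ante optimum, and observe that on two-point distributions semi-online and free-order algorithms coincide. The difference is in how the construction is deployed, and here your version is actually the more complete one. The paper starts from the worst-case distribution for the \emph{semi-online} gap, argues it is without loss Bernoulli (ex ante preserved, $\semi$ can only decrease), and concludes from $\semi=\freeorder$ there; read literally, that chain of inequalities only re-derives the easy direction $\frucorgap(\feas)\le\text{Ex Ante Free-Order PI}(\feas)$. To close the reverse inequality one must instead start from an arbitrary instance and show that Bernoullification does not increase the \emph{free-order} optimum, i.e. $\freeorder(\feas,\dist')\le\freeorder(\feas,\dist)$ --- precisely the quantile-coupling simulation in your third paragraph, where the visit-once structure guarantees that the event ``$i$ is visited'' is independent of $x_i$, so the value collected from $i$ conditioned on acceptance is $F_{\disti_i}(q^*_i)=v_i$ and the simulator's value matches that of the Bernoulli algorithm. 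This is the free-order analogue of the monotonicity the paper only asserts for semi-online algorithms, and it is the step that turns the Bernoullification into $\text{Ex Ante Free-Order PI}(\feas)\le\frucorgap(\feas)$. The remaining ingredients of your argument (the ex ante computation, the collapse of semi-online to free-order on two-point supports, the fractional-atom convention, and the $\varepsilon$-optimal bookkeeping) match the paper's intent, so your write-up can be read as supplying the lemma the paper's proof implicitly relies on.
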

\begin{proof}
We have already argued one direction, namely that the worst case ratio of one-sided semi-online algorithms against the ex ante benchmark is no less than that of free-order online algorithms. For the other direction, let $\dist^*$ be the worst-case distribution for the one-sided semi-online versus ex ante gap. By mirroring the proof of~\Cref{lemma:expost-vs-exante}, we can assume without loss that $\dist^*$ is a Bernoulli product distribution, i.e. that each $X_i$ either takes value $0$ or $v_i$ with some probabilities. Indeed, if $q^*_i$ denote the selection probabilities of $\exante(\feas,\dist^*)$, then by having each element $X_i$ take value $0$ with probability $1-q^*_i$ and value $v_i=F_{\disti^*_i}(q^*_i)$ otherwise, the value of the ex ante optimal remains unchanged whereas the value of any one-sided semi-online algorithm can only decrease. Therefore, the only meaningful threshold for the optimal one-sided semi-online algorithm under $\dist^*$ when probing $X_i$ is $v_i$, and we can also assume without loss that if $X_i=0$ it will be permanently rejected as it doesn't contribute to the value or feasibility of the selection. Therefore, the optimal one-sided semi-online algorithm under the worst case distribution $\dist^*$ can be assumed to be a free-order online algorithm, from which the missing direction follows. 
\end{proof}


\paragraph{One-Sided Semi-Online and Frugal 
Algorithms.}
We now connect one-sided semi-online algorithms to another important class -- greedy-style algorithms for {\em deterministic} selection problems. We use the terminology and the formal definition given by \cite{S17}:


\begin{definition}[Frugal Algorithms]
   A deterministic selection algorithm $S:\R^n\mapsto \feas$ is called frugal if there exists an index function $g(i,S,y):[n]\times 2^{[n]} \times \R\mapsto\R$ that is non-decreasing on $y$, such that for all $x\in\R^n$, $S(x)$ is constructed incrementally by getting initialized at $S=\emptyset$ and then sequentially including the coordinate $j^*=\argmax_{i\notin S, S\cup\{i\}\in\feas}g(i,S,x_i)$ or halting. We say that the algorithm is $\beta$-frugal for some $\beta\leq 1$ if for all $x\in\R^n$ we have $\sum_{i\in S(x)}x_i \geq \beta\cdot\max_{S\in\feas}\sum_{i\in S}x_i$.
\end{definition}

Frugal approximations describe the ability of greedy criteria to approximate the optimal solution in a selection problem where elements have fixed values. In the following, we establish a close connection between frugal and one-sided semi-online algorithms; any frugal algorithm implies a one-sided semi-online algorithm for the same constraint, achieving the same approximation guarantee against the ex post optimum. The proof of the lemma follows \citeauthor{S17}'s argument relating the Price of Information for Pandora's Box problems to the frugal approximation and is deferred to the appendix.

\begin{lemma}\label{lemma:frugal-vs-semionline}
    Let $\feas$ be any constraint that admits an efficient $\beta$-frugal algorithm. Then, there exists a computationally-efficient one-sided semi-online algorithm for BCS over $\feas$ that $\beta$-approximates the ex post optimal for all distributions $\dist$.
\end{lemma}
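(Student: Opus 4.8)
The plan is to show that a semi-online algorithm can \emph{faithfully simulate} a run of the $\beta$-frugal algorithm on the realized input. If, for every realization $x\sim\dist$, the set $S$ produced by the semi-online algorithm equals $S(x)$, the set output by the frugal algorithm, then the value collected is $\sum_{i\in S(x)}x_i\ge \beta^{-1}\max_{S\in\feas}\sum_{i\in S}x_i$ pointwise, and taking expectations gives $\val_{\feas,\dist}(\so)\ge\beta^{-1}\expost(\feas,\dist)$. This mirrors \citeauthor{S17}'s reduction from the Price of Information to frugal approximation; in the cost-free BCS setting the ``Weitzman capping'' step of that argument is vacuous (all reservation values are $+\infty$, so capped values equal true values), and what remains is purely the simulation of the greedy index rule by threshold probes.

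Let $g(i,S,y)$ be the frugal algorithm's index function, non-decreasing in $y$, so that with current partial set $S$ it adds $j^\star=\argmax_{i\notin S,\,S\cup\{i\}\in\feas}g(i,S,x_i)$ (or halts). The semi-online algorithm I would design maintains the same set $S$ and, in each round, identifies $j^\star$ by a downward sweep of a target index value $\theta$: for decreasing $\theta$ and for every eligible $i$, it probes element $i$ with threshold $\tau_i(\theta,S):=\inf\{y:g(i,S,y)\ge\theta\}$. Since $g(i,S,\cdot)$ is non-decreasing, the answer ``$x_i\ge\tau_i(\theta,S)$'' is --- up to a fixed strict/non-strict convention at the boundary --- equivalent to ``$g(i,S,x_i)\ge\theta$''. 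Hence, as $\theta$ decreases, the first probe returning \textsc{Yes} occurs exactly at $\theta=\max_i g(i,S,x_i)$ for the element $i^\star=j^\star$, and the semi-online rule then \emph{forces} acceptance of $j^\star$ --- which is precisely the move we want to make. Note the algorithm never needs to learn $x_{j^\star}$ exactly: the frugal algorithm's subsequent behavior depends only on which elements are in $S$, not on their realized values. After accepting $j^\star$, the algorithm restarts the sweep from $\theta=+\infty$ with the enlarged $S$ (re-probing already-examined elements --- this is exactly what requires semi-online rather than free-order access, since the index of a previously-rejected element may change once $S$ grows), and it halts once the sweep has passed through the range relevant to the frugal algorithm's stopping rule (e.g.\ past $\theta=0$, or once no eligible elements remain) with no \textsc{Yes}. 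By induction on the rounds, the accepted elements and the stopping point match a legal execution of the frugal algorithm, so the output is $S(x)$.

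For efficiency: there are at most $n$ rounds, and within a round at most $n$ eligible elements; the only thresholds $\tau_i(\theta,S)$ producing distinct probe outcomes are those determined by the support points of the $\disti_i$, so when the distributions have finite support the number of probes and the running time are polynomial, provided $g$, the membership oracle for $\feas$, and the ``inverse'' $\tau_i(\cdot,S)$ (obtainable by binary search using monotonicity of $g$) are efficiently computable --- all implied by the $\beta$-frugal algorithm being efficient. For general distributions one either posits a threshold/quantile oracle or discretizes at a $(1+\epsilon)$ loss; in any case the BCS instances arising from CICS have finite support, since Markov chains have finitely many terminal states.

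The main obstacle is reconciling the forced-acceptance rule of semi-online algorithms with the need to \emph{identify} the greedy choice \emph{before} committing to it: an algorithm that probes an element with too low a threshold is forced to accept a suboptimal element and the simulation breaks. The downward sweep is the resolution --- by construction the first \textsc{Yes} is the element the frugal algorithm would have selected --- but making this rigorous requires care with (a) the strict-versus-non-strict convention at index-value ties and at points where the infimum defining $\tau_i(\theta,S)$ is not attained, and (b) the assumption that the frugal algorithm's stopping decision is index-driven (as it is for the standard greedy algorithms for matroid, $k$-system, and knapsack constraints), so that the semi-online algorithm can detect it from probe answers alone.
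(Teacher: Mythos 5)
Your proposal is correct and follows essentially the same route as the paper: simulate the frugal greedy with descending thresholds, using monotonicity of the index function so that the first \textsc{Yes} answer identifies (and forcibly accepts) exactly the element the frugal algorithm would pick next, without the algorithm ever needing exact values. The paper's implementation of the descent differs only cosmetically---at each step it runs the frugal algorithm on the vector of current conditional support maxima and probes the first element it would accept at that element's support maximum, conditioning the distribution downward on a \textsc{No}---which matches your $\theta$-sweep with $\tau_i(\theta,S)=\inf\{y : g(i,S,y)\ge\theta\}$, and both versions share the finite-support and index-driven-halting caveats you explicitly flag.
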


Due to the connections that we established in~\Cref{lemma:expost-vs-exante} and~\Cref{lemma:frugal-vs-semionline}, from now on we will refer to the worst case gap between one-sided semi-online algorithms and the ex post (resp. ex ante) optimal benchmark as the \textbf{frugality} (resp. \textbf{frugal correlation gap}) of the constraint $\feas$. We use $\frugap(\feas)$ and $\frucorgap(\feas)$ to denote these quantities. By these definitions and~\Cref{lemma:expost-vs-exante}, the following is immediate.
\begin{fact}
    For any constraint $\feas$, we have $\frucorgap(\feas)\geq\frugap(\feas)\cdot \corgap(\feas)$.
\end{fact}

\section{Relating CICS and BCS}\label{sec:bcs-to-cics}

We can now establish a formal connection between CICS and BCS. We first relate CICS over Markov chains to one-sided semi-online algorithms for BCS. We begin by establishing the following crucial lemma, relating policies for MC-CICS to one-sided semi-online algorithms for BCS.

\begin{lemma}\label{lemma:connect-bcs-to-mc-cics}
    Let $\inst=(\feas,\mset)$ by any MC-CICS instance and let $\dist=W_1\times \cdots \times W_n$ be the product distribution over the surrogate values of the Markov chains in $\mset$.
    Then, given any one-sided semi-online algorithm $\so$ for the BCS instance $(\feas,\dist)$, there exists a computationally-efficient policy $\pol_\so$ for $\inst$ that obtains expected utility \[\util_{\inst}(\pol_\so)=\val_{\feas,\dist}(\so).\]    
\end{lemma}
\begin{proof}
We construct the policy $\pol_\so$ by simulating the execution of the one-sided semi-online algorithm $\so$ on the BCS instance $(\feas,\dist)$; whenever $\so$ probes a stochastic element $W_i$ with threshold $\tau_i$, our policy $\pol_\so$ advances the $i$-th Markov chain as long as the grade of its current state $s_i$ satisfies $g_i(s_i)\geq \tau_i$. If this advancement ends in a terminal state $t_i$ through a trajectory $\traj_i$ with $w_i(\traj_i)\geq \tau_i$ then we provide $\so$ with the feedback that $W_i=w_i(\traj_i)$ (and therefore has to be accepted); and otherwise we provide $\so$ with the feedback that $W_i < \tau_i$. In any case, we continue mirroring the execution of $\so$ through the above feedback mechanism until $\so$ eventually halts; at this point $\pol_\so$ accepts all the Markov chains corresponding to elements that $\so$ accepted throughout its execution and terminates.

Note that by definition, sampling a trajectory $\traj_i$ via a random walk in $\mc_i$ and returning $w_i(\traj_i)$ is equivalent to sampling from the surrogate value $W_i$. Furthermore, the index of a trajectory is defined as the minimum grade across all the states that it visits and therefore can only decrease throughout its evolution;  therefore, our feedback mechanism for the semi-online algorithm $\so$ is equivalent to running $\so$ on instance $(\feas,\dist)$. This implies that whenever nature realizes terminal trajectories $\traj_1,\cdots, \traj_n$ in the Markov chains, our policy $\pol_\so$ will accept the set of chains $S_\so(w)$ corresponding to the elements that $\so$ accepts whenever $w=(w_1(\traj_1),\cdots , w_n(\traj_n))$ gets realized. Finally, $\pol_\so$ trivially satisfies promise of payment; whenever a Markov chain is advanced, it will continue to do so until the index either decreases (and falls below the corresponding threshold) or it gets accepted. Combining everything, we obtain that
\[\probb{\traj_i \text{ realized in }\mc_i \text{ and }\pol_\so\text{ accepts }t(\traj_i)} = \probb{\so\text{ accepts }w_i(\traj_i)}\]
and therefore, by~\Cref{lemma:amortization-bound}, the utility of $\pol_\so$ is precisely
\begin{align*}
    \util_\inst(\pol_\so) =  \sum_{i=1}^n\sum_{\traj_i\in \trajset(\mc_i)} \probb{\so\text{ accepts }w_i(\traj_i)}\cdot w_i(\traj_i) = \expect{w\sim\dist}{\sum_{i\in S_\so(w)}w_i}
\end{align*}
from which the lemma follows.
\end{proof}

Using~\Cref{lemma:connect-bcs-to-mc-cics}, we can easily show that one-sided semi-online algorithms for BCS can be converted into efficient policies for MC-CICS without loss in performance.

\begin{theorem}\label{lemma:semi-online-implies-mc-cics}
    Fix any selection constraint $\feas$ and let $\so$ be any computationally-efficient one-sided semi-online algorithm that $\alpha$-approximates $\expost(\feas,\dist)$ for all distributions $\dist$. Then there exists a computationally-efficient policy $\pol_\so$ that $\alpha$-approximates $\opt(\inst)$ for all instances $\inst=(\feas,\mset)$ of MC-CICS.
\end{theorem}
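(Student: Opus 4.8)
The plan is to couple a run of the semi-online algorithm $\so$ on the BCS instance $(\feas,\dist_W)$ --- where $\dist_W=W_1\times\cdots\times W_n$ is the product of the surrogate value distributions of the Markov chains $\mc_1,\dots,\mc_n$ --- with the execution of a policy $\pol_\so$ on $\inst$, so that $\pol_\so$ accepts exactly the set $\so$ accepts on the coupled draw and, crucially, satisfies promise of payment. Given this, \Cref{lemma:amortization-bound} holds with equality for $\pol_\so$ and the coupling identifies its surrogate-valued reward with $\val_{\feas,\dist_W}(\so)$; combining with the hypothesis on $\so$ and with the bound $\opt(\inst)\le\expost(\feas,\dist_W)$ of \cite{GJSS19} (also a consequence of \Cref{lemma:amortization-bound}: for any realization an optimal policy accepts a feasible set $S$ with each chosen element sitting at the endpoint of its forced walk, so its surrogate-valued reward is $\sum_{i\in S}W_i\le\max_{S'\in\feas}\sum_{i\in S'}W_i$) then gives
\[
\util_\inst(\pol_\so)=\val_{\feas,\dist_W}(\so)\ge\alpha^{-1}\expost(\feas,\dist_W)\ge\alpha^{-1}\opt(\inst).
\]

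For the construction I would first record two facts about surrogate values. Since each $\mc_i$ is a DAG in which non-terminal states have a unique action and terminals are sinks, running $\pol_\so$ on $\mc_i$ just reveals, one step at a time, the single forced random walk from $\sigma_i$ to a terminal $t_i^*$, and $W_i=w_i(t_i^*)$ by definition; moreover $g_i$ is non-increasing along this walk, because $\reachable(s')\subseteq\reachable(s)$ whenever $s\to s'$, and $g_i(t)=w_i(t)$ for a terminal $t$. Hence, for any threshold $\tau$, the event $\{W_i\ge\tau\}$ is exactly the event that $g_i$ stays $\ge\tau$ all the way to $t_i^*$. Now define $\pol_\so$: it first computes all surrogate values $w_i(\cdot)$ (in polynomial time via the procedure of \cite{CCHS24}) and the induced distribution $\dist_W$, then simulates $\so$ on $(\feas,\dist_W)$. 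Whenever $\so$ probes an element $i$ with threshold $\tau$, $\pol_\so$ advances $\mc_i$ as long as its current state $s_i$ is non-terminal and $g_i(s_i)\ge\tau$; it then answers ``$x_i\ge\tau$'' --- whereupon $\so$, and hence $\pol_\so$, accepts $i$ --- if and only if the loop terminates with $g_i(s_i)\ge\tau$, which forces $s_i$ to be terminal. When $\so$ halts, $\pol_\so$ selects its accumulated set. By the two facts above, the answers $\pol_\so$ reports coincide with those $\so$ would receive on the draw $W_i=w_i(t_i^*)$ (this includes the case in which $\so$ revisits an element with a smaller threshold and the walk is simply resumed), so $\pol_\so$'s accepted set equals $\so$'s on that draw, it is feasible since $\so$ only probes $i$ with $S\cup\{i\}\in\feas$, and each accepted element sits at a terminal; moreover $\pol_\so$ is computationally efficient because $\so$ runs in polynomial time and each $\mc_i$, having no cycles, is advanced at most $|\Omega_i|$ times in total.

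It remains to check promise of payment, which makes \Cref{lemma:amortization-bound} tight and hence, via the coupling, gives $\util_\inst(\pol_\so)=\sum_i\sum_{t_i}\probb{\pol_\so\text{ accepts }t_i}\,w_i(t_i)=\val_{\feas,\dist_W}(\so)$ (each element that $\pol_\so$ accepts is accepted at the endpoint $t_i^*$ of its walk, where its surrogate value is $w_i(t_i^*)=W_i$). Fix a trajectory and an element $i$: if $i$ is accepted we are in case~(i); if $\so$ never probed $i$ then $\mc_i$ was never advanced, case~(ii); otherwise the last probe of $i$ ended by advancing from a state $s_i'$ with $g_i(s_i')\ge\tau$ to a state $s_i$ with $g_i(s_i)<\tau$, so $g_i(s_i')>g_i(s_i)$, case~(iii). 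The main obstacle is exactly this interplay: the stopping rule inside each $\mc_i$ must simultaneously (a) answer every adaptive (and possibly randomized) threshold query of $\so$ consistently with one coupled draw from $\dist_W$ --- even across repeated visits with decreasing thresholds --- and (b) never take a last step in some $\mc_i$ that would violate promise of payment; advancing precisely while the index stays above the current threshold is what secures both.
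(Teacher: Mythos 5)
Your proposal is correct and follows essentially the same route as the paper: the paper factors the argument into Lemma~\ref{lemma:connect-bcs-to-mc-cics} (the coupling of $\so$ on the surrogate-value BCS instance with a policy that advances $\mc_i$ while the index stays above the current threshold, yielding promise of payment and hence equality in Lemma~\ref{lemma:amortization-bound}) plus the bound $\opt(\inst)\le\expost(\feas,\dist_W)$ from the same amortization lemma, which is exactly your decomposition. Your additional care about monotonicity of the index along walks and about revisits with lower thresholds matches the paper's (more terse) justification of why the feedback mechanism is consistent with a single draw from $\dist_W$.
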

\begin{proof}
    Let $\inst=(\feas,\mset)$ by any MC-CICS instance and let $\dist=W_1\times \cdots \times W_n$ be the product distribution over the surrogate values of the underlying Markov chains. Also, let $\so$ be any one-sided semi-online algorithm for $\feas$ that satisfies the conditions of the theorem; therefore, if $S_\so(w)\in \feas$ denotes the set of elements that $\so$ accepts when the stochastic values are realized to $w\in\R^n$, we have 
\begin{equation}\label{eq:mc-cics-red-1}
  \val_{\feas,\dist}(\so)=\expect{w\sim\dist}{\sum_{i\in S_\so(w)}w_i}\geq \alpha \cdot \expect{w\sim\dist}{\max_{S\in\feas}\sum_{i\in S}w_i}. 
\end{equation}

Now, let $\pol^*$ be the optimal policy for the MC-CICS instance $\inst$ and note that $\pol^*$ may or may not satisfy promise of payment; in any case, by using the upper bound of~\Cref{lemma:amortization-bound} we have that
\begin{equation}\label{eq:mc-cics-red-2}
  \opt(\inst) \leq \sum_{i=1}^n\sum_{\traj_i\in \trajset(\mc_i)} \probb{\traj_i \text{ realized in }\mc_i \text{ and }\pol^*\text{ accepts }t(\traj_i)}\cdot w_i(\traj_i) \leq \expect{w\sim\dist}{\max_{S\in\feas}\sum_{i\in S}w_i}
\end{equation}
where the last inequality follows from the fact that the set of Markov chains accepted by $\pol^*$ lies in $\feas$ with probability $1$, and therefore the best case for $\pol^*$ is to always accept the feasible set of realized surrogate values that achieve maximum total value. The proof is completed by combining inequalities~\eqref{eq:mc-cics-red-1} and~\eqref{eq:mc-cics-red-2} with the policy $\pol_\so$ of~\Cref{lemma:connect-bcs-to-mc-cics}, as the computational efficiency of $\pol_\so$ is guaranteed.
\end{proof}
As a corollary, \Cref{lemma:semi-online-implies-mc-cics} and \Cref{lemma:frugal-vs-semionline} together imply that frugal algorithms can be used to construct efficient policies for MC-CICS, a result that was previously proved by \cite{GJSS19}. Our work strengthens this result by constructing good policies from any one-sided semi-online algorithm, not just those corresponding to frugal algorithms. In the following, we further argue that the converse statement is also true.

\begin{theorem}
\label{thm:mc-cics-converse}
    Fix any constraint $\feas$ and any $\epsilon>0$. Given a computationally-efficient policy $\pol$ for MC-CICS that $\alpha$-approximates $\opt(\inst)$ for all instances $\inst=(\feas,\mset)$ and satisfies promise of payment, there exists a computationally efficient one-sided semi-online algorithm $\so_\pol$ that $(\alpha-\epsilon)$-approximates $\expost(\feas,\dist)$ for all distributions $\dist$.
\end{theorem}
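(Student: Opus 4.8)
The plan is to reduce the BCS instance $(\feas,\dist)$ to an auxiliary MC-CICS instance $\inst = (\feas,\mset)$, run the given policy $\pol$ on $\inst$, and transcribe each of $\pol$'s probing steps into one threshold query of a semi-online algorithm $\so_\pol$ for $(\feas,\dist)$. The delicate design choice is the shape of the Markov chains: since the semi-online model only reveals whether $x_i$ clears a threshold and \emph{forces} acceptance on a positive answer, each chain must leak $x_i$ one threshold comparison at a time, from the top down. Assuming (as is standard for computational statements) that each $\disti_i$ is finitely supported, with support points $v^{(i)}_1 < \cdots < v^{(i)}_m$, we take $\mc_i$ to be a ``reverse staircase'': its non-terminal states are $u^{(m)}\to u^{(m-1)}\to\cdots\to u^{(1)}$, where state $u^{(k)}$ encodes the knowledge ``$x_i\in\{v^{(i)}_1,\dots,v^{(i)}_k\}$'', and the unique action at $u^{(k)}$ costs a tiny $\delta>0$ and, with the appropriate conditional probability, either moves to a terminal $t^{(k)}$ of value $v^{(i)}_k$ (the event $x_i\ge v^{(i)}_k$, i.e.\ $x_i = v^{(i)}_k$) or to $u^{(k-1)}$ (the event $x_i< v^{(i)}_k$); the root is $u^{(m)}$. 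Now $\so_\pol$ maintains a simulation of $\pol$ on $\inst$, and each time $\pol$ probes $\mc_i$ from state $u^{(k)}$ it issues the semi-online query $(i,v^{(i)}_k)$ and feeds the resulting bit (whether $x_i$ is above or below $v^{(i)}_k$) back into the simulation, so that its queries are exactly driven by the realized values driving $\pol$'s transitions.

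The heart of the argument is that, under promise of payment, the set $S(\so_\pol)$ accepted by $\so_\pol$ coincides with the set $S_\pol$ selected by $\pol$. On the one hand, $\so_\pol$ accepts $i$ exactly when some query $(i,v^{(i)}_k)$ returns ``above'', which happens exactly when $\mc_i$ reaches a terminal $t^{(k)}$. On the other hand, if $\mc_i$ reaches a terminal then promise of payment forces $\pol$ to select $i$: taking $\delta$ small enough (relative to $\min_k (v^{(i)}_{k+1}-v^{(i)}_k)$ and the support probabilities) guarantees that the water-draining of~\cite{CCHS24} preserves the strict order $w_i(t^{(1)})<\cdots<w_i(t^{(m)})$ of surrogate values; and since the terminals reachable from $u^{(k)}$ are precisely $t^{(k)},t^{(k-1)},\dots,t^{(1)}$, the index of $u^{(k)}$ equals $w_i(t^{(k)})$, i.e.\ equals the index of $t^{(k)}$, so the last step $u^{(k)}\to t^{(k)}$ does not strictly decrease the index and clause (i) of promise of payment must hold. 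Hence $S(\so_\pol)=S_\pol$, which is in particular feasible. Moreover the value $\pol$ collects on $i$ is exactly $v_i(t^{(k_i)}) = X_i$ (reaching $t^{(k)}$ pins down $x_i = v^{(i)}_k$), so $V_\pol = \sum_{i\in S_\pol} X_i$ and therefore $\val_{\feas,\dist}(\so_\pol) = \expectt{\sum_{i\in S(\so_\pol)} X_i} = \expectt{V_\pol} = \util_{\inst}(\pol) + \expectt{C_\pol} \ge \util_{\inst}(\pol) \ge \opt(\inst)/\alpha$.

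Finally, a policy that fully probes every chain and then selects the best feasible subset of the revealed values achieves $\expost(\feas,\dist)$ minus its total expected probing cost, which is at most $\delta n m$; hence $\opt(\inst)\ge \expost(\feas,\dist) - \delta n m$. Picking $\delta$ small enough that $\delta n m \le \tfrac{\epsilon}{\alpha+\epsilon}\,\expost(\feas,\dist)$ — which is possible since a crude positive lower bound on $\expost(\feas,\dist)$ is easily computed from $\dist$ (e.g.\ $\max\{\expectt{X_i^+}:\{i\}\in\feas\}$, positive except in the trivial case) — yields $\val_{\feas,\dist}(\so_\pol)\ge \opt(\inst)/\alpha \ge \expost(\feas,\dist)/(\alpha+\epsilon)$. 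Efficiency is immediate: $\inst$ has size polynomial in the input for fixed $\epsilon$, $\pol$ is efficient by hypothesis, and each probing step of $\pol$ becomes one polynomial-time semi-online query.

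The main obstacle is keeping the simulation faithful. The first gap is the one just used: a forced acceptance in the semi-online model must land on a terminal that $\pol$ would also select — exactly what promise of payment provides, but only after checking that the discretization keeps surrogate values strictly ordered (and after pruning degenerate zero-valued terminals, which never contribute to any selection). The second, subtler gap is that $\so_\pol$ is only \emph{allowed} to issue the query $(i,v^{(i)}_k)$ when $S(\so_\pol)\cup\{i\}\in\feas$; we must therefore argue that $\pol$ may be assumed, without loss, never to probe a chain $\mc_i$ for which $i$ cannot lie in any feasible completion of the already-terminal (hence, by the above, already-committed) chains — since the MDPs are independent, such a probe only wastes cost and can be removed while preserving both promise of payment and the $\alpha$-guarantee. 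Making this ``no wasteful probes'' normalization precise, and checking that it is compatible with the on-the-fly simulation, is the part of the argument that requires the most care.
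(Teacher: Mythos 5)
Your proposal is correct and follows essentially the same route as the paper's proof: a top-down ``staircase'' Markov chain per distribution with vanishing action costs, a simulation in which each probe becomes a threshold query, promise of payment to align the forced acceptances of the two models, and a cost-to-zero argument relating $\opt(\inst)$ to $\expost(\feas,\dist)$. The only (immaterial) difference is that the paper calibrates terminal values as $v(t_i)=w_i+\epsilon/q_i$ so that surrogate values exactly reproduce the target distribution, whereas you use the raw values and absorb the probing cost into the $\epsilon$ slack; your extra care about the feasibility of intermediate queries is a point the paper glosses over, and it in fact resolves itself since any probe that could force an acceptance must, by the feasibility of $S_\pol$ and downward closure, already have $S\cup\{i\}\in\feas$.
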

\begin{proof}
We exploit the same connections between a MC-CICS instance and a BCS instance over the surrogate values of the underlying Markov chains as in the proof~\Cref{lemma:connect-bcs-to-mc-cics}, with one important distinction. After selecting an index $i$ and a threshold $\tau_i$, a one-sided semi-online algorithm must accept it if $W_i\geq\tau_i$ and learns that $W_i<\tau_i$ otherwise. On the other hand, whenever a MC-CICS policy that satisfies promise of payment advances a state of grade $\tau_i$, it must continue advancing until it encounters a state of grade $g<\tau_i$, at which point it has learned that the index of the realized trajectory will be at most $g$. Therefore, MC-CICS policies potentially obtain more information compared to one-sided semi-online algorithms whenever they do not accept an element. We overcome this asymmetry by constructing specific Markov chains whose surrogate values mirror a target distribution while ensuring that the above scenario cannot happen.

Formally, let $\disti$ be any distribution over real numbers and let $\mathrm{supp}(\disti)=\{w_1,\cdots , w_k\}$ where the values $w_i$ are in increasing order $w_1<w_2<\cdots <w_k$, and let each $w_i$ be realized with some probability $p_i$. Now, let $\mc(\disti)$ be the tree-like Markov chain with state space $S=\{s_2,\cdots , s_k, t_1, t_2,\cdots , t_k\}$, starting state $s_k$ and terminal states $T=\{t_1,\cdots,t_k\}$; we use $s_1=t_1$ for convenience. Each non-terminal state $s_i$ for $i\geq 2$ has a single action of cost $c=\epsilon$ for some $\epsilon>0$ that either transitions to $s_{i-1}$ or to $t_i$, with probabilities $1-q_i$ and $q_i$ respectively, where $q_i=p_i/\sum_{j\leq i}p_j$. Finally, the values at the terminal states are defined as $v_i(t_i)=w_i + \epsilon/q_i$. Using the amortization framework of~\cite{CCHS24}, according to which the grades and surrogate values of a tree-like Markov chain can be defined, it is trivial to verify that the grade of each state $s_i$ is precisely $g(s_i)=w_i$ by the ordering of the $w_i$'s and the hierarchical structure of our constructed Markov chain $\mc(\disti)$. This implies that the distributions of the surrogate value $W$ for $\mc(\disti)$ is precisely $\disti$.

Let $(\feas,\dist)$ by any BCS instance and let $\pol$ be any policy for the MC-CICS instance $\inst=(\feas,\mset)$ where $\mc_i=\mc(\disti_i)$ are constructed as above. Now consider the one-sided semi-online algorithm $\so_\pol$ that simulates the execution of $\pol$ in the following way: whenever $\pol$ advances $\mc_i$ from a state of grade $g_i$, the algorithm $\so_\pol$ will probe the $i$-th element with threshold $\tau_i=g_i$. Assuming that $\pol$ satisfies promise of payment, if the Markov chain transitions to a terminal state the index will not decrease; therefore $\pol$ will be forced to accept this terminal much like the one-sided semi-online algorithm $\so$. Furthermore, whenever this doesn't happen, we transition to a state of lesser index; by construction this index corresponds to the second maximum value in the support of $W_i$ and therefore $\pol$ only learns that $W_i < g_i$, which is the same feedback as $\so_\pol$ obtains. From this, we obtain that
\[\val_{\feas,\dist}(\so_\alg)=\expect{w\sim\dist}{\sum_{i\in \so_\alg(w)}w_i} = \sum_{i=1}^n\sum_{t_i\in T_i} \probb{\pol_\so\text{ accepts }t_i}\cdot w_i(t_i) = \util(\pol)\]
where the last equality follows from promise of payment and~\Cref{lemma:amortization-bound}.

Finally, by the assumption on the approximation ratio of $\pol$ we have that $\util(\pol)\geq \alpha \cdot \opt(\inst)$. To complete the proof we need to show that $\opt(\inst)\geq \expect{w\sim\dist}{\max_{S\in\feas}\sum_{i\in S}w_i}$ i.e. that the optimal policy in $\inst$ matches the performance of the ex post optimal. However, this is not necessarily true; in fact, we know that the opposite direction of the inequality holds from~\Cref{lemma:amortization-bound}. This would hold with equality if $\epsilon=0$ (since all the Markov chains could be advanced freely without loss) but the promise of payment would be trivially satisfied by fully advancing all the Markov chains. Instead, by setting $\epsilon$ to be arbitrarily small, we have that $\opt(\inst)$ gets arbitrarily close to the ex post optimal benchmark, since we can fully advance all the Markov chains to their terminal states up to an arbitrarily small cost. Lastly, we note that our entire construction if efficient (with respect to the sizes of the distributions).
\color{black}
\end{proof}



Next, we prove a counterpart of~\Cref{lemma:semi-online-implies-mc-cics} for general CICS. In particular, we show that approximations of one-sided semi-online algorithms against the {\em ex ante} optimum extend to approximations for the CICS via committing algorithms.

\begin{theorem}\label{lemma:cics-reduction}
    Fix any feasibility constraint $\feas$ and let $\so$ be a one-sided semi-online algorithm that $\alpha$-approximates $\exante(\feas,\dist)$ for all distributions $\dist$. Then there exists a committing policy $\pol_\so$ that $\alpha$-approximates $\opt(\inst)$ for all CICS instances $\inst=(\feas,\mset)$.
\end{theorem}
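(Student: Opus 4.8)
The plan is to reduce general CICS to MC-CICS by fixing, for each element, a single commitment extracted from the ex ante relaxation, and then to invoke the machinery already built for Markov chains. First I would solve the ex ante relaxation of the CICS instance $\inst=(\feas,\mset)$. For each element $i$, the ex ante policy must decide with what probability $q_i$ to ``select'' $i$, and then choose a (local) committing policy $\com_i$ for the MDP $\mc_i$ that maximizes the expected value collected from $i$ conditioned on that selection probability, minus the expected cost of probing $i$ (independently of the other elements, since feasibility is only enforced in expectation). The vector $(q_1,\dots,q_n)$ lies in $\polytope(\feas)$, and the per-element optimization yields a commitment $\comms=(\com_1,\dots,\com_n)$; let $\inst^\comms=(\feas,\mset^\comms)$ be the induced MC-CICS instance with surrogate-value distribution $\dist=W_1\times\cdots\times W_n$. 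The key point is that $\exante(\feas,\mset)$ equals (or is at most) the ex ante optimum of the BCS instance $(\feas,\dist)$: the cost amortization built into the surrogate values exactly accounts for the probing costs, so selecting $i$ with probability $q_i$ in the BCS instance collects $q_i F_{\disti_i}(q_i)$ in surrogate value, matching the value-minus-cost of the per-element committing policy. Moreover $\opt(\inst)\le \exante(\feas,\mset)$ since any policy for $\inst$ induces ex ante selection probabilities in $\polytope(\feas)$.

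Next I would apply the hypothesis: $\so$ is a semi-online algorithm that $\alpha$-approximates $\exante(\feas,\dist)$ for the BCS instance $(\feas,\dist)$. By \Cref{lemma:connect-bcs-to-mc-cics}, $\so$ yields a computationally-efficient policy $\pol_\so$ for the MC-CICS instance $\inst^\comms$ with $\util_{\inst^\comms}(\pol_\so)=\val_{\feas,\dist}(\so)\ge \exante(\feas,\dist)/\alpha$. Since any policy for $\inst^\comms$ is, by the correspondence described in \Cref{sec:cics}, a committing policy for $\inst$ committing to $\comms$ with the same utility, $\pol_\so$ viewed as a policy for $\inst$ is a committing policy achieving utility at least $\exante(\feas,\dist)/\alpha \ge \opt(\inst)/\alpha$. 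Chaining the inequalities $\opt(\inst)\le \exante(\feas,\mset)\le \exante(\feas,\dist)\le \alpha\cdot\util_{\inst}(\pol_\so)$ gives the claimed $\alpha$-approximation.

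The main obstacle is the middle step: making precise the claim that the ex ante relaxation of $\inst$ is bounded by $\exante(\feas,\dist)$ for the \emph{particular} commitment $\comms$ obtained by per-element optimization. This requires showing that (i) the optimal per-element committing policy for probability $q_i$ collects expected value-minus-cost exactly $q_i F_{\disti_i}(q_i)$ where $\disti_i$ is now the surrogate-value distribution $W_i$ of $\mc_i^{\com_i}$ — i.e. that the surrogate values of \cite{CCHS24} are ``calibrated'' so that accepting the top $q_i$ quantiles of $W_i$ recovers the net utility — and (ii) that one may restrict the ex ante policy to a single commitment per element without loss, which follows because feasibility-in-expectation decouples the elements. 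A subtlety is that different choices of $q_i$ would induce different commitments $\com_i$, so the reduction must commit to $\comms$ \emph{after} solving the ex ante program; I would handle this by defining $\com_i$ to be the optimizer for the specific $q_i^\star$ appearing in the ex ante optimum, and then noting the semi-online algorithm is only ever run against the resulting fixed $\dist$. I expect the bulk of the technical work to be bookkeeping with the surrogate-value definitions from \Cref{sec:cics} and the promise-of-payment / \Cref{lemma:amortization-bound} accounting, all of which is available as a black box.
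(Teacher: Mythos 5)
Your proposal follows essentially the same route as the paper's proof: relax $\opt(\inst)$ to the ex ante feasible optimum, use the per-element decoupling of the ex ante optimal policy to extract a fixed commitment $\comms$, bound the resulting quantity by $\exante(\feas,\dist^\comms)$ over the surrogate values via \Cref{lemma:amortization-bound}, and then convert $\so$ into a committing policy through \Cref{lemma:connect-bcs-to-mc-cics}. The step you flag as the main obstacle is handled in the paper exactly as you suggest --- only the inequality direction (ex ante CICS utility at most the surrogate-value ex ante benchmark) is needed, and it follows directly from the amortization upper bound --- so your sketch is correct as written.
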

\begin{proof}
Let $\inst=(\feas,\mset)$ be any CICS instance. We begin by using~\Cref{lemma:exante-compute} to efficiently determine a commitment $\comms\in\comset(\inst)$ such that $\exanteopt(\inst) = \exanteopt(\inst^\comms)$. By definition, we have that $\opt(\inst)\leq \exanteopt(\inst)$. Furthermore, as as a direct consequence of~\Cref{lemma:amortization-bound}, we have that $\exanteopt(\inst^\comms)\leq \exanteopt(\feas,\dist^\comms)$ where $\dist^\comms$ denotes the product distribution over the surrogate values of the Markov chains in $\inst^\comms$. Therefore, we have efficiently determined a commitment $\comms\in\comset(\inst)$ such that $\opt(\inst)\leq \exanteopt(\feas,\dist^\comms)$.

We now use~\Cref{lemma:connect-bcs-to-mc-cics}, instantiated over $\inst^\comms$ and a one-sided semi-online algorithm $\so$ for $\feas$; note that the resulting MC-CICS policy $\pol_\so$ for $\inst^\comms$ is also a committing policy for $\inst$. Furthermore, by the guarantees of~\Cref{lemma:connect-bcs-to-mc-cics} and our upper bound on $\opt(\inst)$, we obtain that this policy will satisfy an approximation ratio of at least
$\val_{\feas,\dist^\comms}(\so)/\exanteopt(\feas,\dist^\comms)$ from which the proof follows assuming that $\so$ satisfies the required guarantees with respect to the ex ante optimum.

\color{black}
\end{proof}

\noindent Finally, by combining~\Cref{lemma:cics-reduction} with~\Cref{lemma:semi-online-and-eafopi}, the following is immediate.
\begin{corollary}\label{cor:eafopi-bounds-comgap}
    For any CICS instance $\inst=(\feas,\mset)$, $\comgap(\inst)\geq \text{Ex Ante Free-Order PI}(\feas)$.
\end{corollary}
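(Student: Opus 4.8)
The plan is to prove the corollary by chaining the two results that precede it, since it is stated to be \emph{immediate} from \Cref{lemma:cics-reduction} and \Cref{lemma:semi-online-and-eafopi}. First I would fix an arbitrary CICS instance $\inst=(\feas,\mset)$ and let $\alpha := \text{Ex Ante Free-Order PI}(\feas)$. By \Cref{lemma:semi-online-and-eafopi}, this quantity equals the frugal correlation gap $\frucorgap(\feas) = \max_\dist \exante(\feas,\dist)/\semi(\feas,\dist)$, so there is (for every $\dist$) a semi-online algorithm $\so$ achieving $\val_{\feas,\dist}(\so) \ge \alpha^{-1}\exante(\feas,\dist)$; i.e.\ $\so$ is an $\alpha$-approximation to $\exante(\feas,\dist)$ for all $\dist$.

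Next I would invoke \Cref{lemma:cics-reduction} with this $\so$: it yields a committing policy $\pol_\so$ that $\alpha$-approximates $\opt(\inst)$ for the instance $\inst$, i.e.\ $\util_\inst(\pol_\so) \ge \alpha^{-1}\opt(\inst) = \alpha^{-1}\max_\pol \util_\inst(\pol)$. Since $\pol_\so$ belongs to $\comset(\comms)$ for some commitment $\comms\in\comset(\inst)$, it is in particular a feasible choice in the maximization defining the denominator of $\comgap(\inst)$, so
\[
\max_{\comms\in\comset(\inst)}\max_{\pol\in\comset(\comms)}\util_\inst(\pol) \;\ge\; \util_\inst(\pol_\so) \;\ge\; \alpha^{-1}\,\max_\pol \util_\inst(\pol).
\]
Rearranging gives $\comgap(\inst) = \big(\max_\pol\util_\inst(\pol)\big)\big/\big(\max_{\comms}\max_{\pol\in\comset(\comms)}\util_\inst(\pol)\big) \le \alpha$, which is the claim.

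There is essentially no obstacle here — the content is entirely in \Cref{lemma:cics-reduction} (the CICS-to-BCS reduction via surrogate values and the ex ante relaxation) and \Cref{lemma:semi-online-and-eafopi} (the collapse of semi-online and free-order online against the ex ante benchmark), both proved earlier. The only point that needs a word of care is the quantifier order: \Cref{lemma:cics-reduction} requires a \emph{single} semi-online algorithm that works for \emph{all} distributions $\dist$, whereas \Cref{lemma:semi-online-and-eafopi} is phrased as a worst-case ratio. I would note that the ratio bound means precisely that for each $\dist$ the optimal semi-online value is within $\alpha$ of $\exante(\feas,\dist)$, and since the semi-online model lets the algorithm adapt to the known $\dist$, this is exactly the uniform-in-$\dist$ $\alpha$-approximation guarantee that \Cref{lemma:cics-reduction} consumes. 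With that observation in place the corollary follows directly.
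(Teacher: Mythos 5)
Your proof is correct and follows exactly the paper's route: the paper also derives the corollary immediately by combining \Cref{lemma:semi-online-and-eafopi} (semi-online matches free-order against the ex ante benchmark) with \Cref{lemma:cics-reduction} (a semi-online $\alpha$-approximation of the ex ante optimum yields an $\alpha$-approximate committing policy), and your observation that the committing policy lower-bounds the denominator of $\comgap(\inst)$ is the same implicit final step. The remark on the quantifier order is a fine clarification but not a deviation from the paper's argument.
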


\section{Computing an Ex Ante Optimal Policy for CICS}\label{sec:cics-exante}

In this section, we consider the ex ante relaxation of CICS and prove~\Cref{lemma:exante-compute}. Fix any CICS instance $\inst$ and let $\pol^*$ be any optimal ex ante feasible policy for $\inst$. We use $q_i$ to denote the probability that $\pol^*$ accepts the $i$-th MDP upon terminating. By definition of ex ante feasibility, we have that $q\in\polytope(\feas)$. As~\cite{BLW25} observe, subject to accepting the $i$-th MDP with probability $q_i$, the policy's actions within the MDP can be determined independently of its actions and states in other MDPs due to the independent randomness. In other words, the policy can independently optimize its net reward from each MDP $\mc_i$ subject to the ex ante acceptance probability $q_i$. This is the key property that enables efficient computation of ex ante optimal policies in contrast to computing optimal ex post feasible policies, which is known to be a computationally hard problem even in special cases~\citep{FLL22}. We formally capture this notion of independent optimization via the following concept of local games.



\begin{definition}[Local Games]
    Let $\mc$ be any MDP. We consider the local game corresponding to an instance $\inst$ of CICS comprised only by $\mc$ and the (single-selection) constraint $\feas=\{\emptyset,\mc\}$. For any acceptance constraint $q\in [0,1]$, we use $h_\mc(q)$ to denote the maximum utility obtained across all policies $\pol$ for $\inst$ that accept a terminal state from $\mc$ with probability at most $q$, that is, $\probb{\mc\in S_\pol} \leq q$. 
\end{definition}

Observe that by associating accepting the empty set in the local game $(\mc_i,q_i)$ to a halting action, $h_{\mc_i}(q_i)$ captures precisely the (optimal) utility that an ex ante feasible policy can extract from the $i$-th MDP subject to accepting it with probability at most $q_i$. This leads to the following observation:
\begin{fact}
    Let $\inst=(\feas,\mset)$ be any CICS instance. Then, the utility of the optimal ex ante feasible policy for $\inst$ is 
\[\exanteopt(\inst):=\max_{q\in \polytope(\feas)}\sum_{i=1}^n h_{\mc_i}(q_i).\]
\end{fact}

Our objective in this section will be to find an optimal solution $q^*\in\polytope(\feas)$ for the above program, as well as the local policies that achieve the optimal utilities $h_{\mc_i}(q^*_i)$ for all indices $i\in [n]$. It is not hard to see that each function $h_{\mc_i}(q_i)$ is concave\footnote{For $q,q',\lambda\in [0,1]$, the local policy that is equal to a maximizer of $h_\mc(q)$ with probability $\lambda$ and to a maximizer of $h_\mc(q')$ otherwise will only accept a terminal state with probability at most $\lambda q + (1-\lambda)q'$ and therefore $h_\mc(\lambda q + (1-\lambda)q') \geq \lambda h_\mc(q) + (1-\lambda)h_\mc(q')$.} and therefore the same is true for the function $h(q)=\sum_{i=1}^nh_{\mc_i}(q_i)$. Furthermore, we will be assuming throughout that the combinatorial constraint $\feas$ generates a polytope $\polytope(\feas)$ that admits an efficient separation oracle; this is the case for all the constraints that we study in this work. Therefore, efficient computation of $\exanteopt(\inst)$ reduces to efficient computation of $h_{\mc_i}(q_i)$. 

We also note that the local policies achieving the optimal utilities $h_{\mc_i}(q_i)$ could potentially take actions in $\mc_i$ that depend on the entire trajectory of $\mc_i$ up until this point. In contrast, we wish to establish the existence of optimal \textit{memoryless policies} whose actions only depend on the current state, as such policies correspond to valid commitments for the underlying MDP. Shifting our attention to these sub-problems, our main contribution is the following:

\begin{lemma}\label{thm:quantile-computation}
    Fix any local game $(\mc,q)$. Then, we can compute $h_\mc(q)$ as well as a commitment $\pi$ for $\mc$ such that $h_{\mc}(q)=h_{\mc^\pi}(q)$ in time $\mathcal{O}\big(\mathrm{poly}(|\mc|)\big)$. 
\end{lemma}
From the above discussion, the efficient computation of $h_\mc(q)$ for a local game directly implies efficient computations of $\exanteopt(\inst)$ for a CICS instance $\inst$. Furthermore, since the local policies applied to each individual MDP $\mc_i$ are committing policies $\pi_i$, we also obtain that $\exanteopt(\inst)=\exanteopt(\inst^\comms)$ where $\comms=(\pi_1,\ldots, \pi_n)$ and the proof of~\Cref{lemma:exante-compute} follows. The remainder of this section is dedicated towards the proof of~\Cref{thm:quantile-computation}.

\paragraph{Comparison to Previous Work.} We note that up until~\Cref{thm:quantile-computation}, our approach follows precisely the same steps as in~\cite{BLW25}. Their counterpart to~\Cref{thm:quantile-computation} shows that there exists an FPTAS for DAG-MDPs that is obtained by estimating each $h_{\mc_i}(q_i)$ in an inductive way. In contrast, our result also applies to general MDPs that might contain directed cycles, and also provides an exact solution. Our result is obtained by formulating the computation of $h_{\mc_i}(q_i)$ as a constrained MDP, and using standard techniques from this literature, namely an occupation-measure LP, to solve it. We refer the reader to the manuscript of~\cite{Altman1999} for more details on constrained MDPs and related techniques. 

\subsection{Computing Optimal Solutions for Local Games}

From now on, fix any local game $(\mc,q)$. By a slight abuse of notation, we augment the state space $\Omega$ of $\mc$ to include an absorbing state $s_\mathrm{abs}$ from which no more actions can be taken. We then augment the action sets and transition probabilities of all states in the following manner:
\begin{itemize}
    \item For any non-terminal state $s\in \Omega\setminus T$, we re-label the action set from $A(s)$ to $A_{\mathrm{adv}}(s)$ and use $A(s)=A_{\mathrm{adv}}(s)\cup\{\textsf{halt}\}$; in other words, every non-terminal state is characterized by a (non-empty) set of advancing actions $a\in A_{\mathrm{adv}}(s)$ (each at a non-negative cost $c(a)\geq 0$ leading to a random state $s'$ with probability $\Pi(s,a,s')$), and a halting action of cost $c(\textsf{halt})=0$, leading to $s_\mathrm{abs}$ with probability $1$.

    \item For any terminal state $s\in T$, we denote $A(s)=\{\textsf{accept},\textsf{halt}\}$. Both actions have a cost of $0$ and lead to the absorbing state $s_\mathrm{abs}$ with probability $1$. The accept action also produces a reward $v(s)$.
\end{itemize}

Note that any policy for $\mc$ can be described as a protocol that given the current history and state, specifies (potentially randomly) which action should be taken, until the absorbing state is reached (corresponding to having either halted or accepted a terminal state). To account for the utility of this protocol, we define
\[u(s,a) := \begin{cases}
    - c(a) &\text{ if } s\in \Omega\setminus T \text{ and } a\in A_{\mathrm{adv}}(s)\\
    \; \; v(s) &\text{ if } s\in T \text{ and } a=\textsf{accept} \\
    \; \; \; 0 &\text{ if } s\in \Omega \text{ and } a=\textsf{halt} \\
\end{cases}.\]
Equipped with the above, we can now formulate a linear program for the local game $(\mc,q)$:

\begin{equation}
\label{eq:LP}
\begin{aligned}
\text{maximize } & \sum_{s \in \Omega} \sum_{a \in A(s)} z(s,a)\, u(s,a) \\
\text{subject to } &
\sum_{a \in A(s)} z(s,a)
-
\sum_{s' \in \Omega} \sum_{a' \in A(s')} z(s',a') \cdot \Pi(s',a',s)
= \mathbbm{1}[s = \sigma] \quad \forall s \in \Omega,  \\
& \sum_{s \in T} z(s,\textsf{accept}) \le q, \\
& z(s,a) \ge 0 \quad \forall s\in \Omega,a\in A(s).
\end{aligned}
\end{equation}

The above formulation is known as an occupation-measure LP. The variables $z(s,a)$ denote the expected number of times a policy takes action $a\in A(s)$ on state $s\in\Omega$ and therefore the objective precisely captures the utility of this policy. The first constraint ensures that the trajectory of the policy starts on the root state $\sigma$ and that the law of conditional expectation is satisfied in every state. The second constraint ensures that the trajectory ends by accepting a terminal state with probability at most $q$, and therefore the policy is feasible for the local game $(\mc,q)$. From these connections, the following is immediate and the proof is omitted:
\begin{fact}
    Let $V^*$ be the value of the optimal solution for LP~\eqref{eq:LP}. Then, $h_\mc(q)\leq V^*$.
\end{fact}

We also note that the size of the above LP is linear to the size of the MDP, and therefore we have efficient access to an optimal solution $z^*$. The following lemma establishes that this optimal solution can be transformed into a memoryless policy for $(\mc,q)$ without any loss in utility. Together, these imply the proof of~\Cref{thm:quantile-computation}.

\begin{lemma}
    There exists a memoryless policy for $(\mc,q)$ that achieves utility $V^*$.
\end{lemma}
\begin{proof}
    For each state $s\in \Omega$, let $N^*(s):=\sum_{a\in A(s)}z^*(s,a)$ denote the expected number of visits in the optimal solution. We define the memoryless policy $\phi$ that upon reaching state $s\in \Omega$, takes action $a\in A(s)$ with probability $\phi(s,a)=z^*(s,a)/N^*(s)$; if $N^*(s)=0$ for some state $s$, we instead set $\phi(s,\textsf{halt})=1$. Note that since positive utility can only be obtained by accepting terminal states, after which we transition to an absorbing state and the game ends, any optimal solution to the above LP will produce measures $N^*(s)$ that are finite (i.e. the expected number of steps to end the game is finite). Therefore, we have indeed defined a valid memoryless policy.\footnote{To transform this memoryless policy into the commitment specified in~\Cref{thm:quantile-computation}, for each state $s\in\Omega\setminus T$ and each $a\in A_{\mathrm{adv}}(s)$ we set $\pi(s,a)=\phi(s,a)/(1-\phi(s,\mathrm{halt}))$ i.e. we consider the probability that $\phi$ takes action $a$ conditioned on not halting. Is $\phi(s,\mathrm{halt})=1$ for some state $s$ then this state will never be advanced and we can commit to any action without loss.} 
    
    Next, let $z_\phi$ denote the occupancy measure of policy $\phi$, that is, $z_\phi(s,a)$ is the expected number of times that policy $\phi$ takes action $a$ from state $s\in \Omega$. Furthermore, let $N_\phi(s)=\sum_{a\in A(s)}z_\phi(s,a)$ denote the expected number of times that state $s$ is visited by $\phi$. Note that by definition of $\phi$, we have $z_\phi(s,a)=N_\phi(s)\cdot \phi(s,a)$. Then, the utility achieved by $\phi$ can be written as
    \[\util (\phi):=\sum_{s\in\Omega} N_\phi(s) \sum_{a\in A(s)}\phi(s,a)u(s,a) = \sum_{s\in \Omega:N^*(s)\neq 0}\frac{N_\phi(s)}{N^*(s)}\sum_{a\in A(s)}z^*(s,a)u(s,a)\]
    with the second equality following from the fact that if $N^*(s)=0$ for some state $s$, then $z^*(s',a)=0$ for all states $s'\in \Omega$ for which $\Pi(s',a,s)\neq 0$ and therefore policy $\phi$ will never reach $s$. Furthermore, since accepting a terminal state results to transitioning to the absorbing state and the game ending, the probability that $\phi$ accepts a terminal state is given by
    \[\probb{\phi \text{ accepts a terminal state}} = \sum_{s\in T} N_\phi(s)\phi(s,\textsf{accept}) = \sum_{s\in T:N^*(s)\neq 0} \frac{N_\phi(s)}{N^*(s)}z^*(s,\textsf{accept}).\]

    Therefore, to prove the lemma, it suffices to show that $N^*(s)=N_\phi(s)$ for all states $s\in \Omega$. Since $z^*$ is feasible for the LP, we have that for all $s\in \Omega$,
    \begin{align*}
    \mathbbm{1}[s = \sigma]  &= \sum_{a \in A(s)} z^*(s,a) - \sum_{s' \in \Omega} \sum_{a' \in A(s')} z^*(s',a') \cdot \Pi(s',a',s) \\
    &= N^*(s) - \sum_{s' \in \Omega}N^*(s') \sum_{a' \in A(s')} \phi(s',a') \cdot \Pi(s',a',s) \\
    \end{align*}
    On the other hand, the memoryless policy $\phi$ for $\mc$ can be equivalently viewed as a random trajectory on the Markov chain $\mc_\phi$ where upon reaching a state $s$, a random action is sampled automatically from $\phi(s,\cdot)$ until the absorbing state is reached. Given that the starting state for $\phi$ is $\sigma$, the expected visit times $N_\phi(s)$ will therefore satisfy 
      \begin{align*}
    \mathbbm{1}[s = \sigma]  &=  N_\phi(s) - \sum_{s' \in \Omega}N_\phi(s') \sum_{a' \in A(s')} \phi(s',a') \cdot \Pi(s',a',s). 
    \end{align*}
    Therefore, $N_\phi$ and $N^*$ satisfy the exact same set of linear equations, whose solution corresponds to the expected visit times of a trajectory in a Markov chain. Assuming that the Markov chain is absorbing (i.e. an absorbing state is reachable from every state of the chain with non-zero probability), this solution is unique and therefore $N_\phi(s) = N^*(s)$ for all $s\in \Omega$ as desired. This is clearly true as every action in the MDP that doesn't halt or accept a terminal state (i.e. any action that doesn't reach the absorbing state) has a negative utility (equal to the action cost) and therefore $\phi$ never reaching the absorbing state would correspond to $z^*(s,\textsf{halt}) =0$ for all $s\in\Omega$ and $z^*(s,\textsf{accept})=0$ for all $s\in T$, implying that $V^*\leq 0$ -- in that case, we can immediately halt instead and still achieve a utility of $0$.
\end{proof}

\section{One-Sided Semi-Online Matching Is Query-Commit Matching}\label{sec:query-commit-semi-online}

In this section, we give additional results on the frugality of the bipartite and general matching constraints. Note that since any matching constraint is a $2$-system, it inherits a frugal gap of $\frac{1}{2}$ by~\Cref{table:res}. We improve on this bound by directly comparing the performance of any one-sided semi-online algorithm against the ex-post optimum matching. To leverage this, we turn our attention to the well-studied \emph{query-commit} model for stochastic weighted (bipartite) matching, and show that designing a good one-sided semi-online algorithm is equivalent to designing a good query-commit algorithm.

In the query-commit model, introduced by~\cite{chen2009approximating}, the algorithm is given a graph $G = (V, E)$, where every edge $e \in E$ has a weight $w_e \ge 0$ and exists independently with probability $p_e$. During its execution, the algorithm may (adaptively) query any edge to see if it exists or not. If it does, the algorithm must commit to including this edge in its output set. At the end, the algorithm must output a valid matching of the graph $G$. There have been many exciting developments in this model~\citep{gamlath2019beating,derakhshan2023beating,chen2025significance,huang2025edge,fu2021random,macrury2024random}, with the state-of-the-art competitive ratios being $0.659$ for bipartite matching~\citep{huang2025edge}, and $0.535$ for general matching~\citep{macrury2024random}.

In the context of the Bayesian Combinatorial Selection problem, a one-sided semi-online algorithm on an instance where every distribution is Bernoulli behaves exactly as that of a query-commit algorithm. Therefore, if an $\alpha$-competitive one-sided semi-online algorithm exists, then it can be trivially modified into an $\alpha$-competitive query-commit algorithm. We essentially prove the converse statement, showing that under the (bipartite) matching constraint, one-sided semi-online algorithms and query-commit algorithms are as powerful as each other.

\begin{theorem}
\label{thm:qc-equals-semi-online}
    Given an algorithm $\mathcal{A}$ that is $\alpha$-competitive for the query-commit matching model, for any $\eps > 0$, we can construct an $(\alpha - \eps)$-competitive one-sided semi-online algorithm $\mathcal{A'}$ for the stochastic matching problem.
\end{theorem}

Using previous results for query-commit matching, we get the following corollary.

\begin{corollary}
    For any $\eps > 0$, there exists a $(0.659 - \eps)$-competitive one-sided semi-online algorithm for BCS under a bipartite matching constraint, and a $(0.535 - \eps)$-competitive one-sided semi-online algorithm for BCS under a general matching constraint.
\end{corollary}

\subsection{Reduction under a Well-Behaved Assumption}

To begin with, we first prove the theorem under a certain \emph{well-behaved} assumption of the algorithm $\mathcal{A}$. We define it as follows.

\begin{definition}[Well-Behaved Query-Commit Algorithm]
    An algorithm $\mathcal{A}$ for the query-commit model is \emph{well-behaved} iff for any instance ($G, \mathbf{w}, \mathbf{p}$), and for any pair of parallel edges $e, e' \in E$ such that $w_e > w_{e'}$, $\mathcal{A}$ must query $e$ before $e'$ on any trajectory of the algorithm.
\end{definition}

The well-behaved assumption states that whenever the algorithm attempts to match two vertices in $G$ that are connected via multiple parallel edges, it does so by probing the edge of maximum weight. The following proves the equivalence between query-commit and one-sided semi-online algorithms under this assumption. 
\begin{lemma}
\label{lem:well-behaved-qc-equals-semi-online}
    Fix any one-sided semi-online instance $(G, \D)$. For any well-behaved algorithm $\mathcal{A}$ that is $\alpha$-competitive for the query-commit model, we can construct a query-commit instance $(G', \mathbf{p}', \mathbf{w}')$ and a one-sided semi-online algorithm $\mathcal{A'}$ such that $\expost(G, \D) = \expost(G', \mathbf{p}, \mathbf{w})$    and $\mathcal{A'}(G, \D) = \mathcal{A}(G', \mathbf{p}, \mathbf{w})$.
\end{lemma}

\begin{proof}
Consider~\Cref{alg:split-support}.
    
\begin{algorithm}[tbh]
\caption{$\textsc{SplitSupport}_\mathcal{A}(G = (V, E), \D)$}
\begin{algorithmic}[1]
\label{alg:split-support}
\Require $\mathcal{A}$ is a well-behaved query-commit algorithm. $(G, \D)$ is a one-sided semi-online stochastic matching instance.
\State Let $G' = (V, E' \gets \emptyset), \mathbf{p}', \mathbf{w}'$ be a new query-commit instance.
\For{$e \in E$}
    \For{$s \in \operatorname{support}(D_e) \cap \R_{> 0}$}
        \State Add a copy $e'_s$ of edge $e$ into $E'$, with $w'_{e'_s} = s$ and $p'_{e'_s} = 1 - \frac{\Pr_{w \sim D_e}[w < s]}{\Pr_{w \sim D_e}[w \le s]}$.
    \EndFor
\EndFor
\State Simulate $\mathcal{A}(G', \mathbf{p}', \mathbf{w}')$.
\While{$\mathcal{A}$ queries some $e'_s$ on $G'$}
    \State Query edge $e$ at threshold $s$ on $G$.
    \State Forward the output of this query to $\mathcal{A}$.
\EndWhile
\end{algorithmic}
\end{algorithm}

We first prove that $\expost(G, \D) = \expost(G', \mathbf{p}, \mathbf{w})$. This comes from the following two observations.
\begin{itemize}
    \item Should the optimal ex-post algorithm choose a copy of the edge $e \in G$ in $G'$, it must choose the available copy with maximum weight, and
    \item For every edge $e \in G$, the distribution of ``maximum weight of an available copy of $e$ in $G'$'' is exactly $D_e$: for every $t \ge 0$,
    \begin{align*}
        &\Pr[\text{maximum weight of an available copy of $e$ in $G'$} \le t] \\
        &= \prod_{\text{all copies $e'_s$ with $s > t$}} (1 - \Pr[\text{$e'_s$ is available}]) \\
        &= \prod_{\text{all copies $e'_s$ with $s > t$}} \frac{\Pr_{s \sim D_e}[w < s]}{\Pr_{w \sim D_e}[w \le s]} \\
        &= \frac{\Pr_{w \sim D_e}[w \le t]}{\Pr_{w \sim D_e}[w \le \max\operatorname{support}(D_e)]} = \Pr_{w \sim D_e}[w \le t]
    \end{align*}
    where the third equality is because every distinct support of $D_e$ above $t$ is present as the weight of some copy, so the product telescopes.
\end{itemize}

Now, we show that $\textsc{SplitSupport}_\mathcal{A}(G, \D) = \mathcal{A}(G', \mathbf{p}, \mathbf{w})$. This follows from the following two observations.
\begin{itemize}
    \item For all edge $e \in E$, $\textsc{SplitSupport}_\mathcal{A}$ selects the edge $e$ under weight $w$ iff $\mathcal{A}$ selects the copy $e'_w$ of $e$. This is because whenever $\mathcal{A}$ queries $e'_w$ and selects it, it must mean that $\mathcal{A}$ has queried all copies of $e$ that have weight larger than $w$ and they were all inactive (due to $\mathcal{A}$ being well-behaved). This means that on $\textsc{SplitSupport}_\mathcal{A}$, it has queried on edge $e$ all possible thresholds above $w$ and failed, but succeeds when querying at threshold $w$, implying $\textsc{SplitSupport}_\mathcal{A}$ selects edge $e$ at weight $w$.
    \item Furthermore, when a copy $e'_s$ with probability $p'_{e'_s}$ is queried in $\mathcal{A}$, by the observation above, $\textsc{SplitSupport}_\mathcal{A}$ knows that edge $e$'s weight cannot be larger than $s$. Therefore, the conditional probability of the query succeeding in $\textsc{SplitSupport}_\mathcal{A}$ is exactly
    \[
        \Pr_{w \sim D_{e}}[w \ge s \mid w \le s] = 1 - \Pr_{w \sim D_{e}}[w < s \mid w \le s] = 1 - \frac{\Pr_{w \sim D_e}[w < s]}{\Pr_{w \sim D_e}[w \le s]} = p'_{e'_s}.
    \]
\end{itemize}
\end{proof}

\subsection{Enforcing the Well-Behaved Assumption under an Instance Assumption}

We next prove that under certain assumptions about the query-commit instance, assuming that $\mathcal{A}$ is well-behaved is without loss of generality.

\begin{lemma}
\label{lem:eq-prob-make-well-behaved}
    Given any algorithm $\mathcal{A}$ for the query-commit model, and an instance $(G, \mathbf{p}, \mathbf{w})$ where every edge $e \in G$ has the same probability of appearance, we can construct a well-behaved algorithm $\mathcal{A}'$ such that $\mathcal{A'}(G, \mathbf{p}, \mathbf{w}) \ge \mathcal{A}(G, \mathbf{p}, \mathbf{w})$.
\end{lemma}

To achieve this goal, we consider the following algorithm that turns an algorithm under such an instance into a well-behaved one.

\begin{algorithm}[tbh]
\caption{$\textsc{MakeBehaved}_\mathcal{A}(G = (V, E), \mathbf{p}, \mathbf{w})$}
\begin{algorithmic}[1]
\label{alg:make-behaved}
\Require $\mathcal{A}$ is any query-commit algorithm. $(G, \mathbf{p}, \mathbf{w})$ is a query-commit instance where for all $e \in E$, $p_e = p$ for some $p \in [0, 1]$.
\State Simulate $\mathcal{A}(G, \mathbf{p}, \mathbf{w})$.
\While{$\mathcal{A}$ queries some $e$ on $G'$}
    \State Query a parallel edge of $e$ in $G$ that has maximum weight among those parallel edges that are unqueried, breaking ties arbitrarily.
    \State Forward the output of the query to $\mathcal{A}$.
\EndWhile
\end{algorithmic}
\end{algorithm}

This stems from the intuition that querying a large-weight parallel edge earlier rather than later avoids committing to a low-weight parallel edge before observing the large-weight one. Furthermore, since edges exist with equal probability, modifying which parallel copy of an edge to query does not change the ``risk'' of committing to such a copy. The detailed proof is deferred to~\Cref{app:proofs}.

The previous lemma can be generalized to the following lemma.

\begin{lemma}
\label{lem:powers-of-eps-make-behaved}
    Given any algorithm $\mathcal{A}$ for the query-commit model, and an instance $(G, \mathbf{p}, \mathbf{w})$, where there exists some $\eps \in (0, 1)$ such that for every edge $e \in E$, $p_e = 1 - (1 - \eps)^{k_e}$ for some $k_e \in \mathbb{Z}_{\ge 0}$, we can construct a well-behaved algorithm $\mathcal{A}'$ such that $\mathcal{A'}(G, \mathbf{p}, \mathbf{w}) \ge \mathcal{A}(G, \mathbf{p}, \mathbf{w})$.
\end{lemma}

At a high level, such an instance can be reduced to an equiprobable instance by duplicating the edge $e$ exactly $k_e$ times, where each copy has an $\eps$ probability of appearance. Observe that the probability that at least one copy of $e$ appears is exactly $p_e = 1 - (1 - \eps)^{k_e}$, so these copies exactly mimic the appearance probability of the edge $e$. For an algorithm that queries a copy of $e$, we simulate the output of this query by querying $e$ with some carefully chosen probability, such that the conditional acceptance probability is exactly $\eps$. The full proof can be found in~\Cref{app:proofs}.

\subsection{Removing the Instance Assumption}

We have shown via~\Cref{lem:well-behaved-qc-equals-semi-online} that given a well-behaved query-commit algorithm, a competitive one-sided semi-online algorithm can be constructed via splitting the support of each edge distribution into many parallel take-it-or-leave-it copies of the same edge. We have also shown via~\Cref{lem:powers-of-eps-make-behaved} that if the reduced query-commit instance satisfies some $\eps$-grid condition, then in fact every query-commit algorithm can be turned well-behaved. This subsection shows that by slightly modifying the original query-commit instance such that the reduced query-commit instance satisfies $\eps$-grid the condition of~\Cref{lem:powers-of-eps-make-behaved}, we only incur a small loss to the competitive ratio, thus finalizing the argument for~\Cref{thm:qc-equals-semi-online}. Our main technical lemma is the following.

\begin{lemma}
Let $G = (V, E)$ be any graph, and $\D$ and $\D' = \D_{-e} \times D'_e$ be two product distributions over the edges $E$, such that only the marginals for the edge $e \in E$ are different in both distributions. Assume that all marginals of $\D$ and $\D'$ have supports in $[0, \Delta]$. Then we have
\[|\expost(G, \D) - \expost(G, \D')| \le |V| \cdot \Delta \cdot d_\mathrm{TV}(D_e, D'_e), \]
and furthermore, for any (one-sided semi-online) decision tree $\mathcal{A}$, we have
\[|\mathcal{A}(G, \D) - \mathcal{A}(G, \D')| \le |V| \cdot \Delta \cdot d_\mathrm{TV}(D_e, D'_e). \]
\end{lemma}

\begin{proof}
    The proof follows directly from the fact that there exists a coupling between $D_e$ and $D'_e$ such that the edge $e$ is initialized with different values with probability exactly $d_\mathrm{TV}(D_e, D'_e)$. Such coupling can be extended to the product distributions $\D$ and $\D'$ such that the initialized instances are different with probability exactly $d_\mathrm{TV}(D_e, D'_e)$. When the instances are the same, the ex-post optimums are the same. When the instances are different, we bound the difference between the ex-post optimums by the difference between the maximum achievable ex-post optimal (which is $\Delta \cdot |V|$) and the minimum achievable ex-post optimal (which is $0$). This shows that $|\expost(G, \D) - \expost(G, \D')| \le |V| \cdot \Delta \cdot d_\mathrm{TV}(D_e, D'_e)$ as claimed. The second statement can be derived similarly on the execution of the decision tree $\mathcal{A}$.
\end{proof}

By applying the same lemma $|E|$ times for each pair of marginals on the same edge, we arrive at the following corollary.

\begin{corollary}
\label{cor:tv-distance}
Let $G = (V, E)$ be any graph, and $\D$ and $\D'$ be two product distributions over the edges $E$. Assume that all marginals of $\D$ and $\D'$ have supports in $[0, \Delta]$. Then we have
\[|\expost(G, \D) - \expost(G, \D')| \le |V| \cdot \Delta \cdot \sum_{e \in E} d_\mathrm{TV}(D_e, D'_e), \]
and furthermore, for any (one-sided semi-online) decision tree $\mathcal{A}$, we have
\[|\mathcal{A}(G, \D) - \mathcal{A}(G, \D')| \le |V| \cdot \Delta \cdot \sum_{e \in E} d_\mathrm{TV}(D_e, D'_e). \]
\end{corollary}

Equipped with~\Cref{lem:powers-of-eps-make-behaved}, the full construction for~\Cref{thm:qc-equals-semi-online} is straightforward: we discretize every marginal distribution so that the TV distance to the original marginal is not large, while also satisfying the condition of~\Cref{lem:powers-of-eps-make-behaved}. We present the full algorithm and proof for~\Cref{thm:qc-equals-semi-online} in~\Cref{app:proofs}.

\newpage
\bibliographystyle{plainnat}
\bibliography{references}

@article{KWG16,
  title={Descending Price Optimally Coordinates Search},
  author={Bobby Kleinberg and Bo Waggoner and E. Glen Weyl},
  journal={Proceedings of the 2016 ACM Conference on Economics and Computation},
  year={2016},
}

@article{AJS20,
  title={The Pandora's Box Problem with Sequential Inspections},
  author={Ali Aouad and Jingwei Ji and Yaron Shaposhnik},
  journal={Available at SSRN 3726167},
  year={2020},
}

@article{D18,
  title={Whether or not to open Pandora's box},
  author={Laura Doval},
    journal = {Journal of Economic Theory},
  year={2018},
  volume={175},
  pages={127-158},
}

@article{G79,
  title={Bandit processes and dynamic allocation indices},
  author={Gittins, John C},
  journal={Journal of the Royal Statistical Society Series B: Statistical Methodology},
  volume={41},
  number={2},
  pages={148--164},
  year={1979},
  publisher={Oxford University Press}
}

@article{GMS08,
  title={Information Acquisition and Exploitation in Multichannel Wireless Networks},
  author={Sudipto Guha and Kamesh Munagala and Saswati Sarkar},
  journal={arXiv},
  year={2008},
  volume={abs/0804.1724},
}

@article{DS24,
  title={Local hedging approximately solves Pandora’s box problems with optional inspection},
  author={Ziv Scully and Laura Doval},
  journal={arXiv},
  year={2024},
  volume={abs/2410.19011},
}

@article{G82,
  title={On a sufficient condition for superprocesses due to whittle},
  author={Kevin D. Glazebrook},
  journal={Journal of Applied Probability},
  year={1982},
  volume={19},
  pages={99 - 110},
}

@article{W80,
    author = {Peter Whittle},
    title = "{Multi-Armed Bandits and the Gittins Index}",
    journal = {Journal of the Royal Statistical Society: Series B (Methodological)},
    volume = {42},
    number = {2},
    pages = {143-149},
    year = {1980},
}

@article{W79,
  title={Optimal Search for the Best Alternative},
  author={Weitzman, Martin L},
  journal={Econometrica},
  pages={641--654},
  year={1979},
}

@book{N73,
  title={Optimal Allocation of Resources Between Research Projects},
  author={Nash, P.},
  year={1973},
  publisher={University of Cambridge}
}

@article{BS13,
author = {Brown, David and Smith, James},
year = {2013},
month = {June},
pages = {644-665},
title = {Optimal Sequential Exploration: Bandits, Clairvoyants, and Wildcats},
volume = {61},
journal = {Operations Research},
}

@article{BC24-survey, author = {Beyhaghi, Hedyeh and Cai, Linda}, title = {Recent Developments in Pandora's Box Problem: Variants and Applications}, year = {2024}, issue_date = {June 2023}, publisher = {Association for Computing Machinery}, address = {New York, NY, USA}, volume = {21}, number = {1}, journal = {SIGecom Exch.}, month = oct, pages = {20–34}, numpages = {15} }

@article{BEFF23,
  title={Pandora's Problem with Combinatorial Cost},
  author={Ben Berger and Tomer Ezra and Michal Feldman and Federico Fusco},
  journal={Proceedings of the 24th ACM Conference on Economics and Computation},
  year={2023}
}

@article{BFLL20,
  title={Pandora's Box Problem with Order Constraints},
  author={Shant Boodaghians and Federico Fusco and Philip Lazos and Stefano Leonardi},
  journal={Proceedings of the 21st ACM Conference on Economics and Computation},
  year={2020},
}

@inproceedings{EHLM19,
  title={Online Pandora's Boxes and Bandits},
  author={Hossein Esfandiari and Mohammad Taghi Hajiaghayi and Brendan Lucier and Michael Mitzenmacher},
  booktitle={AAAI Conference on Artificial Intelligence},
  year={2019}
}

@article{BW24,
author = {Bowers, Robin and Waggoner, Bo},
year = {2024},
journal={arXiv},
volume={abs/2406.08711},
title = {Matching with Nested and Bundled Pandora Boxes},
}

@InProceedings{HS23,
  author =	{Hoefer, Martin and Schewior, Kevin},
  title =	{{Threshold Testing and Semi-Online Prophet Inequalities}},
  booktitle =	{31st Annual European Symposium on Algorithms (ESA)},
  year =	{2023}
}

@article{AN16,
author = {Asadpour, Arash and Nazerzadeh, Hamid},
title = {Maximizing Stochastic Monotone Submodular Functions},
journal = {Management Science},
volume = {62},
number = {8},
pages = {2374-2391},
year = {2016}
}

@article{CCHS24,
author = {Chawla, Shuchi and Christou, Dimitris and Harlev, Amit and Scully, Ziv},
year = {2024},
journal={arXiv},
volume={abs/2412.03860},
title = {Combinatorial Selection with Costly Information},
}

@article{BLW25,
author = {Bowers, Robin and Lindgren, Elias and Waggoner, Bo},
year = {2025},
journal={arXiv},
volume={abs/2502.08976},
title = {Prophet Inequalities for Bandits, Cabinets, and DAGs},
}

@inproceedings{S17,
  title={The Price of Information in Combinatorial Optimization},
  author={Sahil Singla},
  booktitle={ACM-SIAM Symposium on Discrete Algorithms},
  year={2017},
}

@inproceedings{GJSS19,
  title={{The Markovian Price of Information}},
  author={Anupam Gupta and Haotian Jiang and Ziv Scully and Sahil Singla},
  booktitle={Conference on Integer Programming and Combinatorial Optimization},
  year={2019},
}

@article{DTW03,
author = {Dumitriu, Ioana and Tetali, Prasad and Winkler, Peter},
title = {On Playing Golf with Two Balls},
journal = {SIAM Journal on Discrete Mathematics},
volume = {16},
number = {4},
pages = {604-615},
year = {2003},
}

@inproceedings{CHMS10,
author = {Chawla, Shuchi and Hartline, Jason D. and Malec, David L. and Sivan, Balasubramanian},
title = {Multi-parameter mechanism design and sequential posted pricing},
year = {2010},
booktitle = {Proceedings of the Forty-Second ACM Symposium on Theory of Computing},
}

@inproceedings{Y11,
author = {Yan, Qiqi},
title = {Mechanism design via correlation gap},
year = {2011},
booktitle = {Proceedings of the Twenty-Second Annual ACM-SIAM Symposium on Discrete Algorithms},
}

@article{BK19,
  title={Pandora's Problem with Nonobligatory Inspection},
  author={Hedyeh Beyhaghi and Robert D. Kleinberg},
  journal={Proceedings of the 2019 ACM Conference on Economics and Computation},
  year={2019},
}

@article{FLL22,
  title={Pandora Box Problem with Nonobligatory Inspection: Hardness and Approximation Scheme},
  author={Hu Fu and Jia-Wen Li and Daogao Liu},
  journal={Proceedings of the 55th Annual ACM Symposium on Theory of Computing},
  year={2022},
}

@article{BC22,
  title={Pandora’s Problem with Nonobligatory Inspection: Optimal Structure and a {PTAS}},
  author={Hedyeh Beyhaghi and Linda Cai},
  journal={Proceedings of the 55th Annual ACM Symposium on Theory of Computing},
  year={2022},
}

@article{ADSY12,
author = {Agrawal, Shipra and Ding, Yichuan and Saberi, Amin and Ye, Yinyu},
title = {Price of Correlations in Stochastic Optimization},
year = {2012},
issue_date = {01-02 2012},
publisher = {INFORMS},
journal = {Oper. Res.},
}

@inproceedings{CVZ11,
author = {Chekuri, Chandra and Vondr\'{a}k, Jan and Zenklusen, Rico},
title = {Submodular function maximization via the multilinear relaxation and contention resolution schemes},
year = {2011},
booktitle = {Proceedings of the Forty-Third Annual ACM Symposium on Theory of Computing},
}

@inproceedings{CCPV07,
  title={Maximizing a Submodular Set Function Subject to a Matroid Constraint (Extended Abstract)},
  author={Gruia Calinescu and Chandra Chekuri and Martin P{\'a}l and Jan Vondr{\'a}k},
  booktitle={Conference on Integer Programming and Combinatorial Optimization},
  year={2007},
}

@inproceedings{JMZ21,
  title={Tight Guarantees for Multi-unit Prophet Inequalities and Online Stochastic Knapsack},
  author={Jiashuo Jiang and Will Ma and Jiawei Zhang},
  booktitle={ACM-SIAM Symposium on Discrete Algorithms},
  year={2021},
}

@article{HK82,
  title={Comparisons of Stop Rule and Supremum Expectations of I.I.D. Random Variables},
  author={Theodore P. Hill and Robert P. Kertz},
  journal={Annals of Probability},
  year={1982},
  volume={10},
}

@inproceedings{LS18,
  author       = {Euiwoong Lee and
                  Sahil Singla},
  title        = {Optimal Online Contention Resolution Schemes via Ex-Ante Prophet Inequalities},
  booktitle    = {26th Annual European Symposium on Algorithms, {ESA}},
  year         = {2018},
}

@inproceedings{FSZ15,
  title={Online Contention Resolution Schemes},
  author={Moran Feldman and Ola Svensson and Rico Zenklusen},
  booktitle={ACM-SIAM Symposium on Discrete Algorithms},
  year={2015},
}

@article{CFHOV19,
author = {Correa, Jose and Foncea, Patricio and Hoeksma, Ruben and Oosterwijk, Tim and Vredeveld, Tjark},
title = {Recent developments in prophet inequalities},
year = {2019},
volume = {17},
journal = {SIGecom Exch.},
}

@article{L17,
author = {Lucier, Brendan},
title = {An economic view of prophet inequalities},
year = {2017},
volume = {16},
journal = {SIGecom Exch.},
}

@article{krengel1977semiamarts,
  title={Semiamarts and finite values},
  author={Krengel, Ulrich and Sucheston, Louis},
  journal={Bulletin of the American Mathematical Society},
  volume={83},
  number={4},
  pages={745--747},
  year={1977}
}

@article{samuel1984comparison,
  title={Comparison of threshold stop rules and prophet inequalities},
  author={Samuel-Cahn, Ester},
  journal={The Annals of Probability},
  volume={12},
  number={4},
  pages={1213--1216},
  year={1984}
}

@inproceedings{kleinberg2012matroid,
  title={Matroid prophet inequalities},
  author={Kleinberg, Robert and Weinberg, S Matthew},
  booktitle={Proceedings of the forty-fourth annual ACM symposium on Theory of computing (STOC)},
  pages={123--136},
  year={2012}
}

@inproceedings{chen2009approximating,
  title={Approximating matches made in heaven},
  author={Chen, Ning and Immorlica, Nicole and Karlin, Anna R and Mahdian, Mohammad and Rudra, Atri},
  booktitle={International colloquium on automata, languages, and programming},
  pages={266--278},
  year={2009},
  organization={Springer}
}

@inproceedings{derakhshan2023beating,
  title={Beating (1-1/e)-approximation for weighted stochastic matching},
  author={Derakhshan, Mahsa and Farhadi, Alireza},
  booktitle={Proceedings of the 2023 Annual ACM-SIAM Symposium on Discrete Algorithms (SODA)},
  pages={1931--1961},
  year={2023},
  organization={SIAM}
}

@inproceedings{gamlath2019beating,
  title={Beating greedy for stochastic bipartite matching},
  author={Gamlath, Buddhima and Kale, Sagar and Svensson, Ola},
  booktitle={Proceedings of the Thirtieth Annual ACM-SIAM Symposium on Discrete Algorithms},
  pages={2841--2854},
  year={2019},
  organization={SIAM}
}

@inbook{chen2025significance,
author = {Ziyun Chen and Zhiyi Huang and Dongchen Li and Zhihao Gavin Tang},
title = {Prophet Secretary and Matching: the Significance of the Largest Item},
booktitle = {Proceedings of the 2025 Annual ACM-SIAM Symposium on Discrete Algorithms (SODA)},
chapter = {},
pages = {1371-1401},
year = {2025},
doi = {10.1137/1.9781611978322.42},
URL = {https://epubs.siam.org/doi/abs/10.1137/1.9781611978322.42},
eprint = {https://epubs.siam.org/doi/pdf/10.1137/1.9781611978322.42},
    abstract = { Abstract The prophet secretary problem is a combination of the prophet inequality and the secretary problem, where elements are drawn from known independent distributions and arrive in uniformly random order. In this work, we design 1) a 0.688-competitive algorithm, that breaks the 0.675 barrier of blind strategies (Correa, Saona, Ziliotto, 2021), and 2) a 0.641-competitive algorithm for the prophet secretary matching problem, that breaks the 1 — 1/e ≈ 0.632 barrier for the first time. Our second result also applies to the query-commit model of weighted stochastic matching and improves the state-of-the-art ratio (Derakhshan and Farhadi, 2023). }
}

@article{huang2025edge,
  title={Edge-weighted Matching in the Dark},
  author={Huang, Zhiyi and Sun, Enze and Wu, Xiaowei and Zhao, Jiahao},
  journal={arXiv preprint arXiv:2507.19366},
  year={2025}
}

@inproceedings{fu2021random,
  title={Random order vertex arrival contention resolution schemes for matching, with applications},
  author={Fu, Hu and Tang, Zhihao Gavin and Wu, Hongxun and Wu, Jinzhao and Zhang, Qianfan},
  booktitle={48th International Colloquium on Automata, Languages, and Programming (ICALP 2021)},
  pages={68--1},
  year={2021},
  organization={Schloss Dagstuhl--Leibniz-Zentrum f{\"u}r Informatik}
}

@inproceedings{macrury2024random,
  title={Random-order contention resolution via continuous induction: Tightness for bipartite matching under vertex arrivals},
  author={MacRury, Calum and Ma, Will},
  booktitle={Proceedings of the 56th Annual ACM Symposium on Theory of Computing},
  pages={1629--1640},
  year={2024}
}

@article{macrury2025random,
  title={On (random-order) online contention resolution schemes for the matching polytope of (bipartite) graphs},
  author={MacRury, Calum and Ma, Will and Grammel, Nathaniel},
  journal={Operations Research},
  volume={73},
  number={2},
  pages={689--703},
  year={2025},
  publisher={INFORMS}
}

@article{nuti2025towards,
  title={Towards an optimal contention resolution scheme for matchings},
  author={Nuti, Pranav and Vondr{\'a}k, Jan},
  journal={Mathematical Programming},
  volume={210},
  number={1},
  pages={761--792},
  year={2025},
  publisher={Springer}
}

@article{ezra2022prophet,
  title={Prophet matching with general arrivals},
  author={Ezra, Tomer and Feldman, Michal and Gravin, Nick and Tang, Zhihao Gavin},
  journal={Mathematics of Operations Research},
  volume={47},
  number={2},
  pages={878--898},
  year={2022},
  publisher={INFORMS}
}

@inproceedings{Altman1999,
  title={Constrained Markov Decision Processes},
  author={Eitan Altman},
  year={1999},
  url={https://api.semanticscholar.org/CorpusID:14906227}
}

\newpage
\appendix
\section*{Appendix}

\section{Deferred proofs}\label{app:proofs}
In this chapter of the appendix we present all the proofs that were omitted from the main body.

\subsection{Proof of~\Cref{lemma:exante-compute}}

We begin by providing the definition of the grades $g(\cdot )$ for a Markov chain $\mc$. For any $y\in\R$, consider a penalized game played over $\mc$ where in order to accept a terminal state (and gain its reward) a penalty of $y$ must be paid. Clearly, as $y$ grows to $\infty$, the optimal policy for the penalized game will simply halt at the beginning achieving a utility of $0$. On the other hand, as $y$ approaches $-\infty$, this optimal policy will continue advancing the Markov chain $\mc$ until a terminal state is reached. The grade $g(s)$ of the state $s\in\Omega$ is defined as the minimum value of $y$ for which the optimal policy starting from state $s$ is indifferent towards taking a step in $\mc$ and immediately halting; equivalently, $g(s)$ is the minimum penalty for which immediately halting is optimal.

Up next, we consider a teasing game played over a CICS instance $\inst=(\feas,\mset)$; policies for this game proceed precisely as in CICS; but in order to accept a terminal state from $\mc_i$ that got realized through a trajectory $\traj_i\in\trajset(\mc_i)$ that ends in it, the policy must pay an extra fee of $w_i(\gamma_i)=\min_{s_i\in\gamma_i}g_i(s_i)$. The term teasing refers to the fact that as the trajectory evolves, the minimum grade in the trajectory (and thus our current estimate of the penalty we pay upon accepting) decreases, making the Markov chain more appealing to the policy. We use $\util'_\inst(\pol)$ to denote the utility of a policy $\pol$ in this teasing game. Then, by definition, we have that
\[\util_\inst(\pol) = \util'_\inst(\pol) + \sum_{i=1}^n\sum_{\traj_i\in \trajset(\mc_i)} \probb{\traj_i \text{ realized in }\mc_i \text{ and }\pol\text{ accepts }t(\traj_i)}\cdot w_i(\traj_i)\]
and the proof reduces to proving that $\util'_\inst(\pol)\leq 0$ for all policies $\pol$ with equality if $\pol$ satisfies promise of payment. This is precisely the statement of Lemma A.2 in~\cite{GJSS19} and the proof is ommitted.

\subsection{Proof of~\Cref{lemma:expost-vs-exante}}
Fix any constraint $\feas$ and let $G(\dist):=\expost(\feas,\dist)/\exante(\feas,\dist)$. We will first show that $\corgap(\feas)\geq \min_\dist G(\dist)$. Let $x^*\in\polytope(\feas)$ and $y^*\in\R^n$ be the maximizers for the correlation gap, and consider the Bernoulli product distribution $\dist'=\disti'_1\times\cdots\times\disti'_n$ where $\disti'_i$ takes value $0$ with probability $1-x^*_i$ and value $y^*_i$ otherwise. Since $x^*\in\polytope(\feas)$ and we cannot generate value from $\disti'_i$ with probability larger than $x^*_i$, we have that $\expost(\feas,\dist')=\expectt{\max_{S\in\feas\cap R(x^*)}\sum_{i\in S}y^*_i}$ and $\exante(\feas,\dist')=\sum_{i=1}^nx^*_iy^*_i$. Therefore, we  obtain $\corgap(\feas) = G(\dist') \geq \min_\dist G(\dist)$.

Now, we will prove that $\min_\dist G(\dist)\geq\corgap(\feas)$. Let $\dist^*$ be the maximizer of $G(\dist)$ and let $q^*$ be the selection probabilities in $\exante(\dist^*,\feas)$. Now, consider the Bernoulli product distribution $\dist''=\disti''_1\times\cdots\times\disti''_n$ where $\disti''_i$ takes value $0$ with probability $1-q^*_i$ and value $F_{\disti^*_i}(q^*_i)$ otherwise. By construction, we have that $\exante(\feas,\dist^*)=\exante(\feas,\dist'')$. Furthermore, it is not hard to see that $\expost(\feas,\dist^*)\geq\expost(\feas,\dist'')$; we have simply grouped together all the values in $\disti^*_i$ that lie on the top $q^*_i$ quantile into their conditional expectation, and swapped all other values by $0$. Therefore, we have that $\min_\dist G(\dist) = G(\dist^*) = G(\dist'')\geq \corgap(\feas)$ where the last inequality follows by setting $x=q^*\in\polytope(\feas)$ and $y_i=F_{\disti^*_i}(q^*_i)$ for all $i\in [n]$ in the definition of the correlation gap.

\subsection{Proof of~\Cref{lemma:frugal-vs-semionline}}
Fix any instance $(\feas,\dist)$ of BCS and let $\pol:\R^n\mapsto \feas$ be any $\beta$-frugal algorithm for $\feas$. We will now define a one-sided semi-online algorithm. At the start, we initialize $S=\emptyset$. Then, iteratively:
\begin{enumerate}
    \item For all $i\in [n]$, let $u_i:=\max(\mathrm{supp}(\disti_i))$ be the maximum value in the support of $\disti_i$ and let $u=(u_1,\cdots, u_n)\in\R^n$. Finally, let $i^*$ be the first element not in $S$ that $\pol$ would accept on input $u$; if no such element would be included then we halt.
    
    \item We probe the $i^*$-th element with threshold $u_{i^*}$. If $X_{i^*}=u_{i^*}$ we accept it and set $S=S\cup\{{i^*}\}$. Otherwise, we update our distribution for ${i^*}$ to $\disti_{i^*}=(\disti_{i^*}|X_{i^*}<u_{i^*})$ and repeat step $1$.
\end{enumerate}
Once the above process halts, we return $S$. Let $x=(x_1,\cdots , x_n)$ be any realization from $\dist$. Our objective will be to show that whenever $x$ gets realized, our proposed one-sided semi-online algorithm will accept elements precisely in the same order as $\pol$ does on input $x$; this will immediately provide us with ex-post feasibility as frugal algorithms always accept feasible sets of elements, as well as the approximation result by taking the expectation over $x\sim\dist$ on the frugal guarantee $\sum_{i\in \pol(x)}x_i \geq \beta\cdot\max_{S\in\feas}\sum_{i\in S}x_i$ for all $x\in\R^n$.

Now fix the realization $x$. Observe that by construction, we have that $u_i\geq x_i$ for all $i\in [n]$ throughout the entire execution of the one-sided semi-online algorithm. Let $j$ be the first element that the one-sided semi-online algorithm accepts; again by construction, we know that $u_j=x_j$ on the iteration that $j$ got accepted since we only accept a value when it matches its upper bound. Therefore, by the call to the frugal algorithm during the iteration where $j$ was selected, we know that $j$ is the first element that the frugal algorithm $\pol$ would accept on input $(u_{-j},x_j)$. By the monotonicity of the frugal index function and the fact that $u_i\geq x_i$ for all $i$, this also implies that $j$ is the first element that the frugal algorithm $\pol$ accepts on input $x$.

Thus, we have proven that the first element selected by the one-sided semi-online algorithm will be the same as the first element accepted by $\pol$ on input $x$. Furthermore, since the upper bound for this element will never decrease again after that point and the upper bounds $u_i$ of all other elements will never decrease below their realizations $x_i$, we have that $j$ will be the first element accepted by $\pol$ in all future calls. We can then repeat the exact same argument for the second element that the one-sided semi-online algorithm accepts, then the third, all the way until $\pol$ doesn't accept any more new elements, at which point the one-sided semi-online algorithm halts, proving our initial claim.

\subsection{Proof of~\Cref{lem:eq-prob-make-well-behaved}}
Consider~\Cref{alg:make-behaved}. It is obvious that $\textsc{MakeBehaved}_\mathcal{A}$ is a well-behaved algorithm. We now prove that \[\textsc{MakeBehaved}_\mathcal{A}(G, \mathbf{p}, \mathbf{w}) \ge \mathcal{A}(G, \mathbf{p}, \mathbf{w}).\] We show this by modifying the decision tree of $\mathcal{A}$ under the instance in a series of local steps, the final step of which ends up producing the decision tree of $\textsc{MakeBehaved}_\mathcal{A}$. For a local step, we consider the following modification.
\begin{itemize}
    \item Choose a pair of parallel edges $e, f$ where $w_e > w_f$, and there exists at least one execution path of $\mathcal{A}$ where $f$ is queried before $e$ (or $f$ is queried while $e$ is not). For all action nodes where we query $f$ while having not queried $e$, replace it with querying $e$. For all action nodes where we query $e$ after querying $f$, replace it with querying $f$. Denote this new decision tree to be $\mathcal{A}'$.
\end{itemize}
We first claim that $\mathcal{A}'$ is still a valid decision tree. Note that we did not produce any execution path where we query the same edge twice. In particular, if an execution path queries $e$ before $f$, it stays the same, while if it queries $f$ before $e$, the first query turns into querying $e$ while the second turns into querying $f$. Furthermore, as $e$ and $f$ are parallel edges, the acceptance/rejection at modified action nodes does not affect other nodes for any execution path.

We now show that $\mathcal{A'}(G, \mathbf{p}, \mathbf{w}) \ge \mathcal{A}(G, \mathbf{p}, \mathbf{w})$. As $e$ and $f$ have the same probability of appearance, the only change in performance between the two decision trees lies in the change in gained value when the modified nodes are accepted. In particular, let $N_{e \to f}$ and $N_{f \to e}$ be the sets of action nodes that were changed from querying $e$ to $f$, and from querying $f$ to $e$, respectively. Then,
\begin{align*}
    &\mathcal{A'}(G, \mathbf{p}, \mathbf{w}) - \mathcal{A}(G, \mathbf{p}, \mathbf{w})\\
    &= \sum_{u \in N_{e \to f}} \Pr[\text{$\mathcal{A}$ reaches $u$}] \cdot \Pr[\text{$u$ accepts}] \cdot (w_f - w_e) \\
    &\quad\quad\quad\quad+\sum_{v \in N_{f \to e}} \Pr[\text{$\mathcal{A}$ reaches $v$}] \cdot \Pr[\text{$v$ accepts}] \cdot (w_e - w_f) \\
    &= p(w_e - w_f)\left(\sum_{v \in N_{f \to e}} \Pr[\text{$\mathcal{A}$ reaches $v$}] - \sum_{u \in N_{e \to f}} \Pr[\text{$\mathcal{A}$ reaches $u$}] \right).
\end{align*}
Now, observe that for all $u \in N_{e \to f}$, this (partial) execution path in $\mathcal{A}$ queries $e$ after $f$. Therefore, before reaching $u$, $\mathcal{A}$ must have reached a node $v$ where it queries $f$ while having not queried $e$, or $v \in N_{f \to e}$. This means that
\[\sum_{u \in N_{e \to f}} \Pr[\text{$\mathcal{A}$ reaches $u$}] \le \sum_{v \in N_{f \to e}} \Pr[\text{$\mathcal{A}$ reaches $v$}],\]
and combining with $w_e > w_f$ gives us $\mathcal{A'}(G, \mathbf{p}, \mathbf{w}) - \mathcal{A}(G, \mathbf{p}, \mathbf{w}) \ge 0$ as claimed.

Finally, observe that repeatedly applying the modification step above will end up in producing the decision tree of $\textsc{MakeBehaved}_\mathcal{A}$, while each step improves the performance of the algorithm. Therefore, we have $\textsc{MakeBehaved}_\mathcal{A}(G, \mathbf{p}, \mathbf{w}) \ge \mathcal{A}(G, \mathbf{p}, \mathbf{w})$.

\subsection{Proof of~\Cref{lem:powers-of-eps-make-behaved}}

We first show that any instance $(G , \mathbf{p}, \mathbf{w})$ as above can be reduced to an instance $(G', \mathbf{p'}, \mathbf{w'})$ where every edge $e' \in E'$ has $p_{e'} = \eps$ without affecting the algorithm's performance. Consider~\Cref{alg:eq-prob-reduce}. Note that $\frac{\eps}{1 - (1 - \eps)^{k_e + 1 - i}} \le \frac{\eps}{1 - (1-\eps)^1} = 1$, so the algorithm is well defined.

\begin{algorithm}[tbh]
\caption{$\textsc{EqProbReduce}_\mathcal{A}(G = (V, E), \mathbf{p}, \mathbf{w})$}
\begin{algorithmic}[1]
\label{alg:eq-prob-reduce}
\Require $\mathcal{A}$ is any query-commit algorithm. $(G, \mathbf{p}, \mathbf{w})$ is a query-commit instance where there exists $\eps \in (0, 1)$ such that $p_e = 1 - (1 - \eps)^{k_e}$ for all $e \in E, k_e \in \mathbb{Z}_{\ge 0}$.
\State Let $G' = (V, E' \gets \emptyset), \mathbf{p}', \mathbf{w}'$ be a new query-commit instance.
\For{$e \in E$}
    \State Add $k_e$ copies of $e$ into $E'$ (denoted as $e_i'$ for $i \in [k_e]$), each with probability $p'_{e_i'} = \eps$ and weight $w'_{e_i'} = w_e$.
\EndFor
\State Simulate $\mathcal{A}(G', \mathbf{p}', \mathbf{w}')$.
\Comment{WLOG, assuming $\mathcal{A}$ queries the smallest index copy first.}

\While{$\mathcal{A}$ queries some $e_i'$ on $G'$}
    \State Draw $H_i$ from $\mathrm{Bernoulli}\left(\frac{\eps}{1 - (1 - \eps)^{k_e + 1 - i}}\right)$.
    \If{$H_i = 1$}
        \State Query $e$ on $G$.
        \State Forward the output of the query to $\mathcal{A}$.
    \Else
        \State Forward \textsf{reject} to $\mathcal{A}$.
    \EndIf
\EndWhile
\end{algorithmic}
\end{algorithm}

We prove that $\textsc{EqProbReduce}_\mathcal{A}(G, \mathbf{p}, \mathbf{w}) = \mathcal{A}(G', \mathbf{p}', \mathbf{w}')$. First, $\mathcal{A}$ accepts an edge $e_i'$ in $G'$ iff $\textsc{EqProbReduce}_\mathcal{A}$ accepts an edge $e$ in $G$, both of which connect the same pair of nodes and have the same weight. Furthermore, we show that at any point during the algorithm, the probability of $\mathcal{A}$ accepting an edge $e_i'$, conditioned on $\mathcal{A}$ querying $e_i'$, is exactly $p'_{e_i'} = \eps$.

We show this via induction on the copies of $e \in E$, that being $e_i'$ for $i \in [k_e]$. We overload $k = k_e$. Without loss of generality, assume that $\mathcal{A}$ queries these copies starting from the smallest index first. To simplify the analysis, we also assume that whenever $\mathcal{A}$ rejects $e_i'$, it immediately queries $e'_{i+1}$, since the probability of it querying $e'_{i+1}$ conditioned on rejecting $e_i'$ does not affect the quantity we are calculating.

Define $A_i$ to be the event that $\mathcal{A}$ accepts $e_i'$. We are to prove that for all $i \in [k]$:
\[\Pr\left[A_i \mid \bigcap_{j<i} \bar{A_j}\right] = \eps.\]

When $i = 1$, we have
\[\Pr[A_1] = \Pr[H_1 = 1] \cdot \Pr[\text{edge $e$ exists in $G$}] = \frac{\eps}{1 - (1 - \eps)^k} \cdot (1 - (1 - \eps)^k) = \eps.\]

When $i \ge 2$, note that
\begin{equation}
\label{eq:query-prob}
    \Pr\left[A_i \mid \bigcap_{j<i} \bar{A_j}\right] = \frac{\Pr\left[A_i \cap \bigcap_{j<i} \bar{A_j}\right]}{\Pr\left[\bigcap_{j<i} \bar{A_j}\right]}.\\
\end{equation}
For the numerator of~\Cref{eq:query-prob}, observe that $A_i \cap \bigcap_{j<i} \bar{A_j}$ is true if and only if edge $e$ exists in $G$, $H_i = 1$, and $H_j = 0$ for all $j < i$. Therefore
\begin{align*}
    \Pr\left[A_i \cap \bigcap_{j<i} \bar{A_j}\right] &= \Pr[\text{edge $e$ exists}] \cdot \Pr[H_i = 1] \cdot \prod_{j<i} \Pr[H_j = 0] \\
    &= \Pr[H_i = 1] \cdot \frac{\Pr[H_{i-1}=0]}{\Pr[H_{i-1}=1]} \cdot \Pr[\text{edge $e$ exists}] \cdot \Pr[H_{i-1}=1] \cdot \prod_{j<i-1} \Pr[H_j = 0] \\
    &= \Pr[H_i = 1] \cdot \frac{\Pr[H_{i-1}=0]}{\Pr[H_{i-1}=1]} \cdot \Pr\left[A_{i-1} \cap \bigcap_{j<i-1} \bar{A_j}\right]
\end{align*}
and observe that \[\Pr[H_i = 1] \cdot \frac{\Pr[H_{i-1}=0]}{\Pr[H_{i-1}=1]} = \frac{\eps}{1 - (1 - \eps)^{k+1-i}} \cdot \frac{1 - \frac{\eps}{1 - (1 - \eps)^{k+2-i}}}{\frac{\eps}{1 - (1 - \eps)^{k+2-i}}} = \frac{1 - \eps - (1 - \eps)^{k+2-i}}{1 - (1 - \eps)^{k+1-i}} = 1 - \eps.\]

For the denominator of~\Cref{eq:query-prob}, we have \[\Pr\left[\bigcap_{j<i} \bar{A_j}\right] = \Pr\left[\bar{A}_{i-1} \mid \bigcap_{j<i-1} \bar{A_j}\right] \cdot \Pr\left[\bigcap_{j<i-1} \bar{A_j}\right] = (1 - \eps) \cdot \Pr\left[\bigcap_{j<i-1} \bar{A_j}\right].\]
Therefore,~\Cref{eq:query-prob} turns into
\begin{align*}
    \Pr\left[A_i \mid \bigcap_{j<i} \bar{A_j}\right] = \frac{(1 - \eps) \cdot \Pr\left[A_{i-1} \cap \bigcap_{j<i-1} \bar{A_j}\right]}{(1 - \eps) \cdot \Pr\left[\bigcap_{j<i-1} \bar{A_j}\right]} = \Pr\left[A_{i-1} \mid \bigcap_{j<i-1} \bar{A_j}\right] = \eps.
\end{align*}

We have proven that $\textsc{EqProbReduce}_\mathcal{A}(G, \mathbf{p}, \mathbf{w}) = \mathcal{A}(G', \mathbf{p}', \mathbf{w}')$. Furthermore, since $G'$ is the equiprobability instance satisfying the condition of~\Cref{lem:eq-prob-make-well-behaved}, then if we let $\mathcal{A'} = \textsc{MakeBehaved}_{\mathcal{A}}$, by~\Cref{lem:eq-prob-make-well-behaved} we have \[\textsc{EqProbReduce}_\mathcal{A'}(G, \mathbf{p}, \mathbf{w}) = \mathcal{A'}(G', \mathbf{p}', \mathbf{w}') \ge \mathcal{A}(G', \mathbf{p}', \mathbf{w}').\] Finally, observe that $\textsc{EqProbReduce}_\mathcal{A'}(G, \mathbf{p}, \mathbf{w})$ is well-behaved on $G$. Suppose there are two parallel edges $e, f \in E$ where $w_e > w_f$. Then, as $\mathcal{A'}$ is well-behaved, it must query all copies of $e$ in $G'$ before querying copies $f$. When $\mathcal{A'}$ query the last copy of $e$ in $G'$, i.e. $e'_{k_e}$, $\textsc{EqProbReduce}_\mathcal{A'}$ queries $e$ in $G$ with probability $\frac{\eps}{1 - (1 - \eps)^{k_e + 1 - k_e}} = 1$. This means $\textsc{EqProbReduce}_\mathcal{A'}$ must query $e$ before querying $f$ in $G$, proving its well-behaved property.

\subsection{Proof of~\Cref{thm:qc-equals-semi-online}}
Consider~\Cref{alg:query-commit-reduction}.

\begin{algorithm}[tbh]
\caption{$\textsc{QCReduction}_{\mathcal{A}, \eps}(G = (V, E), \D)$}
\begin{algorithmic}[1]
\label{alg:query-commit-reduction}
\Require $\mathcal{A}$ is a query-commit algorithm. $(G, \D)$ is a one-sided semi-online stochastic matching instance. $\eps \in (0, 1)$ is some parameter.
\State Let $\D'$ be a new empty product measure over $E$.
\For{$e \in E$}
    \State $\sigma_1 \gets 0$. \Comment{$\sigma_i = \Pr_{w \sim D'_e}[w > s_i]$.}
    \State $n \gets |\operatorname{support}(D_e) \cap \R_{> 0}|$.
    \For{$i = 1, 2, \dots, n$}
        \State $s_i \gets$ the $i$-th largest element of $\operatorname{support}(D_e) \cap \R_{> 0}$.
        \State $p_i \gets \Pr_{w \sim D_e}[w = s_i]$.
        \State $x_i \gets \frac{1 - \sigma_i - p_i}{1 - \sigma_i}$. \Comment{Can be shown that $x_i \in [0, 1]$.}
        \State $x_i' \gets \begin{cases} (1 - \eps)^{\lfloor \log_{1 - \eps} (x_i)\rfloor} & \text{if $x_i > 0$} \\ (1 - \eps)^{\lceil \log_{1 - \eps} (\eps)\rceil} & \text{if $x_i = 0$} \end{cases}$. \Comment{$1 \ge x_i' \ge x_i$ as $1 - \eps < 1$.}
        \State $p_i' \gets (1 - \sigma_i)(1 - x_i')$. \Comment{$p_i = (1 - \sigma_i)(1 - x_i)$ so $p_i' \le p_i$.}
        \State Add support $s_i$ into $D'_e$ with probability $p_i'$.
        \State $\sigma_{i+1} \gets \sigma_i + p_i'$.
    \EndFor
    \State Add support $0$ into $D'_e$ with probability $1 - \sigma_{n + 1}$. \Comment{$D'_e$ is now a true probability distribution.}
\EndFor
\State Let $\mathcal{A}' \gets \textsc{SplitSupport}_{\textsc{EqProbReduce}_{\textsc{MakeBehaved}_\mathcal{A}}}$ be a one-sided semi-online algorithm.
\State Simulate $\mathcal{A'}(G, \D')$.
\While{$\mathcal{A'}$ queries some $e \in E$ at threshold $s$}
    \State Query edge $e$ at threshold $s$.
    \State Forward the output of this query to $\mathcal{A'}$.
\EndWhile
\end{algorithmic}
\end{algorithm}
First, we claim that the definition for $D'_e$ is well-defined for any $e \in E$. Let $s_i$ be the $i$-th largest element of $\operatorname{support}(D_e) \cap \R_{> 0}$. It is obvious that $\sigma_i = \Pr_{w \sim D'_e}[w > s_i]$ and $p_i' = \Pr_{w \sim D'_e}[w = s_i]$ at every iteration. We aim to prove that $p_i' \le p_i$ at every iteration, which implies $\sigma_{i+1} = \Pr_{w \sim D'_e}[w > s_{i+1}] \le \Pr_{w \sim D_e}[w > s_{i+1}]$ and at every iteration. We do this by inducting on the iteration $i$, noting that $\sigma_1$ satisfies the condition $\sigma_{1} = 0 = \Pr_{w \sim D_e}[w > s_1]$.

First of all, observe that
\[\frac{1 - \sigma_i - p_i}{1 - \sigma_i} = 1 - \frac{p_i}{1 - \sigma_i} \ge 1 - \frac{\Pr_{w \sim D_e}[w = s_i]}{1 - \Pr_{w \sim D_e}[w > s_i]} = 1 - \frac{\Pr_{w \sim D_e}[w = s_i]}{\Pr_{w \sim D_e}[w \le s_i]} \ge 0,\] where the first inequality uses the inductive hypothesis. Therefore, at line 8, we always have $x_i \in [0, 1]$. Therefore, $x_i'$ is well-defined, and it obviously holds that $x_i' \in [x_i, 1]$. Therefore, since $p_i = (1 - \sigma_i)(1 - x_i)$ by line 8, and $p_i' = (1 - \sigma_i)(1 - x_i')$ by line 10, we have $p_i \le p_i'$ as claimed. This means that at line 14, we have $\sigma_{n+1} \in [0, 1]$, so line 14 is well-defined, while also turning $D'_e$ into a true distribution.

At line 16 and 17, we use $\textsc{SplitSupport}$ to turn the one-sided semi-online stochastic matching instance $G, \D'$ into a query-commit instance $G', \mathbf{p}', \mathbf{w}'$, and then we apply~\Cref{lem:powers-of-eps-make-behaved} on this query-commit instance (by further reducing the instance via $\textsc{EqProbReduce}$ while turning $\mathcal{A}$ well-behaved via $\textsc{MakeBehaved}$). We prove that the query-commit instance $G', \mathbf{p}', \mathbf{w}'$ produced via $\textsc{SplitSupport}$ (\Cref{alg:split-support}) satisfies the condition of~\Cref{lem:powers-of-eps-make-behaved}. Recall that very copy $e'_{s_i}$ in $G'$ has its probability equals
\[1 - \frac{\Pr_{w \sim D'_e}[w < s_i]}{\Pr_{w \sim D'_e}[w \le s_i]} = 1 - \frac{1 - \Pr_{w \sim D'_e}[w > s_i] - \Pr_{w \sim D'_e}[w = s_i]}{1 - \Pr_{w \sim D'_e}[w > s_i]}\]
and note that this probability equals
\[1 - \frac{1 - \sigma_i - p_i'}{1 - \sigma_i} = 1 - \frac{1 - \sigma_i - (1 - \sigma_i)(1 - x_i')}{1 - \sigma_i} = 1 - x_i'\]
by line 10 of~\Cref{alg:query-commit-reduction}. We have $1 - x_i' = 1 - (1 - \eps)^{k}$ for some integer $k \in \mathbb{Z}_{\ge 0}$ from line 9, so the instance $G'$ satisfying the condition of~\Cref{lem:powers-of-eps-make-behaved}. Therefore, $\mathcal{A'}$ defined in line 16 is a well-defined one-sided semi-online algorithm.

Finally, we prove that for every $e \in E$, we have $d_\mathrm{TV}(D_e, D'_e) \le \eps |\operatorname{support}(\mathcal{D}_e)|$. To this end, we show that at every iteration, we have $p_i - \eps \le p_i' \le p_i$. We have two cases.
\begin{description}
    \item[Case $x_i = 0$.] Then, we have $p_i = 1 - \sigma_i$ and $0 \le x_i' \le \eps$ by line 8 and 9, so $p_i' = (1 - \sigma_i)(1 - x_i') \ge p_i(1 - \eps) \ge p_i - \eps$ and $p_i' = (1 - \sigma_i)(1 - x_i') \le p_i$.
    \item[Case $x_i > 0$.] Then, we have $\frac{x_i}{1 - \eps} > x_i' \ge x_i$ by line 9. Since $x_i + \eps \ge \frac{x_i}{1 - \eps}$, we have $p_i' = (1 - \sigma_i)(1 - x_i') \ge (1 - \sigma_i)(1 - x_i - \eps) \ge p_i - \eps$ and $p_i' = (1 - \sigma_i)(1 - x_i') \le (1 - \sigma_i)(1 - x_i) = p_i$.
\end{description}
Let $\Delta$ be the maximum support of any edge, $U = |V| \cdot \Delta \cdot \sum_{e \in E} |\operatorname{support}(\mathcal{D}_e)|$, $\alpha$ be the competitive ratio of $\mathcal{A}$, and $(G', \mathbf{p}', \mathbf{w}')$ be the query-commit instance after applying~\Cref{alg:split-support} on $\D'$. By~\Cref{cor:tv-distance} and \Cref{lem:powers-of-eps-make-behaved,lem:well-behaved-qc-equals-semi-online},
\[
    \expost(G, \D) \le \expost(G, \D') + U \eps  = \expost(G', \mathbf{p}', \mathbf{w}') + U \eps \le \frac{1}{\alpha} \cdot \mathcal{A}(G', \mathbf{p}', \mathbf{w}') + U \eps
\]
and
\[
    \textsc{QCReduction}_{\mathcal{A}, \eps}(G, \D) \ge \mathcal{A'}(G, \D') - U \eps \ge \mathcal{A}(G', \mathbf{p}', \mathbf{w}') - U \eps
\]
so
\[\textsc{QCReduction}_{\mathcal{A}, \eps}(G, \D) \ge \alpha \cdot \expost(G, \D) - (\alpha + 1)U\eps.\]
For any $\eps' > 0$, setting $\eps = \frac{\mathcal{A}(G', \mathbf{p}', \mathbf{w}') \eps'}{U(1 + \alpha^{-1} + \eps')}$ gives us $\frac{\textsc{QCReduction}_{\mathcal{A}, \eps}(G, \D)}{\expost(G, \D)} \ge \alpha - \eps'$.



\section{Commitment Gap for Minimization}\label{app:minimization}

In this section, we discuss the minimization version of CICS (henceforth, min-CICS) where a policy pays for the cost of the actions it takes \textit{plus} the value of the terminal states it accepts. We note that by appropriately adjusting our definitions from maximization to minimization, our entire framework of relating different benchmarks would seamlessly extend to the minimization setting. 

However, this approach would be hopeless; the correlation gap (i.e. the worst case ratio between the costs of the ex post and ex ante optima) is potentially unbounded even under single-selection constraints and therefore the same is true for the frugal correlation gap. As it turns out, this is not a weakness of our framework but rather an inherent issue with min-CICS; the commitment gap of min-CICS is also unbounded. We illustrate such examples in~\Cref{thm:minim}. 

Finally, we note that this doesn't contradict the results of~\cite{CCHS24} who bound the commitment gap for specific variants of min-CICS; it simply states that (in contrast to the maximization setting) a universal bound on the commitment gap of all min-CICS instances, even under single selection constraints, is unattainable.

\begin{theorem}\label{thm:minim}
    There are min-CICS instances $\inst=(\feas,\mset)$ comprised by two MDPs $\mset=(\mc_1,\mc_2)$ and a single-selection constraint $\feas=\{\{1\},\{2\},\{1,2\}\}$ such that the commitment gap is unbounded.
\end{theorem}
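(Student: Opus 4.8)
The plan is to construct an explicit family of min-CICS instances, parametrized by a large quantity $M$, and show that the commitment gap grows without bound as $M \to \infty$. The intuition I would exploit is exactly the one flagged in the appendix discussion: in the minimization setting, the ex ante / correlation gap can be unbounded even for single selection, and the commitment gap inherits this pathology because committing policies are forced to fix their probing behavior in a way that cannot hedge against the joint realization.

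The construction I would use: let $\mc_1$ be a trivial (deterministic) MDP whose single terminal state has a moderate value, say $1$, reachable at zero cost (or at some small fixed cost). Let $\mc_2$ be a Markov chain that, after paying a small probing cost $\epsilon$, reveals a value that is either very small (say $0$) with probability $1 - 1/M$ or very large (say $M$) with probability $1/M$; one can also have a ``skip'' option so that $\mc_2$ can be accepted unprobed at some intermediate cost. The optimal (non-committing) policy first probes $\mc_2$: if the cheap outcome is realized it selects $\mc_2$ (total cost $\approx \epsilon$), and otherwise it selects $\mc_1$ (total cost $\approx 1 + \epsilon$); its expected cost is therefore $O(1)$. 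A committing policy, by contrast, must decide in advance whether the policy interleaving will ever probe $\mc_2$ and how it reacts; the key point is that once the surrogate/amortization machinery is pinned down by a fixed commitment, the policy cannot both ``probe $\mc_2$ to exploit the $1/M$ chance of a tiny value'' and ``fall back to $\mc_1$ cheaply on the common bad event'' — any fixed commitment is forced into either always paying for $\mc_2$'s bad outcome or never getting its good outcome, driving its expected cost to $\Omega(M)$ (or making it select the worse of the two alternatives). I would need to calibrate the parameters — the value at $\mc_1$'s terminal, the probe cost $\epsilon$, the two outcomes of $\mc_2$, and any skip costs — so that the ratio of committing cost to optimal cost is $\Omega(M)$, hence unbounded.

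The steps, in order, are: (1) write down the two MDPs $\mc_1, \mc_2$ precisely, with all states, actions, costs, transition probabilities, and terminal values as functions of $M$ and $\epsilon$; (2) compute $\opt(\inst)$ by exhibiting the natural adaptive policy (probe $\mc_2$, branch on its realization) and arguing optimality — for a two-element single-selection instance this is a finite case analysis; (3) enumerate the commitments $\comms = (\com_1, \com_2)$ — since each MDP is small there are only a handful of distinct commitments — and for each, compute the utility of the best committing policy in $\comset(\comms)$, either directly or via the surrogate-value characterization of \Cref{lemma:amortization-bound} applied to $\inst^\comms$; (4) take the maximum over commitments and show it is still $\Omega(M)$ in cost, hence $\comgap(\inst) = \Omega(M)$; (5) conclude by letting $M \to \infty$.

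The main obstacle I anticipate is step (3)–(4): making the case analysis over commitments genuinely airtight. It is tempting to wave hands that ``the committing policy cannot adapt,'' but the subtlety is that a committing policy fixes only the \emph{local} probing rule within each MDP, while the global interleaving of probe/select decisions remains adaptive; so I must check that for \emph{every} local commitment in $\mc_2$ (probe-then-decide, skip-to-cheap-terminal, skip-to-expensive-terminal, etc.), the globally-optimal interleaving still incurs $\Omega(M)$ cost. Getting the minimization version of the amortization bound right (the inequality direction flips, and ``promise of payment'' needs its minimization analogue) is where I would be most careful; if that turns out to be delicate, the cleanest fallback is to avoid surrogate values entirely and just compute the finitely many committing-policy costs by hand, which is feasible precisely because the instance is so small.
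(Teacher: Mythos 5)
There is a genuine gap, and it is exactly at the point you flagged as your ``main obstacle'': your construction does not survive the check in your own step (3). In your instance all of the useful adaptivity of the optimal policy is in the \emph{global} selection decision (``probe $\mc_2$; if the cheap outcome appears, select $\mc_2$, otherwise select $\mc_1$''), not in any \emph{local} action choice inside an MDP. But a committing policy only fixes the per-state action distributions $\com_i(s_i)$; the interleaving and the final select/reject decisions remain fully adaptive. So the commitment ``in $\mc_2$, take the probe action'' supports a committing policy that probes $\mc_2$ for cost $\epsilon$, selects $\mc_2$ on the cheap outcome, and falls back to the deterministic value-$1$ terminal of $\mc_1$ on the expensive outcome --- exactly your proposed optimal policy, with expected cost $\approx \epsilon + 1/M$. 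Hence $\comgap$ of your instance is essentially $1$, not $\Omega(M)$, and no calibration of $\epsilon$, the skip cost, or the two outcomes of $\mc_2$ repairs this: since your $\mc_1$ is deterministic, there is no information anywhere in the instance on which a non-committing policy could condition a \emph{local} action choice that a committing policy is forbidden from conditioning on.

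The paper's construction is built around precisely this requirement. There, $\mc_1$ is random (value $\infty$ with probability $1/(N+1)$, value $1$ otherwise), $\mc_2$ has two zero-cost actions at its root (a safe action $a$ giving value $1$ deterministically, and a gamble $b$ giving a tiny value $1/(2N-1)$ most of the time but a huge value $2N^2-1$ with probability $1/(2N)$), and all action costs are zero. The optimal policy first advances $\mc_1$ and then chooses \emph{which action to take inside $\mc_2$} based on $\mc_1$'s realization (gamble with $b$ only when the value-$1$ fallback from $\mc_1$ is available), getting expected cost $2/(N+1)$; any commitment, including randomized mixtures of $a$ and $b$, is computed to cost exactly $1$, so the gap is $(N+1)/2 \to \infty$. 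The essential missing idea in your proposal is that the unbounded gap must come from a cross-MDP contingency of a local action --- the decision that commitment removes --- rather than from adaptive selection, which commitment leaves untouched.
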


\begin{proof} Fix any parameter $N>1$. We consider a min-CICS instance $\inst=(\feas,\mset)$ comprised by two MDPs $\mset=(\mc_1,\mc_2)$ and a single-selection constraint $\feas=\{\{1\},\{2\},\{1,2\}\}$. The two MDPs are depicted in~\Cref{fig:min-cics-example}. The first one is a Markov chain with a unique action of cost $0$ that transitions to terminal states of value $\infty$ and $1$ with probabilities $p$ and $1-p$ respectively, where $p=1/(N+1)$. The second MDP has two available $0$-cost actions $\{a,b\}$ at its root state. The first deterministically leads to a terminal state of value $1$ and the second leads to terminal states of value $2N^2-1$ and $1/(2N-1)$ with probabilities $q$ and $1-q$ respectively, where $q=1/(2N)$.

\begin{figure}[ht!]
    \centering
    \resizebox{0.9\textwidth}{!}{%
    \begin{tikzpicture}[
        node distance=1.1.5cm,
        circ/.style={circle, draw, minimum size=8mm},
        dot/.style = {circle, fill, minimum size=#1,
                  inner sep=0pt, outer sep=0pt},
        arrow/.style={-Latex, thick},
        dotline/.style={dotted, thick}
    ]
    
    \begin{scope}[xshift=-4cm] 
        \node[circ, label={left:$\mathcal{M}_1$}] (M1_top) {};
        \node[dot, below=1cm of M1_top] (M1_mid) {};
        \node[circ, below left=1.5cm and 1.5cm of M1_mid, label={below:$v=1$}] (M1_bottom_left) {};
        \node[circ, below right=1.5cm and 1.5cm of M1_mid, label={below:$v=\infty$}] (M1_bottom_right) {};
    
        \draw[arrow] (M1_top) -- (M1_mid) {}; 
        \draw (M1_mid) node[left=1mm,font=\footnotesize] {$1-p$};
        \draw (M1_mid) node[right=1mm,font=\footnotesize] {$p = \frac{1}{N+1}$};
    
        \draw[dotline] (M1_mid) -- (M1_bottom_left);
        \draw[dotline] (M1_mid) -- (M1_bottom_right);
    \end{scope}

    \begin{scope}[xshift=4cm] 
        \node[circ, label={left:$\mathcal{M}_2$}] (M2_top) {};
        \node[circ, below left=1cm and 1cm of M2_top,label={below:$v=1$}] (M2_mid_left) {};
        \node[dot, below right=1cm and 1cm of M2_top] (M2_mid_right) {}; 
        \node[circ, below left=1.5cm and 1.5cm of M2_mid_right,label={below:$v = (2N-1)^{-1}$}] (M2_bottom_left) {}; 
        \node[circ, below right=1.5cm and 1.5cm of M2_mid_right,label={below:$v = 2N^2-1$}] (M2_bottom_right) {}; 

        \draw[arrow] (M2_top) -- (M2_mid_left) node[midway, above left,font=\footnotesize] {$a$};
        \draw[arrow] (M2_top) -- (M2_mid_right) node[midway, above right,font=\footnotesize] {$b$};
    
        \draw[dotline] (M2_mid_right) -- ($(M2_mid_right)!0.5!(M2_bottom_left)$) coordinate (M2_mid_left_branch);
        \draw (M2_mid_right) node[left=1mm,font=\footnotesize] {$1-q$};
        \draw[dotline] (M2_mid_left_branch) -- (M2_bottom_left);
    
        \draw[dotline] (M2_mid_right) -- ($(M2_mid_right)!0.5!(M2_bottom_right)$) coordinate (M2_mid_right_branch);
        \draw (M2_mid_right) node[right=1mm,font=\footnotesize] {$q = \frac{1}{2N}$};
        \draw[dotline] (M2_mid_right_branch) -- (M2_bottom_right);
    \end{scope}
    
    \end{tikzpicture}
    }%
    \caption{A min-CICS instance for which the commitment gap approaches zero as $N\rightarrow\infty$. Black arrows denote costly actions (although all actions have cost $0$ in this example) and dotted lines denote random transitions.}
    \label{fig:min-cics-example}
\end{figure}

We will first compute the optimal policy for $\inst$. Without loss (since all action costs are $0$), this policy begins by advancing $\mc_1$; if the realized terminal has value $1$ then it will prefer action $b$ in $\mc_2$ to try and realize an even lower value; and if the realized terminal in $\mc_1$ has value $\infty$ then it will prefer action $a$ to secure a value of $1$. Therefore, we have that
\[\opt(\inst)= (1-p)\cdot (1-q)\cdot\frac{1}{2N-1} + (1-p)\cdot q\cdot 1 + p\cdot 1 = \frac{2}{N+1}.\]

We now proceed to compute the optimal committing policy for $\inst$. Note that there are only two possible commitments, based on whether action $a$ or $b$ in $\mc_2$ is preferred. If the policy commits to $a$, then the minimum value across realized terminal states is clearly $1$. If the policy commits to $b$, then the expected minimum value across terminal states is
\[(1-q)\cdot\frac{1}{2N-1} + q\cdot (1-p)\cdot 1 + q\cdot p\cdot (2N^2-1)=1\]
and therefore both commitments have an expected cost of $1$; the same will be true for any randomized commitment that takes action $a$ with some probability $\lambda$ and action $b$ with probability $1-\lambda$. Therefore, we see that the ratio between the cost of any committing policy and the cost of the overall optimal policy (i.e. the commitment gap) is $(N+1)/2$ which can be made arbitrarily large as $N\rightarrow \infty$.

We also note that by the constructive nature of our reductions, this hard example can be transformed into a min-BCS example where the cost of the optimal semi-online algorithm is arbitrarily larger than the cost of the ex-ante optimal benchmark. In particular, solving for the ex ante optimal policy of instance $\inst$, one can easily see that it will commit to action $b$ of $\mc_2$ and accept the small value $v=(2N-1)^{-1}$ whenever it gets realized (so with probability $1-1/(2N)$) while also accepting $v=1$ from $\mc_1$ with probability $1/(2N)\leq 1-p$ to satisfy the ex ante constraint. Therefore, the optimal ex ante feasible policy for $\inst$ is described as follows:
\begin{enumerate}
    \item Advance $\mc_1$. If $v=1$ accept it with probability $\frac{N+1}{2N^2}$. If $v=\infty$ reject it.
    \item Advance $\mc_2$ through action $b$. If $v = \frac{1}{2N-1}$ then accept it. If $v = 2N^2-1$ reject it.
\end{enumerate}
The Markov chains that this commitment induces only have zero cost actions, and therefore the underlying distributions of surrogate costs simply correspond to the value of a randomly realized terminal. Therefore, we obtain the instance $(\feas,X_1\times X_2)$ of min-BCS where
\[X_1 = 
\begin{cases}
            1 & \text{w.p. }\; 1-\frac{1}{N+1}\\
           \infty & \text{w.p. }\; \frac{1}{N+1}\\
\end{cases} \quad , \quad 
X_2 = \begin{cases}
            \frac{1}{2N-1} & \text{w.p. }\; 1-\frac{1}{2N}\\
            2N^2-1 & \text{w.p. }\; \frac{1}{2N}\\
\end{cases}  .
\]
Indeed, the cost of the optimal semi-online algorithm (which equals the cost of the ex post optimum $\expectt{\min\{X_1,X_2\}}$) in the above instance is $1$, whereas the cost of the ex ante optimal is $N^{-1}$, showing that the correlation gap is unbounded.
\end{proof}

\end{document}